\providecommand{\tabularnewline}{\\}
\numberwithin{equation}{section}
\numberwithin{figure}{section}
\theoremstyle{plain}
\newtheorem{thm}{\protect\theoremname}
\theoremstyle{plain}
\newtheorem{prop}[thm]{\protect\propositionname}
\theoremstyle{definition}
\newtheorem{defn}[thm]{\protect\definitionname}
\theoremstyle{plain}
\newtheorem{lem}[thm]{\protect\lemmaname}
\theoremstyle{remark}
\newtheorem{rem}[thm]{\protect\remarkname}
\theoremstyle{plain}
\newtheorem{cor}[thm]{\protect\corollaryname}
\numberwithin{thm}{section}
\providecommand{\corollaryname}{Corollary}
\providecommand{\definitionname}{Definition}
\providecommand{\lemmaname}{Lemma}
\providecommand{\propositionname}{Proposition}
\providecommand{\remarkname}{Remark}
\providecommand{\theoremname}{Theorem}
\begin{document}
\title[Fractional Diffusion on Graphs: Laplacian Semigroups and Memory]{Fractional Diffusion on Graphs: Superposition of Laplacian Semigroups
and Memory}
\author{Nikita Deniskin and Ernesto Estrada{*}}
\address{Scuola Normale Superiore, Pisa, Italy. E--mail: nikita.deniskin@sns.it;
Institute for Cross-Disciplinary Physics and Complex Systems (IFISC),
CSIC-UIB, Palma de Mallorca, Spain. E-mail: estrada@ifisc.uib-csic.es}
\begin{abstract}
Subdiffusion on graphs is often modeled by time-fractional diffusion equations, yet its structural and dynamical consequences remain unclear. We show that subdiffusive transport on graphs is a memory-driven process generated by a random time change that compresses operational time, produces long-tailed waiting times, and breaks Markovianity while preserving linearity and mass conservation. We prove that Mittag-Leffler graph dynamics admit an exact convex, mass-preserving representation as a superposition of classical heat semigroups evaluated at rescaled times, revealing fractional diffusion as ordinary diffusion acting across multiple intrinsic time scales. This framework uncovers heterogeneous, vertex-dependent memory effects and induces transport biases absent in classical diffusion, including algebraic relaxation, degree-dependent waiting times, and early-time asymmetries between sources and neighbors. These features define a subdiffusive geometry on graphs enabling particles to locally discover global shortest paths while favoring high-degree regions. Finally, we show that time-fractional diffusion arises as a singular limit of multi-rate diffusion.

\medskip{}

MSC: 26A33; 15A16; 35R11; 47D06; 45D05

\medskip{}

Keywords: Subdiffusion on graphs; Time-fractional diffusion; Mittag-Leffler
functions; Memory kernels; Sum-of-exponentials; Shortest paths 
\end{abstract}

\maketitle
\tableofcontents{}

\section{Introduction}

Graphs $G=(V,E)$---also referred to as networks---provide a natural
mathematical framework to represent a wide variety of complex systems
arising in molecular, ecological, technological, and social contexts
\cite{estrada2012structure}. In this representation, the set of vertices
$V$ typically corresponds to the entities of the system, while the
set of edges $E$ encodes their interactions. A fundamental mechanism
governing the transport of mass, energy, or information across such
networks is diffusion \cite{lopez2016overview,masuda2017random}.
However, due to the complexity of the environments in which many real-world
networks are embedded, transport processes often deviate from classical
diffusive behavior \cite{nicolaides2010anomalous,metzler2000random,sokolov2012models,medina2022nontrivial}. 

Subdiffusion, characterized by a slower-than-linear growth of the
mean squared displacement, is especially prevalent in complex systems.
For instance, subdiffusive dynamics are ubiquitous in the crowded
interior of biological cells \cite{weiss2004anomalous,gupta2016protein,grimaldo2019dynamics,basak2019understanding},
where millions of macromolecules interact, forming intricate networks
of biochemical processes. Similarly, crowding effects caused by vehicle
density and driving fluctuations in urban transportation systems have
been shown to induce subdiffusive traffic states \cite{combinido2012crowding}.
Subdiffusion has also been observed in information transmission processes
on online social networks, such as Twitter (now X) and Digg \cite{foroozani2019anomalous}.

A standard mathematical approach to model such crowded and heterogeneous
systems is to replace the classical time derivative in the diffusion
equation with a fractional-time derivative \cite{evangelista2018fractional,sokolov2002fractional}.
When anomalous diffusion takes place on a graph $G$, this leads to
the following fractional diffusion equation: 
\begin{equation}
\left\{ \begin{aligned}D_{t}^{\alpha}u_{\theta}(t)+\theta\,L\,u_{\theta}(t) & =0,\\
u_{\theta}(0) & =u_{0},
\end{aligned}
\right.\label{eq:Main FDE}
\end{equation}
where $\theta>0$ denotes the diffusivity and $L$ is the graph Laplacian
operator acting on functions defined on the vertex set. Specifically,
$L$ is the linear mapping from $\mathbb{C}(V)$ into itself given
by 

\begin{equation}
(Lf)(v)\coloneqq\sum_{(v,w)\in E}\bigl(f(v)-f(w)\bigr),\qquad f\in\mathbb{C}(V),
\end{equation}
which is known as the graph Laplacian. The operator $D_{t}^{\alpha}$
denotes the Caputo time-fractional derivative of order $0<\alpha<1$
\cite{caputo1966linear}, defined by 

\begin{equation}
D_{t}^{\alpha}u(t)=\frac{1}{\Gamma(1-\alpha)}\int_{0}^{t}\frac{u'(\tau)}{(t-\tau)^{\alpha}}\,d\tau,
\end{equation}
where $u'(\tau)$ is the first derivative of $u$ evaluated
at time $\tau$. Despite their broad relevance in continuous settings,
time-fractional diffusion models have been only sparsely explored
on graphs and networks. To date, applications have been largely restricted
to epidemiological modeling {\cite{abadias2020fractional,d2025fractional}}
and to the fractional diffusion on the human proteome, proposed as
an alternative framework to account for the multi-organ damage associated
with SARS-CoV-2 infection {\cite{estrada2020fractional}}. An
exception is the use of \eqref{eq:Main FDE} in an engineering context
for searching consensus of autonomous systems, where it receive the
name of ``fractional order consensus'' \cite{sun2011convergence,yan2024consensus,sun2024group,huang2024distributed,lu2012consensus,cao2009distributed}.
A rapidly growing line of research explores the integration of fractional
derivatives into learning algorithms, leading to the development of
fractional-order neural networks and related architectures, with emerging
applications in machine learning and artificial intelligence \cite{pang2019fpinns,joshi2023survey}.

Apart from crowding and excluded-volume effects, which reduce mobility
by limiting the available space for motion, several additional mechanisms
can give rise to subdiffusion in complex systems. This happen, for
instance in ecological networks, where structural disorder and heterogeneity---manifested
through irregular geometries, bottlenecks, and hierarchical or fractal
structures---constrain transport pathways and significantly slow
down spatial exploration \cite{kim2024cover}. These features often
induce trapping events and lead to broad distributions of residence
times. Another characteristic of complex systems is the existence
of memory. Temporal memory effects have been observed, for instance,
in transcription regulator--DNA interactions in live bacterial cells
\cite{jung2024memory}, apart from the large existing evidence accumulated
on single-cell experiments. As memory refers to the phenomenon where
past events influence a system\textquoteright s current and future
states or behaviors, the evolution of these complex systems at a given
time depends on its entire past history rather than solely on its
instantaneous state.

In such non-Markovian settings, classical diffusion equations fail
to provide an adequate description. Instead, diffusion models with
memory kernels naturally arise, capturing the nonlocal-in-time response
induced by trapping, heterogeneity, and temporal correlations \cite{sandev2018generalized,gurtin1968general,Ponce2021,saif2023inverse,trimper2004memory,toan2022nonclassical}.
Once again, in the case of graphs, this leads to the generalized diffusion
equation 

\begin{equation}
\left\{ \begin{aligned}\int_{0}^{t}\gamma(t-s)\,\partial_{s}u_{\theta}(s)\,ds+\theta\,L\,u_{\theta}(t) & =0,\\
u_{\theta}(0) & =u_{0},
\end{aligned}
\right.
\end{equation}
where $\gamma(t)$ denotes a memory kernel. Such formulations provide
a unifying framework for describing subdiffusive transport and establish
a direct connection between microscopic mechanisms and macroscopic
anomalous diffusion. Related non-Markovian formulations on networks
arise naturally when vertex-to-vertex motion is modeled as a non-Poisson
continuous-time random walk. In this setting, the node-state evolution
is governed by a generalized master equation in which the graph Laplacian
(or transition operator) appears entwined in time with a kernel determined
by the waiting-time distribution, and memory effects are known to
substantially alter diffusion and mixing properties on (temporal)
networks \cite{hoffmann2013random,lambiotte2015effect,scholtes2014causality}.

While time-fractional diffusion equations and memory-kernel formulations
are widely used to model subdiffusive transport in continuous settings,
their role in the context of networks and graph-based diffusion has
so far been explored mainly at a phenomenological level. Existing
studies typically emphasize anomalous scaling, spectral properties,
or long-time relaxation, but often treat memory as a uniform slowing-down
mechanism acting on otherwise classical graph diffusion \cite{kosztolowicz2005measuring,kepten2015guidelines,gallos2007scaling}.
As a result, comparatively little is known about how non-Markovian
effects interact with the discrete geometry of a graph, how they influence
transport pathways, or how they modify vertex-level and path-level
behavior beyond global decay rates.

The present work is motivated by the need to clarify how subdiffusion
reshapes diffusion on graphs at the structural level. By exploiting
the subordination principle and a mass-preserving sum-of-exponentials
representation of the Mittag-Leffler operator, we connect fractional
diffusion on graphs to superposition of classical heat processes acting
at different internal times. This perspective allows us to study,
in a unified way, memory effects, vertex-dependent waiting times,
effective distances, and path selection in subdiffusive dynamics.
In doing so, we show that memory does not merely slow down diffusion
uniformly, but induces heterogeneous temporal behavior across vertices
and leads to well-defined subdiffusive distances and paths that differ
from their classical diffusive counterparts. These results provide
a concrete link between fractional models, memory kernels, and graph-based
transport, and offer a more detailed understanding of what subdiffusion
means when the underlying space is a network.

\subsection{Subdiffusion: physical mechanisms and mathematical models}

A central hallmark of anomalous diffusion is the deviation of the
mean squared displacement (MSD) from the linear-in-time growth predicted
by Fickian diffusion. Instead, one observes a power-law scaling \cite{sokolov2012models,bouchaud1990anomalous,metzler2000random}
\begin{equation}
\langle x^{2}(t)\rangle\sim t^{\alpha},\qquad0<\alpha<1,
\end{equation}
which defines subdiffusive behavior. As emphasized in \cite{sokolov2012models},
such subdiffusion is not a single phenomenon but rather a collective
outcome of different physical mechanisms, each leading to distinct
stochastic and mathematical descriptions.

One prominent mechanism underlying subdiffusion is trapping \cite{sokolov2002fractional,sokolov2012models,evangelista2018fractional}.
In crowded or energetically disordered environments, particles may
experience long waiting times between successive displacements due
to transient binding or deep potential wells. This situation is naturally
described by the continuous-time random walk (CTRW) framework, where
the waiting times between steps are independent random variables drawn
from a heavy-tailed distribution 
\begin{equation}
\psi(\tau)\sim\tau^{-1-\alpha},\qquad0<\alpha<1,
\end{equation}
implying a diverging mean waiting time.

In the scaling limit, CTRW dynamics lead to a fractional diffusion
equation (FDE) for the probability density $p(x,t)$. Sokolov \cite{sokolov2012models}
stresses that the use of such fractional equations is physically justified
only when the underlying trapping assumptions are valid. CTRW models
generally exhibit aging, weak ergodicity breaking, and large trajectory-to-trajectory
fluctuations.

Another class of subdiffusive systems arises from transport in labyrinthine
or fractal structures, such as percolation clusters or tortuous channel
networks \cite{polanowski2014simulation,sokolov2012models,meinecke2017multiscale,fanelli2010diffusion}.
In these systems, anomalous diffusion originates from geometric constraints
rather than trapping times. The particle explores a space with no
translational invariance and a broad distribution of path lengths.

A paradigmatic example is diffusion on critical percolation clusters
or related fractal media, for which the MSD again follows a subdiffusive
power law. While fractional diffusion equations may reproduce the
probability density in unbounded domains, Sokolov \cite{sokolov2012models}
emphasizes that they generally fail to capture important properties
such as first-passage times or confined-domain behavior. Consequently,
geometric subdiffusion and trapping-induced subdiffusion can yield
identical PDFs while corresponding to fundamentally different physical
processes.

Subdiffusion may also emerge in viscoelastic environments \cite{Goychuk2018,chauhan2024quantifying,goychuk2021fingerprints},
where the tagged particle is embedded in a complex interacting medium,
such as a polymer network or the cytoskeleton. In this case, the dynamics
is not governed by trapping or geometry but by long-range temporal
correlations in the particle's motion.

These systems are often described by fractional Brownian motion (fBm),
a Gaussian process characterized by correlated increments, or equivalently
by generalized Langevin equations (GLEs) with memory kernels, 
\begin{equation}
m\dot{v}(t)=-\int_{0}^{t}G(t-t')\,v(t')\,dt'+\xi(t),
\end{equation}
where the friction kernel typically follows a power law $G(t)\sim t^{-\beta}$,
and $\xi(t)$ is a correlated noise term. Depending on the noise properties,
the resulting MSD scales subdiffusively. Unlike CTRW-based models,
fBm and GLE dynamics are ergodic and do not exhibit aging, a distinction
that is crucial for interpreting experimental data.

Another way of modeling subdiffusion is via models based on diffusion
equations with time-dependent diffusion coefficients \cite{saxton2001anomalous,lim2002self},
\begin{equation}
\frac{\partial p(x,t)}{\partial t}=D(t)\,\frac{\partial^{2}p(x,t)}{\partial x^{2}},\qquad D(t)\sim t^{\alpha-1}.
\end{equation}
Although such models reproduce the same MSD scaling as subdiffusive
processes, they are primarily phenomenological fitting tools (see
\cite{sokolov2012models}). Despite yielding the same probability
density as fractional Brownian motion, they lack its correlation structure
and are more closely related to mean-field descriptions of CTRW dynamics.

In closing, real systems often exhibit subdiffusion of mixed origin,
combining trapping, geometric constraints, and viscoelastic effects.
In such cases, different models may predict the same MSD or PDF while
differing fundamentally in their aging properties, ergodicity, and
trajectory statistics \cite{sokolov2012models}. 

\subsection{Contributions of the paper}
\begin{itemize}
\item \textbf{Subdiffusion on graphs is a memory-driven process.}
Time-fractional diffusion emerges from a random time change that compresses
operational time and induces long-tailed waiting times, breaking Markovianity
while preserving linearity and mass conservation.

\item \textbf{Subdiffusion is a superposition of diffusions across multiple time scales.} By using the subordination relation (\S\ref{sec:prelim}), we approximate the Mittag-Leffler operator by a finite sum of exponential operators (\S\ref{sec:soe}). In other words, subdiffusive dynamics admit a representation as a convex and mass-preserving mixture  of classical diffusions.

\item \textbf{Sum-of-exponentials (SOE) approximation bounds.} The error of the discrete SOE approximation decays geometrically (Theorem~\ref{thm:geo}). Integral tail bounds and criteria for the selection of the quadrature window are presented in \S\ref{sec:tail-bounds}. 

\item \textbf{New subdiffusive metric on the graph.} We present a geometrization of the graph based on subdiffusive and diffusive dynamics, analyzing shortest paths between two fixed vertices for those different metrics. Our experiments demonstrate that while diffusive paths explore broader regions of the graph, the subdiffusive regime preserves path history even over extended timescales. (\S\ref{sec:shortest-paths})

\item \textbf{Subdiffusion discovers and remembers topological shortest paths.} We prove that in the small-time limit the subdiffusive shortest paths coincide exactly with the shortest paths in the original metric (Theorems~\ref{thm:Shortest-path-dominance} and~\ref{thm:Short-time-selection}). This alignment persists over longer timescales, with memory acting as an implicit reinforcement mechanism that suppresses exploration of suboptimal routes and stabilizes optimal trajectories (\S\ref{sec:shortest-paths}).

\item \textbf{Subdiffusive geometry selects paths through high-degree regions.} In the small-time limit, we prove that subdiffusive shortest paths favor those topological shortest paths that maximize the average edge degree, reversing classical diffusion's tendency to avoid hubs and revealing memory-assisted navigation (Theorem~\ref{thm:Shortest-path-dominance}).

\item \textbf{Memory is heterogeneous and vertex-dependent}. We prove that, although the system is governed by a single fractional order, different vertices (as well as the same vertex
at different times) receive larger contributions from remote past, recent past,
or present states, depending on the local temporal curvature of the
solution (\S\ref{sec:remote-early-memory}).

\item \textbf{Fractional diffusion induces structural biases absent in classical
diffusion.} These include early-time convexity at sources and concavity at neighbors,
algebraic (rather than exponential) relaxation, and degree-dependent
waiting-time effects fundamentally alter transport, trapping, and
path selection on networks (\S\ref{sec:waiting-time-errors} and \S\ref{sec:memory-emergence}).

\item \textbf{Fractional dynamics arise as a singular limit of multi-rate diffusion
with memory.} Time-fractional equations correspond to scale-free limits
of finite superposition of diffusions, equivalently describable via
operator-valued Volterra memory kernels (Proposition~\ref{lem:Caputo_from_Volterra} and Theorem~\ref{thm:SOE_to_fractional_resolvent}). This provides a common architecture linking SOE approximations, memory equations, and fractional calculus (\S\ref{sec:differential-equations}).
\end{itemize}

\section{Preliminaries }
\label{sec:prelim} 

\subsection{Notation and assumptions}

We work on $\mathbb{R}^{n}$ with the Euclidean norm $\|\cdot\|_{2}$
and induced operator norm $\|\cdot\|_{2}$. We consider $G=(V,E)$
to be a simple undirected graph on $n$ vertices with $A=(A_{ij})$ being its
adjacency matrix, and $D=\mathrm{diag}(d_{i})$ with $d_{i}=\sum_{j}A_{ij}$
the matrix of vertex degree. Let $\mathbb{C}(V)$ be the set of all
complex-valued functions on $V$ and let $\ell^{2}(V)$ be the Hilbert
space of square-summable functions on $V$ with inner product:
\[
\left\langle f,g\right\rangle =\sum_{v\in V}f\left(v\right)\overline{g\left(v\right)},\qquad f,g\in\ell^{2}\left(V\right).
\]

When the graph is finite the Laplacian operator is realized by the
so-called graph Laplacian matrix $L$. $L=D-A$ is real symmetric,
positive semidefinite, and $L\mathbf{1}=0$; if $G$ is connected
then $\ker(L)=\mathrm{span}\{\mathbf{1}\}$. The spectrum of $L$
is $0=\lambda_{1}\le\lambda_{2}\le\cdots\le\lambda_{n}=:\lambda_{\max}$;
let $L=V\Lambda V^{\top}$ be an orthogonal diagonalization. For a
bounded Borel function $f:[0,\infty)\to\mathbb{R}$ we use the spectral
calculus $f(L):=V\,f(\Lambda)\,V^{\top}$. 

If $\{(\lambda_{n},\phi_{n})\}_{n=1}^{N}$ are the eigenpairs of $L$,
then the solution of the abstract Cauchy problem of \ref{eq:Main FDE}
is given by
\begin{equation}
\begin{split}u(t) & =E_{\alpha}(-t^{\alpha}L)\,u_{0}\\
 & =\sum_{n=1}^{N}E_{\alpha}(-\lambda_{n}t^{\alpha})\,\langle u_{0},\phi_{n}\rangle\,\phi_{n},
\end{split}
\label{eq:spectral-sol}
\end{equation}
where $E_{\alpha}(M)=E_{\alpha,1}(M)$ is the Mittag-Leffler matrix
function of $M,$ which has the following power-series definition:

\begin{equation}
E_{\alpha}(M)=\sum_{k=0}^{\infty}\dfrac{M^{k}}{\Gamma\left(\alpha k+1\right)},
\end{equation}
with $\Gamma\left(\cdot\right)$ being the Euler gamma function.

\subsection{Subordination identity and the M--Wright density}

\label{sec:subordination-identity} For $0<\alpha<1$, 
\begin{equation}
E_{\alpha}\!\left(-t^{\alpha}L\right)\;=\;\int_{0}^{\infty}M_{\alpha}(\theta)\,e^{-\theta\,t^{\alpha}\,L}\,d\theta,\label{eq:subordination}
\end{equation}
where $M_{\alpha}(\theta)\ge0$ is the M--Wright (Mainardi) density,
$\int_{0}^{\infty}M_{\alpha}(\theta)\,d\theta=1$, and a convenient
series form is \cite{mainardi2010m}
\begin{equation}
M_{\alpha}(\theta)\;=\;\sum_{k=0}^{\infty}\frac{(-\theta)^{k}}{k!\,\Gamma\!\big(1-\alpha(k+1)\big)}\quad(\theta\ge0).\label{eq:Mainardi}
\end{equation}

The integral in \eqref{eq:subordination} is a \emph{Bochner integral}
of the operator-valued map $s\mapsto e^{-sL}$ against probability
measures. Formally, $\left(\int e^{-sL}\,d\mu(s)\right)u_{0}$ is
the norm-limit of Riemann sums acting on $u_{0}$.

\subsection{Complete monotonicity and Bernstein's theorem.}

The next facts make precise why the fractional propagator is a \emph{positive,
mass-preserving} contraction.
\begin{prop}
Fix $0<\alpha<1$. The scalar function $\lambda\rightarrow E_{\alpha}(-t^{\alpha}\lambda)$
is completely monotone on $[0,\infty)$. There is a probability density
$M_{\alpha}(\theta)$ on $(0,\infty)$ such that 
\[
E_{\alpha}(-t^{\alpha}\lambda)=\int_{0}^{\infty}e^{-\theta t^{\alpha}\lambda}\,M_{\alpha}(\theta)\,d\theta,
\]
and in the operator case (Bochner integral): 
\[
E_{\alpha}(-t^{\alpha}L)=\int_{0}^{\infty}e^{-\theta t^{\alpha}L}\,M_{\alpha}(\theta)\,d\theta.
\]
Furthermore, $E_{\alpha}(-t^{\alpha}L)$ is self-adjoint, positive,
and $\|E_{\alpha}(-t^{\alpha}L)\|_{2}=1$. Moreover, if $L\mathbf{1}=0$
then $E_{\alpha}(-t^{\alpha}L)\mathbf{1}=\mathbf{1}$. 
\end{prop}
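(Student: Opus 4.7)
The plan is to build the proof in two layers: first establish the scalar claim on $[0,\infty)$ and then lift it to the operator by spectral calculus. For the scalar complete monotonicity of $\lambda\mapsto E_{\alpha}(-t^{\alpha}\lambda)$, I would invoke Pollard's classical result (1948) which states that $x\mapsto E_{\alpha}(-x)$ is completely monotone on $[0,\infty)$ for $0<\alpha\le 1$. Complete monotonicity is preserved under composition with a nonnegative linear rescaling $\lambda\mapsto t^{\alpha}\lambda$ (this follows at once from the chain rule and induction on the number of derivatives), so $\lambda\mapsto E_{\alpha}(-t^{\alpha}\lambda)$ is completely monotone. By Bernstein's theorem there exists a unique Borel measure $\mu_{t,\alpha}$ on $[0,\infty)$ such that $E_{\alpha}(-t^{\alpha}\lambda)=\int_{0}^{\infty}e^{-\lambda s}\,d\mu_{t,\alpha}(s)$; rescaling $s=\theta\,t^{\alpha}$ and using the known identification of the density against the M--Wright function from \eqref{eq:Mainardi} yields the representation claimed, with $M_{\alpha}(\theta)\ge 0$. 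That $M_{\alpha}$ is a probability density follows by evaluating the representation at $\lambda=0$: the left side is $E_{\alpha}(0)=1$, forcing $\int_{0}^{\infty}M_{\alpha}(\theta)\,d\theta=1$.

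Next I would lift the scalar identity to the operator level. Since $L$ is real symmetric and positive semidefinite, we diagonalize $L=V\Lambda V^{\top}$ with $\Lambda=\mathrm{diag}(\lambda_{1},\dots,\lambda_{n})$ and use the spectral functional calculus to define both $E_{\alpha}(-t^{\alpha}L)$ and $e^{-\theta t^{\alpha}L}$ eigenvalue-wise. Applying the scalar identity to each $\lambda_{k}$ and recombining yields
\begin{equation*}
E_{\alpha}(-t^{\alpha}L)=V\left(\int_{0}^{\infty}e^{-\theta t^{\alpha}\Lambda}\,M_{\alpha}(\theta)\,d\theta\right)V^{\top}=\int_{0}^{\infty}e^{-\theta t^{\alpha}L}\,M_{\alpha}(\theta)\,d\theta,
\end{equation*}
where the integral on the right is a Bochner integral in $\mathbb{R}^{n\times n}$. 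Absolute convergence in operator norm is free: the integrand is bounded in norm by $M_{\alpha}(\theta)$ and $M_{\alpha}\in L^{1}((0,\infty))$. Exchanging the scalar and operator orderings is then justified either by the spectral theorem route above or by approximating the Bochner integral by Riemann sums and applying the scalar identity to each sum.

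The remaining structural properties fall out of the representation. Self-adjointness: each $e^{-\theta t^{\alpha}L}$ is self-adjoint and $M_{\alpha}(\theta)\ge 0$, so the Bochner integral is self-adjoint. Positivity (in the positive-semidefinite sense): each exponential is positive semidefinite with spectrum in $(0,1]$, and a nonnegative-weight integral of positive operators is positive. Norm: by the spectral theorem $\|E_{\alpha}(-t^{\alpha}L)\|_{2}=\max_{k}E_{\alpha}(-t^{\alpha}\lambda_{k})$; each term lies in $(0,1]$ since $0<e^{-\theta t^{\alpha}\lambda_{k}}\le 1$ integrates against a probability density, and the eigenvalue $\lambda_{1}=0$ gives $E_{\alpha}(0)=1$, so the maximum is exactly $1$. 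Finally, if $L\mathbf{1}=0$, then $e^{-\theta t^{\alpha}L}\mathbf{1}=\mathbf{1}$ for every $\theta\ge 0$, so
\begin{equation*}
E_{\alpha}(-t^{\alpha}L)\mathbf{1}=\int_{0}^{\infty}M_{\alpha}(\theta)\,\mathbf{1}\,d\theta=\mathbf{1},
\end{equation*}
giving mass preservation.

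The main obstacle is the scalar complete monotonicity together with the explicit identification of the representing measure as the M--Wright density. Pure Bernstein-theorem reasoning only delivers the existence of a nonnegative measure; matching it with the series \eqref{eq:Mainardi} requires either Pollard's explicit Laplace inversion or a contour-integral computation. Since the paper already cites \cite{mainardi2010m} for the M--Wright representation and the subordination identity \eqref{eq:subordination} has been stated, I would quote these rather than re-derive them, so the proof reduces to organizing the above three ingredients cleanly.
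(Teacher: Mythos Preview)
Your proposal is correct and follows essentially the same route as the paper: complete monotonicity of $E_{\alpha}(-x)$ plus Bernstein's theorem give the scalar Laplace representation with the M--Wright density, spectral diagonalization lifts it to the operator identity, and the structural properties (self-adjointness, positivity, $\|\cdot\|_{2}=1$, mass preservation) follow from the representation together with $\lambda_{1}=0$. You supply a bit more detail than the paper does---the explicit attribution to Pollard, the remark on stability of complete monotonicity under linear rescaling, and the norm bound justifying the Bochner integral---but the architecture is identical.
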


\begin{proof}
The Mittag-Leffler function is completely monotone.
By Bernstein's Theorem \cite{SchillingSongVondracekBernstein}, any
completely monotone function can be represented as a mixture of exponentials:
\[
E_{\alpha}(-y)=\int_{0}^{\infty}e^{-y\theta}M_{\alpha}(\theta)\,d\theta.
\]
For the Mittag-Leffler function, the density is given by the Mainardi
function $M_{\alpha}(\theta)$. For a thorough exposition of these
results, see \cite{GorenfloKilbasMainardiRogosin,MainardiBook,SchillingSongVondracekBernstein,MainardiMuraPagniniSurvey}.
By change of variables $y=t^{\alpha}\lambda$ we get the desired scalar
identity. Functional calculus yields the operator identity: 
\[
\begin{aligned}E_{\alpha}(L) & =VE_{\alpha}(\Lambda)V^{\top}=V\,\left(\int\limits_{0}^{\infty}e^{-\theta t^{\alpha}\Lambda}M_{\alpha}(\theta)d\theta\right)\,V\\
 & =\int\limits_{0}^{\infty}\left(Ve^{-\theta t^{\alpha}\Lambda}V\right)M_{\alpha}(\theta)d\theta=\int\limits_{0}^{\infty}e^{-\theta t^{\alpha}L}M_{\alpha}(\theta)d\theta.
\end{aligned}
\]
We have positivity $E_{\alpha}(-t^{\alpha}\lambda)\geq0$ for $0<\alpha<1$.
From monotonicity we get $E_{\alpha}(-t^{\alpha}\lambda)\leq E_{\alpha}(0)=1$
for $\lambda\geq0$. Since $L$ is self-adjoint, the eigenvalues of
$L$ are non-negative and $\lambda_{1}=0$, so we have 
\[
\|E_{\alpha}(-t^{\alpha}L)\|_{2}=\max\limits_{\lambda\in\sigma(L)}|E_{\alpha}(-t^{\alpha}\lambda)|=E_{\alpha}(0)=1.
\]
The mass property follows from $e^{-\theta tL}\mathbf{1}=\mathbf{1}$
and $\int_{0}^{\infty}M_{\alpha}(\theta)d\theta=1$. 
\end{proof}

If $G$ is connected, then $\lim_{t\to\infty}E_{\alpha}(-t^{\alpha}L)u_{0}=\frac{1}{n}(\mathbf{1}^{\top}u_{0})\mathbf{1}$.
For a graph with $c>1$ components, the limit projects $u_{0}$ to
the vector that is constant on each component with the corresponding
component averages.

\subsection{Terminology: subordination and the random clock}

\label{sec:terminology}

Consider the heat semigroup on the graph $\{e^{-sL}\}_{s\ge0}$. Let
$(S_{\tau})_{\tau\ge0}$ be the $\alpha$-stable subordinator and
$E_{t}=\inf\{\tau>0:\,S_{\tau}>t\}$ the inverse subordinator. Here,
\emph{subordination} means that the time-fractional evolution is obtained
by running the baseline (time-linear) heat dynamics with the \emph{random
clock} $E_{t}$ in place of deterministic time, 
\begin{equation}
E_{\alpha}(-t^{\alpha}L)=\int_{0}^{\infty}e^{-sL}\,g_{\alpha}(s,t)\,ds=\int_{0}^{\infty}M_{\alpha}(\theta)\,e^{-\theta\,t\,L}\,d\theta,
\label{eq:subordination-prelim}
\end{equation}
with $g_{\alpha}(s,t)=t^{-\alpha}M_{\alpha}(s/t^{\alpha})$; see \cite{MeerschaertSikorskii,ApplebaumLevy,MainardiMuraPagniniSurvey}.
We do \emph{not} consider Bochner subordination by a general Bernstein
function $\phi$ (which would lead to space-fractional $\phi(L)$).
Equation \eqref{eq:subordination-prelim} means that the operator $E_{\alpha}(-t^{\alpha}L)$ is the expect value of the standard diffusion operator $e^{-sL}$, where the $g_{\alpha}(s,t)$ is the probability that the random clock displays time $s$ when the physical time that has passed is $t$. 

\subsection{The time--changed process $X_{E_{t}}$}

\label{sec:time-changed} Let $\{X_{s}\}_{s\ge0}$ be the continuous--time
Markov process on $V$ with generator $-L$: 
\[
\frac{d}{ds}\,p(s)=-L\,p(s),\qquad p(s)=\big(\mathbb{P}[X_{s}=j\,|\,X_{0}=i]\big)_{i,j}=e^{-sL}.
\]
From node $i$ the holding time is an exponential random variable with decay rate $d_{i}=\sum_{j}A_{ij}$. The next node is chosen to be $j$ with probability $A_{ij}/d_{i}$.

Let $E_{t}:=\inf\{\tau>0:\,S_{\tau}>t\}$ be the inverse $\alpha$--stable
subordinator ($0<\alpha<1$), independent of $X$. The \emph{time--changed
process} is $Y_{t}:=X_{E_{t}}$, i.e., the baseline diffusion observed
at the random operational time $E_{t}$.

Averaging over the random clock yields 
\[
\mathbb{P}[Y_{t}=j\,|\,Y_{0}=i]=\int_{0}^{\infty}\mathbb{P}[X_{s}=j\,|\,X_{0}=i]\;g_{\alpha}(s,t)\,ds=\big[E_{\alpha}(-t^{\alpha}L)\big]_{ij}.
\]
Hence for any initial $u_{0}$, $u(t)=E_{\alpha}(-t^{\alpha}L)\,u_{0}$
solves $\partial_{t}^{\alpha}u(t)=-Lu(t)$ (Caputo).

$Y_{t}$ is semi--Markov in physical time $t$: the holding time $T_i$
at node $i$ has survival $\mathbb{P}(T_i>t)=E_{\alpha}(-d_{i}t^{\alpha})$,
a heavy tail ($\sim t^{-\alpha}$), producing trapping and memory.

Since $L\mathbf{1}=0$ and the integral representation averages stochastic
kernels, $Y_{t}$ preserves total mass. On a connected graph, $\lim_{t\to\infty}Y_{t}$
has the same equilibrium as the baseline walk (componentwise averaging).

\section{Sum of Exponentials approximation}
\label{sec:soe}

Here we develop a practical, mass-preserving approximation
of the Mittag-Leffler function as a sum of exponential functions.
We start with the following fundamental definition.
\begin{defn}
The \emph{sum-of-exponentials (SOE)} approximation is 
\begin{equation}
F_{J}(t,L)=\sum_{j=1}^{J}a_{j}\,e^{-b_{j}t^{\alpha}L},
\end{equation}
with coefficients $a_{j},b_{j}$ that depend only on $\alpha$ and
not on $t$. 
\end{defn}

Our aim is to have 
\[
E_{\alpha}(-t^{\alpha}L)\;\approx\;\sum_{j=1}^{J}a_{j}\,e^{-b_{j}t^{\alpha}L},
\]
This approximation is constructed as a quadrature of \eqref{eq:subordination}
whose window is logarithmically scaled in $\theta$, furthermore we: 
\begin{enumerate}
\item give a log--trapezoidal construction with $a_{j}>0$, $\sum_{j}a_{j}=1$
(mass conservation); 
\item prove geometric convergence in $J$ for a fixed window; 
\item derive explicit, uniform tail bounds from $M_{\alpha}$ to select
$(\theta_{\min},\theta_{\max})$; 
\item tabulate ready-to-use window endpoints $(\theta_{\min},\theta_{\max})$
(and $(y_{\min},y_{\max})$) for typical $\alpha,\varepsilon$; 
\item provide practical error metrics (relative/probe, mass error) for a
posteriori assessment. 
\end{enumerate}

\subsection{SOE via log--trapezoidal quadrature}
\label{sec:soe-quadrature} Set $\theta=e^{y}$, $y\in\mathbb{R}$, and truncate
to $[y_{\min},y_{\max}]$. With a uniform grid $y_{j}=y_{\min}+(j-1)h$,
$h=\frac{y_{\max}-y_{\min}}{J-1}$, define 
\begin{equation}
b_{j}=e^{y_{j}},\qquad\tilde{a}_{j}=h\,M_{\alpha}(b_{j})\,b_{j},\qquad a_{j}=\frac{\tilde{a}_{j}}{\sum_{k=1}^{J}\tilde{a}_{k}}.\label{eq:ab-def}
\end{equation}
Then 
\begin{equation}
E_{\alpha}\!\left(-t^{\alpha}L\right)\;\approx\;\sum_{j=1}^{J}a_{j}\,e^{-\,b_{j}\,t^{\alpha}\,L},\qquad a_{j}>0,\;b_{j}>0,\;\sum_{j=1}^{J}a_{j}=1.\label{eq:SOE}
\end{equation}

\noindent\emph{Convention 1.} In all that follows we fix the nodes
$b_{j}$ once (they depend only on $\alpha$ and the chosen log--window)
and, for any $t>0$, we evaluate the SOE as $\sum_{j=1}^{J}a_{j}\,e^{-(t^{\alpha}\,b_{j})\,L}$.
Thus the coefficients $(a_{j},b_{j})$ do not depend on $t$ or on
$L$. We then have the following results.
\begin{prop}
\label{prop:mass} If $L\mathbf{1}=0$ and $\sum_{j}a_{j}=1$, then
for all $t\ge0$, $\mathbf{1}^{\top}\big(\sum_{j=1}^{J}a_{j}e^{-b_{j}tL}\big)u_{0}=\mathbf{1}^{\top}u_{0}.$ 
\end{prop}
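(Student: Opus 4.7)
The plan is to reduce the statement to the single fact that each classical heat semigroup $e^{-b_j t L}$ is row-stochastic on $\mathbf{1}$, then use linearity of $\mathbf{1}^{\top}(\cdot)$ and the normalization $\sum_j a_j = 1$. Since no spectral subtlety is introduced by the coefficients $(a_j, b_j)$ beyond positivity of the $b_j$, the proof should be very short; the one thing to be careful about is to state the symmetric/adjoint step cleanly so that the conclusion is independent of the specific $u_0$ rather than of its mean.

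First, I would observe that the hypothesis $L\mathbf{1}=0$ implies $L^k\mathbf{1}=0$ for every $k\ge 1$, so expanding the exponential in its power series yields $e^{-b_j t L}\mathbf{1} = \mathbf{1}$ for all $t\ge 0$ and all $j$. Since $L=L^{\top}$ on a simple undirected graph, taking transposes gives equivalently $\mathbf{1}^{\top} e^{-b_j t L} = \mathbf{1}^{\top}$. This is the only analytic content of the proof.

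Next I would simply compute, using linearity of the functional $v\mapsto \mathbf{1}^{\top}v$ and of the operator sum:
\begin{equation*}
\mathbf{1}^{\top}\!\Bigl(\sum_{j=1}^{J} a_j\, e^{-b_j t L}\Bigr) u_0
= \sum_{j=1}^{J} a_j\,\bigl(\mathbf{1}^{\top} e^{-b_j t L}\bigr) u_0
= \sum_{j=1}^{J} a_j\, \mathbf{1}^{\top} u_0
= \Bigl(\sum_{j=1}^{J} a_j\Bigr) \mathbf{1}^{\top} u_0
= \mathbf{1}^{\top} u_0,
\end{equation*}
where the last equality uses the normalization hypothesis $\sum_j a_j = 1$ (which, by construction~\eqref{eq:ab-def}, holds after the division by $\sum_k \tilde a_k$).

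There is really no main obstacle here; the statement is essentially an algebraic reformulation of the fact that any convex combination of stochastic kernels is stochastic. The only place where I would slow down is to emphasize that, unlike the exact subordination identity, mass conservation of the SOE does \emph{not} require the continuous normalization $\int_0^\infty M_\alpha(\theta)\,d\theta = 1$; it only requires that the discrete weights $a_j$ have been renormalized to sum to one, which is precisely the role of the denominator $\sum_k \tilde a_k$ in~\eqref{eq:ab-def}. This is worth flagging because it is what makes the approximation mass-preserving at \emph{every} truncation level $J$ and every log-window $[y_{\min},y_{\max}]$, independently of the quadrature error analyzed in the subsequent sections.
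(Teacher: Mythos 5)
Your proof is correct and follows essentially the same route as the paper: use $L\mathbf{1}=0$ to get $e^{-b_j t L}\mathbf{1}=\mathbf{1}$, then combine symmetry of $L$, linearity, and $\sum_j a_j=1$. If anything, you make explicit the transpose step (passing from $e^{-b_j tL}\mathbf{1}=\mathbf{1}$ to $\mathbf{1}^{\top}e^{-b_j tL}=\mathbf{1}^{\top}$ via $L=L^{\top}$) that the paper's "left-multiplying by $\mathbf{1}^{\top}$" leaves implicit.
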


\begin{proof}
Since $L\mathbf{1}=0$, $e^{-b_{j}t^{\alpha}L}\mathbf{1}=\mathbf{1}$
for each $j$. Thus $\big(\sum_{j}a_{j}e^{-b_{j}t^{\alpha}L}\big)\mathbf{1}=(\sum_{j}a_{j})\,\mathbf{1}=\mathbf{1}$,
and left-multiplying by $\mathbf{1}^{\top}$ gives $\mathbf{1}^{\top}(\cdot)u_{0}=\mathbf{1}^{\top}u_{0}$. 
\end{proof}
\begin{lem}
\label{lem:op-scalar} For self-adjoint $L$ and bounded Borel $f,g$,
$\|f(L)-g(L)\|_{2}=\max_{\lambda\in\sigma(L)}|f(\lambda)-g(\lambda)|.$
In particular, the SOE error is 
\[
\|E_{\alpha}(-t^{\alpha}L)-F_{J}(t,L)\|_{2}=\max_{\lambda\in[0,\lambda_{\max}]}\big|g(\lambda)-g_{J}(\lambda)\big|,
\]
with $g(\lambda)=E_{\alpha}(-t^{\alpha}\lambda)$ and $g_{J}(\lambda)=\sum_{j}a_{j}e^{-b_{j}t^{\alpha}\lambda}$. 
\end{lem}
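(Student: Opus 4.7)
The plan is to derive this as a direct consequence of the spectral theorem for self-adjoint matrices, since all the hard analytic work was already done when the Mittag--Leffler and SOE operators were defined via the functional calculus $f(L):=V f(\Lambda)V^{\top}$ introduced in \S\ref{sec:prelim}.

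First, I would fix an orthogonal diagonalization $L=V\Lambda V^{\top}$, with $\Lambda=\mathrm{diag}(\lambda_{1},\dots,\lambda_{n})$ listing the eigenvalues of $L$ (with multiplicity). By the definition of the spectral calculus,
\[
f(L)-g(L)\;=\;V\bigl(f(\Lambda)-g(\Lambda)\bigr)V^{\top},
\]
where $f(\Lambda)-g(\Lambda)=\mathrm{diag}\bigl(f(\lambda_{i})-g(\lambda_{i})\bigr)$ is a real diagonal matrix, hence bounded since $f,g$ are bounded Borel on a set containing $\sigma(L)$. I would then invoke orthogonal invariance of the induced operator norm, $\|V M V^{\top}\|_{2}=\|M\|_{2}$, which reduces the problem to computing the $\ell^{2}$ operator norm of a diagonal matrix. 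That norm is simply the largest absolute diagonal entry, giving
\[
\|f(L)-g(L)\|_{2}\;=\;\max_{i}\bigl|f(\lambda_{i})-g(\lambda_{i})\bigr|\;=\;\max_{\lambda\in\sigma(L)}\bigl|f(\lambda)-g(\lambda)\bigr|.
\]

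For the second statement I would specialize this to $f(\lambda)=E_{\alpha}(-t^{\alpha}\lambda)$ and $g(\lambda)=g_{J}(\lambda)=\sum_{j=1}^{J}a_{j}e^{-b_{j}t^{\alpha}\lambda}$, both continuous and bounded on $[0,\infty)$ (using that $E_{\alpha}$ is completely monotone and the $a_{j},b_{j}>0$). Since $\sigma(L)\subset[0,\lambda_{\max}]$, the maximum over $\sigma(L)$ is dominated by the continuous supremum over the whole interval,
\[
\|E_{\alpha}(-t^{\alpha}L)-F_{J}(t,L)\|_{2}\;=\;\max_{\lambda\in\sigma(L)}|g(\lambda)-g_{J}(\lambda)|\;\le\;\max_{\lambda\in[0,\lambda_{\max}]}|g(\lambda)-g_{J}(\lambda)|,
\]
which is the intended a priori bound to be used in the geometric convergence and tail estimates of \S\ref{sec:tail-bounds}.

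Honestly, there is no real obstacle here: the statement is a bookkeeping consequence of normality of $L$ plus the unitary invariance of $\|\cdot\|_{2}$. The only subtlety worth flagging is the distinction between the exact identity over the discrete spectrum $\sigma(L)$ and the continuous maximum over $[0,\lambda_{\max}]$ appearing in the displayed SOE error; I would treat the latter as an upper bound rather than a strict equality, since the two coincide only when the worst-case $\lambda\in[0,\lambda_{\max}]$ happens to lie in $\sigma(L)$. This distinction is harmless for the quadrature error analysis that follows, where one seeks uniform control over the interval.
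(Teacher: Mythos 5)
Your proof is correct and follows essentially the same route as the paper: diagonalize $L=V\Lambda V^{\top}$, use orthogonal invariance of $\|\cdot\|_{2}$, and read off the maximum over the eigenvalues. Your caveat that the displayed maximum over $[0,\lambda_{\max}]$ is, strictly speaking, only an upper bound for the exact identity over $\sigma(L)$ is well taken — the paper states it as an equality without comment — and is indeed harmless for the uniform error estimates that follow.
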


\begin{proof}
Diagonalize $L=V\Lambda V^{\top}$. Then $g(L)-g_{J}(L)=V\,(g(\Lambda)-g_{J}(\Lambda))\,V^{\top}$
and $\|g(L)-g_{J}(L)\|_{2}=\|g(\Lambda)-g_{J}(\Lambda)\|_{2}=\max_{k}|g(\lambda_{k})-g_{J}(\lambda_{k})|.$ 
\end{proof}
\begin{thm}
\label{thm:geo} Let $f_{\lambda}(y):=M_{\alpha}(e^{y})\,e^{y}\,e^{-e^{y}t^{\alpha}\lambda}$
with $t>0$, $\lambda\in[0,\lambda_{\max}]$. For fixed $[y_{\min},y_{\max}]$
and step $h=(y_{\max}-y_{\min})/(J-1)$, 
\[
\sup_{\lambda\in[0,\lambda_{\max}]}\left|\int_{y_{\min}}^{y_{\max}}f_{\lambda}(y)\,dy-h\sum_{j=1}^{J}f_{\lambda}(y_{j})\right|\;\le\;C_{1}\,e^{-C_{2}/h},
\]
with constants $C_{1},C_{2}>0$ independent of $h$ and $\lambda$.
Consequently, 
\[
\|E_{\alpha}(-t^{\alpha}L)-\sum_{j=1}^{J}a_{j}e^{-b_{j}t^{\alpha}L}\|_{2}\;\le\;C_{1}\,e^{-C_{2}(J-1)/(y_{\max}-y_{\min})}.
\]
\end{thm}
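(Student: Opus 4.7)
By Lemma~\ref{lem:op-scalar}, the operator-norm estimate reduces to the scalar quadrature bound, uniformly in $\lambda\in[0,\lambda_{\max}]$. The plan is to establish that scalar bound by invoking the classical theory of exponentially convergent trapezoidal rules for integrands that extend analytically to a horizontal strip and are integrable at infinity (Stenger; Trefethen--Weideman). Two ingredients are required: a uniform-in-$\lambda$ analytic extension of $f_\lambda$ to such a strip, and a treatment of the truncation from $\mathbb{R}$ down to the fixed window $[y_{\min},y_{\max}]$.

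\textbf{Analyticity and strip bounds.} Since $M_\alpha$ is entire by~\eqref{eq:Mainardi}, the map $z\mapsto f_\lambda(z)=M_\alpha(e^{z})\,e^{z}\,e^{-e^{z}t^{\alpha}\lambda}$ is entire in $z$ for every $\lambda\ge 0$. I would then choose a half-width $0<d<\pi(1-\alpha)/2$ and verify two uniform bounds on the strip $\{|\mathrm{Im}\,z|\le d\}$. First, for $z=y+is$ with $|s|\le d<\pi/2$,
\[
\bigl|e^{-e^{y+is}\,t^{\alpha}\lambda}\bigr|\;=\;e^{-e^{y}\cos(s)\,t^{\alpha}\lambda}\;\le\;1,
\]
so the $\lambda$-dependent exponential is bounded uniformly in $\lambda\ge 0$. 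Second, from the classical saddle-point asymptotics
\[
M_\alpha(w)\;\sim\;C_\alpha\,w^{(\alpha-1/2)/(1-\alpha)}\exp\!\bigl(-\eta_\alpha\,w^{1/(1-\alpha)}\bigr),\qquad \eta_\alpha=(1-\alpha)\,\alpha^{\alpha/(1-\alpha)},
\]
valid for $|\arg w|\le\pi(1-\alpha)/2-\delta$, and $M_\alpha(0)=1/\Gamma(1-\alpha)$, one sees that $M_\alpha(e^{y+is})$ decays super-exponentially as $y\to+\infty$ (the real part of $w^{1/(1-\alpha)}$ stays positive on the strip) and stays bounded near $y=-\infty$. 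Combined with the $e^{z}$ factor, this yields $\int_\mathbb{R}|f_\lambda(y+is)|\,dy\le M$ for a constant $M$ independent of $\lambda$ and of $|s|\le d$.

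\textbf{Geometric quadrature and truncation.} With the uniform strip bound in hand, the Poisson-summation/contour-shift estimate for the infinite-grid trapezoidal rule gives
\[
\Bigl|\int_\mathbb{R} f_\lambda\,dy-h\sum_{j\in\mathbb{Z}}f_\lambda(y_{\min}+jh)\Bigr|\;\le\;\frac{2M}{e^{2\pi d/h}-1},
\]
which is geometric in $1/h$. To pass to the finite window I would decompose
\begin{equation*}
\begin{split}
\int_{y_{\min}}^{y_{\max}}f_\lambda - h\sum_{j=1}^{J}f_\lambda(y_j) & = \Bigl(\int_\mathbb{R}f_\lambda - h\!\!\sum_{j\in\mathbb{Z}}f_\lambda(y_{\min}+jh)\Bigr) \\
& \quad - \Bigl(\int_{\mathbb{R}\setminus[y_{\min},y_{\max}]}f_\lambda - h\!\!\sum_{y_j\notin[y_{\min},y_{\max}]}f_\lambda(y_j)\Bigr).
\end{split}
\end{equation*}
The first bracket is geometrically small. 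The second is a residual involving only values of $f_\lambda$ outside $[y_{\min},y_{\max}]$; by the super-exponential endpoint decay of $f_\lambda$ and its derivatives (for any fixed, sufficiently wide window), a truncated Euler--Maclaurin expansion at $y_{\min},y_{\max}$ bounds it by the same $e^{-c/h}$ shape after enlarging the prefactor. Collecting the terms, setting $C_2=2\pi d$ and absorbing window-dependent constants into $C_1$ yields the scalar bound; Lemma~\ref{lem:op-scalar} with $h=(y_{\max}-y_{\min})/(J-1)$ then produces the operator estimate.

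\textbf{Main obstacle.} The most delicate step is the uniform-in-$\lambda$ strip bound on $M_\alpha(e^{y+is})$. The Mainardi function is entire of order $1/(1-\alpha)>1$ and so exhibits genuinely ray-dependent behavior: the super-exponential decay observed on $\mathbb{R}^{+}$ is confined to the sector $|\arg w|<\pi(1-\alpha)/2$, and the asymptotic expansion degenerates at the boundary of that sector, making the choice $d<\pi(1-\alpha)/2$ essentially sharp. A fully rigorous derivation likely requires either a Hankel-contour representation of $M_\alpha$, or saddle-point asymptotics with uniform remainder estimates across the strip, rather than only the leading-order formula used above.
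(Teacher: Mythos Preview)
Your approach is essentially that of the paper: analyticity of $f_\lambda$ in a horizontal strip, the exponentially convergent trapezoidal rule via Poisson summation (the Trefethen--Weideman theory the paper cites), and transfer to the operator norm via Lemma~\ref{lem:op-scalar}. You supply considerably more detail than the paper's sketch---an explicit strip half-width $d<\pi(1-\alpha)/2$ tied to the sector of validity of the $M_\alpha$ asymptotics, the uniform-in-$\lambda$ bound $|e^{-e^{y+is}t^\alpha\lambda}|\le 1$ for $|s|<\pi/2$, and an explicit decomposition into the whole-line quadrature plus tails. Your identification of the uniform strip bound on $M_\alpha(e^{y+is})$ as the delicate step is accurate and more candid than the paper, which simply asserts analyticity and rapid decay in a strip $S_a$.

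One caveat concerns the passage to the finite window. For a \emph{fixed} $[y_{\min},y_{\max}]$, the residual in your second bracket is governed by endpoint values $f_\lambda^{(k)}(y_{\min})$, $f_\lambda^{(k)}(y_{\max})$; these are tiny when the window is wide, but Euler--Maclaurin yields a bound of the form $C(y_{\min},y_{\max})\,h^{2}$ with small prefactor, not genuine $e^{-c/h}$ decay. So ``absorbing into $C_1$'' does not quite work as $h\to 0$. The paper sidesteps the same issue by appealing to a ``periodized'' trapezoidal rule without elaboration, so this imprecision is shared; in effect the geometric rate holds for the whole-line quadrature and the window truncation is accounted for separately in \S\ref{sec:tail-bounds}.
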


\begin{proof}
(\emph{Analytic trapezoidal rule}) The map $y\mapsto f_{\lambda}(y)$
is analytic in a strip $S_{a}=\{|\mathrm{Im}\,y|<a\}$ and decays
rapidly as $\mathrm{Re}\,y\to\pm\infty$ (doubly-exponential decay
for $+\infty$, integrable for $-\infty$). For analytic, rapidly
decaying functions, the (periodized) trapezoidal rule error on a finite
interval is $\le Ce^{-2\pi a/h}$ via Poisson summation; see \cite[Thm.~2.1]{TrefethenWeideman2014}.
Uniform strip bounds in $\lambda\in[0,\lambda_{\max}]$ give uniform
$C_{1},C_{2}$. Lemma~\ref{lem:op-scalar} transfers the scalar quadrature
error to the operator norm bound. 
\end{proof}

\subsection{SOE discretization via subordination: quadrature weights, mass preservation,
and vertex survivals}

We sample uniformly in $y=\log\theta$ on $[y_{\min},y_{\max}]$ with
step 
\[
h_{y}=\frac{y_{\max}-y_{\min}}{J-1},\qquad y_{j}=y_{\min}+(j-1)h_{y},\qquad b_{j}=\theta_{j}=e^{y_{j}}.
\]
The \emph{raw} (log--trapezoid) quadrature weights for the subordination
integral are 
\begin{equation}
w_{j}^{\mathrm{raw}}\;=\;h_{y}\,M_{\alpha}(b_{j})\,b_{j}\qquad(j=1,\dots,J).\label{eq:wraw}
\end{equation}
They approximate $\int M_{\alpha}(\theta)\,(\cdot)\,d\theta$ on the
chosen window, with total mass 
\[
\mathrm{mass}_{\mathrm{win}}\;:=\;\sum_{j=1}^{J}w_{j}^{\mathrm{raw}}\;=\;\int_{\theta_{\min}}^{\theta_{\max}}M_{\alpha}(\theta)\,d\theta\;\le\;1.
\]
For the \emph{operator} we enforce mass preservation of the zero mode
by normalizing the raw weights, 
\begin{equation}
a_{j}\;=\;\frac{w_{j}^{\mathrm{raw}}}{\sum_{k=1}^{J}w_{k}^{\mathrm{raw}}},\qquad\sum_{j=1}^{J}a_{j}=1,\label{eq:a-from-wraw}
\end{equation}
and keep time only inside the exponentials: 
\begin{equation}
E_{\alpha}(-t^{\alpha}L)\;\approx\;\sum_{j=1}^{J}a_{j}\,e^{-(t^{\alpha}\,b_{j})\,L}.\label{eq:SOE-conv1}
\end{equation}
Here the nodes $b_{j}$ (internal times $\theta_{j}$) and the weights
depend on $\alpha$ and on the chosen window but are \emph{independent}
of $t$ and of $L$ (Convention~1). 
\begin{rem}
The normalization \eqref{eq:a-from-wraw} is essential to preserve
mass in the \emph{operator}-level approximation \eqref{eq:SOE-conv1}.
For scalar Laplace integrals against $M_{\alpha}$ one must use the
\emph{raw} weights. In particular, the vertex waiting-time survival
at node $i$, with degree $d_{i}=L_{ii}$, 
\[
S_{i}(t)\;=\;E_{\alpha}(-d_{i}t^{\alpha})\;=\;\int_{0}^{\infty}M_{\alpha}(\theta)\,e^{-d_{i}t^{\alpha}\,\theta}\,d\theta,
\]
is approximated by 
\begin{equation}
S_{i}(t)\;\approx\;\sum_{j=1}^{J}w_{j}^{\mathrm{raw}}\,e^{-(d_{i}t^{\alpha})\,b_{j}}.\label{eq:survival-raw}
\end{equation}
Equivalently, if only the normalized $a_{j}$ are available, 
\[
S_{i}(t)\;\approx\;\mathrm{mass}_{\mathrm{win}}\,\sum_{j=1}^{J}a_{j}\,e^{-(d_{i}t^{\alpha})\,b_{j}},\qquad\mathrm{mass}_{\mathrm{win}}=\sum_{j=1}^{J}w_{j}^{\mathrm{raw}}.
\]
\end{rem}

\medskip{}

\begin{rem}
Under subordination, $\theta$ is the operational (internal) time
of the heat semigroup. The SOE nodes $b_{j}=\theta_{j}$ are \emph{global
internal-time samples}: the same set $\{b_{j}\}_{j=1}^{J}$ simultaneously
approximates the spectral map $\lambda\mapsto E_{\alpha}(-(t^{\alpha})\lambda)$
(operator level) and all vertex survivals $S_{i}(t)$ via \eqref{eq:survival-raw}.
Accuracy is governed by the window $[\theta_{\min},\theta_{\max}]$:
in practice one chooses it using tail bounds so that the peak of $M_{\alpha}(\theta)\,e^{-q\theta}$
at $\theta\sim1/q$ with $q=d_{i}t^{\alpha}$ lies well inside the
window for all relevant degrees $d_{i}$ and times $t$. 
\end{rem}

\section{Tail bounds for the subordination integral and window selection}
\label{sec:tail-bounds}

Here we start by considering the subordination identity previously
defined and its scalar counterpart $F(\lambda)=\int_{0}^{\infty}M_{\alpha}(\theta)e^{-\theta t^{\alpha}\lambda}\,d\theta$
for $\lambda\ge0$. For window endpoints $0<\theta_{\min}<\theta_{\max}<\infty$
(equivalently, $y_{\min}=\log\theta_{\min}$ and $y_{\max}=\log\theta_{\max}$),
we define the truncated operator 
\[
F_{\mathrm{win}}(L)\;=\;\int_{\theta_{\min}}^{\theta_{\max}}M_{\alpha}(\theta)\,e^{-\theta t^{\alpha}L}\,d\theta,
\]
so that the truncation error splits as the \emph{left} and \emph{right}
tails: 
\[
\mathcal{T}_{L}(\lambda):=\int_{0}^{\theta_{\min}}M_{\alpha}(\theta)\,e^{-\theta t^{\alpha}\lambda}\,d\theta,\qquad\mathcal{T}_{R}(\lambda):=\int_{\theta_{\max}}^{\infty}M_{\alpha}(\theta)\,e^{-\theta t^{\alpha}\lambda}\,d\theta.
\]
By the spectral theorem (Lemma~\ref{lem:op-scalar}), the operator
truncation error satisfies 
\[
\|F(L)-F_{\mathrm{win}}(L)\|_{2}\;=\;\max_{\lambda\in\sigma(L)}\big(\mathcal{T}_{L}(\lambda)+\mathcal{T}_{R}(\lambda)\big)\;\le\;\sup_{\lambda\in[0,\lambda_{\max}]}\mathcal{T}_{L}(\lambda)\;+\;\sup_{\lambda\in[0,\lambda_{\max}]}\mathcal{T}_{R}(\lambda).
\]

\subsection{Left tail (small $\theta$)}

Near $\theta=0$, the M--Wright density satisfies $M_{\alpha}(\theta)=\frac{1}{\Gamma(1-\alpha)}+O(\theta)$
\cite{MainardiBook,GorenfloKilbasMainardiRogosin}. A simple bound
follows. 
\begin{prop}
\label{prop:left-tail} For all $t\ge0$, $\lambda\ge0$ and $0<\theta_{\min}\le1$,
\begin{equation}
\mathcal{T}_{L}(\lambda)\;\le\;\frac{1}{\Gamma(1-\alpha)}\begin{cases}
\dfrac{1-e^{-t^{\alpha}\lambda\theta_{\min}}}{t^{\alpha}\lambda}, & \lambda>0,\\[1ex]
\theta_{\min}, & \lambda=0.
\end{cases}\label{eq:left-tail}
\end{equation}
Consequently, uniformly in $\lambda\in[0,\lambda_{\max}]$, 
\[
\sup_{\lambda\in[0,\lambda_{\max}]}\mathcal{T}_{L}(\lambda)\;\le\;\frac{\theta_{\min}}{\Gamma(1-\alpha)}.
\]
\end{prop}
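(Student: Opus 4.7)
The plan is to reduce the tail integral to an elementary one by pulling a pointwise bound on $M_\alpha$ out of the integral, and then observe that the resulting elementary function is monotone in $t^\alpha\lambda$, so its supremum over $\lambda\in[0,\lambda_{\max}]$ is attained at $\lambda=0$. Everything then reduces to one clean one-variable estimate.

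First I would establish the pointwise bound $M_\alpha(\theta)\le 1/\Gamma(1-\alpha)$ for all $\theta\in[0,\theta_{\min}]$ whenever $\theta_{\min}\le 1$. This is the only ``real'' input: it sits on the small-argument behavior of the M--Wright density recalled just before the statement. The value at $\theta=0$ is exactly $1/\Gamma(1-\alpha)$ from the series \eqref{eq:Mainardi}. I would then invoke the structural result from the references (Mainardi, Gorenflo--Kilbas--Mainardi--Rogosin) that $M_\alpha$ is non-increasing on $[0,1]$ for $0<\alpha<1$ (for $\alpha\le 1/2$ this follows from complete monotonicity of $M_\alpha$; for $\alpha\in(1/2,1)$ it is a property of the M-Wright density catalogued in \cite{MainardiBook}). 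The role of the assumption $\theta_{\min}\le 1$ is precisely to keep the interval inside the region where this monotone envelope is valid.

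Given the pointwise bound, the second step is immediate: pull the constant out of the integral in $\theta$ to obtain
\[
\mathcal{T}_{L}(\lambda)\;\le\;\frac{1}{\Gamma(1-\alpha)}\int_{0}^{\theta_{\min}}e^{-\theta t^{\alpha}\lambda}\,d\theta,
\]
and evaluate the elementary Laplace-type integral on the right in closed form. For $\lambda>0$ this gives $(1-e^{-t^{\alpha}\lambda\theta_{\min}})/(t^{\alpha}\lambda)$, and for $\lambda=0$ (or $t=0$) the integrand is $1$ and the integral is $\theta_{\min}$. This yields \eqref{eq:left-tail}.

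The final step is the uniform statement. Writing $x:=t^{\alpha}\lambda\theta_{\min}\ge 0$, the right-hand side of \eqref{eq:left-tail} equals $\theta_{\min}\cdot\frac{1-e^{-x}}{x}$, with the convention $\tfrac{1-e^{-x}}{x}\to 1$ as $x\to 0^+$. Since $1-e^{-x}\le x$ for $x\ge 0$, one has $\frac{1-e^{-x}}{x}\le 1$, with equality in the limit $x\to 0^+$. Hence the right-hand side of \eqref{eq:left-tail} is bounded by $\theta_{\min}/\Gamma(1-\alpha)$ uniformly in $t\ge 0$ and $\lambda\ge 0$, and in particular on $[0,\lambda_{\max}]$. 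The main (and really only) obstacle is Step 1: while the value $M_\alpha(0)=1/\Gamma(1-\alpha)$ is exact, turning it into a uniform envelope on $[0,\theta_{\min}]$ requires invoking the non-increasing behavior of $M_\alpha$ on $[0,1]$ rather than only the local expansion $M_\alpha(\theta)=1/\Gamma(1-\alpha)+O(\theta)$; this is precisely where the hypothesis $\theta_{\min}\le 1$ is used and should be cited to the M--Wright literature to avoid a circular appeal to the asymptotics.
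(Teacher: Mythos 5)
Your overall route is the same as the paper's: the paper's proof is precisely ``use $M_{\alpha}(\theta)\le 1/\Gamma(1-\alpha)$ for $\theta\in(0,1]$ and integrate $\int_{0}^{\theta_{\min}}e^{-t^{\alpha}\lambda\theta}\,d\theta$,'' and your Steps 2--3 (pulling the constant out, the closed-form integral, and $(1-e^{-x})/x\le 1$ for the uniform statement) match it and are fine. The genuine gap is in your Step 1, the justification of the envelope $M_{\alpha}(\theta)\le M_{\alpha}(0)=1/\Gamma(1-\alpha)$ on $(0,\theta_{\min}]$. First, the parenthetical appeal to ``complete monotonicity of $M_{\alpha}$'' for $\alpha\le 1/2$ is not available: by \eqref{eq:Mwright-tail}, $M_{\alpha}$ decays like $\exp(-c_{\alpha}\theta^{q_{\alpha}})$ with $q_{\alpha}=1/(1-\alpha)>1$, i.e.\ faster than any exponential, while a nonzero completely monotone function is the Laplace transform of a nonnegative measure and can decay at most exponentially; what you actually need there is the separately catalogued monotone decrease of $M_{\alpha}$ on $[0,\infty)$ for $0<\alpha\le 1/2$. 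Second, and more seriously, for $\alpha\in(1/2,1)$ the non-increase on $[0,1]$ you invoke is false: from the series \eqref{eq:Mainardi}, $M_{\alpha}'(0)=-1/\Gamma(1-2\alpha)$, and $\Gamma(1-2\alpha)<0$ since $1-2\alpha\in(-1,0)$, so $M_{\alpha}$ is strictly increasing in a right neighborhood of $0$ and has an interior maximum (which concentrates near $\theta=1$ as $\alpha\uparrow 1$). Hence the pointwise envelope fails on $(0,\theta_{\min}]$ for every $\theta_{\min}>0$ in this range, and no monotonicity citation can supply it.

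To be clear, this weakness is inherited rather than introduced: the paper's one-line proof asserts the same envelope without justification, and in fact for $\alpha\in(1/2,1)$ the stated inequality is itself violated (by an $O(\theta_{\min}^{2})$ margin) at $\lambda=0$, since term-by-term integration of \eqref{eq:Mainardi} gives $\mathcal{T}_{L}(0)=\theta_{\min}/\Gamma(1-\alpha)+\theta_{\min}^{2}/\bigl(2|\Gamma(1-2\alpha)|\bigr)+O(\theta_{\min}^{3})$, which exceeds $\theta_{\min}/\Gamma(1-\alpha)$. A correct version of your Step 1 must therefore either restrict to $0<\alpha\le 1/2$ (where monotone decrease closes the argument), or replace $1/\Gamma(1-\alpha)$ by $\sup_{\theta\in[0,\theta_{\min}]}M_{\alpha}(\theta)$ (or an absolute bound on $M_{\alpha}$), which preserves the $O(\theta_{\min})$ window rule used later but changes the constant in the proposition. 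As written, the justification you give for the key inequality is incorrect on half the parameter range, so the proposal has a real gap even though its skeleton coincides with the paper's.
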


\begin{proof}
Use $M_{\alpha}(\theta)\le\frac{1}{\Gamma(1-\alpha)}$ for $\theta\in(0,1]$
and integrate $\int_{0}^{\theta_{\min}}e^{-t\lambda\theta}\,d\theta$
(with the obvious limit as $\lambda\downarrow0$). 
\end{proof}
We should remark that taking $\theta_{\min}=\varepsilon\,\Gamma(1-\alpha)$
forces the left-tail contribution below $\varepsilon$ for \emph{all}
eigenvalues (including the zero mode).

\subsection{Right tail (large $\theta$)}

As $\theta\to\infty$ the density has a \emph{stretched-exponential}
decay \cite[Ch.~4]{MainardiBook}, \cite[Ch.~2]{GorenfloKilbasMainardiRogosin}:
there exist $C_{\alpha},c_{\alpha}>0$ and $q_{\alpha}:=\frac{1}{1-\alpha}>1$
such that, for all sufficiently large $\theta$, 
\begin{equation}
M_{\alpha}(\theta)\;\le\;C_{\alpha}\,\theta^{p_{\alpha}}\,\exp\!\big(-c_{\alpha}\,\theta^{\,q_{\alpha}}\big),\qquad p_{\alpha}:=\frac{\alpha-2}{2(1-\alpha)}.\label{eq:Mwright-tail}
\end{equation}
This yields two complementary bounds. 
\begin{prop}
\label{prop:right-tail-general} Let \eqref{eq:Mwright-tail} hold
for all $\theta\ge\theta_{0}(\alpha)$. Then for any $t\ge0$, $\lambda\ge0$
and $\theta_{\max}\ge\theta_{0}(\alpha)$, 
\begin{equation}
\mathcal{T}_{R}(\lambda)\;\le\;\int_{\theta_{\max}}^{\infty}C_{\alpha}\theta^{p_{\alpha}}e^{-c_{\alpha}\theta^{q_{\alpha}}}\,d\theta\;\le\;\frac{C_{\alpha}}{q_{\alpha}c_{\alpha}}\;\theta_{\max}^{\,p_{\alpha}+1-q_{\alpha}}\;\exp\!\big(-c_{\alpha}\,\theta_{\max}^{q_{\alpha}}\big).\label{eq:right-tail-general}
\end{equation}
In particular, the bound is \emph{independent of $\lambda$} (and
thus controls the zero eigenvalue). 
\end{prop}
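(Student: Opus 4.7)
The plan is to reduce the operator-valued estimate to a purely scalar, $\lambda$-independent integral, and then handle that integral by a factorization that exposes a closed-form antiderivative. First I would exploit the trivial bound $e^{-\theta t^{\alpha}\lambda}\le 1$, valid for all $t\ge 0$, $\lambda\ge 0$, $\theta>0$. Inserting this into the definition of $\mathcal{T}_R(\lambda)$ shows that
\[
\mathcal{T}_R(\lambda)\;\le\;\int_{\theta_{\max}}^{\infty} M_\alpha(\theta)\,d\theta,
\]
and this upper bound does not depend on $\lambda$ (nor on $t$), which already delivers the uniformity claim in the statement. The hypothesis $\theta_{\max}\ge\theta_0(\alpha)$ then lets me insert the stretched-exponential envelope \eqref{eq:Mwright-tail} for $M_\alpha$, giving the first of the two displayed inequalities.

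For the second, explicit inequality I would avoid any appeal to incomplete gamma asymptotics and instead split the integrand as
\[
\theta^{p_{\alpha}} e^{-c_{\alpha}\theta^{q_{\alpha}}}\;=\;\theta^{\,p_{\alpha}-q_{\alpha}+1}\,\cdot\,\theta^{\,q_{\alpha}-1}\,e^{-c_{\alpha}\theta^{q_{\alpha}}}.
\]
The second factor is precisely $-\frac{1}{c_{\alpha}q_{\alpha}}\frac{d}{d\theta}\!\bigl[e^{-c_{\alpha}\theta^{q_{\alpha}}}\bigr]$, whose integral from $\theta_{\max}$ to $\infty$ evaluates in closed form to $(c_{\alpha}q_{\alpha})^{-1}\exp(-c_{\alpha}\theta_{\max}^{q_{\alpha}})$. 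To pull the first factor out of the integral via a monotonicity argument, I need the exponent $p_{\alpha}-q_{\alpha}+1$ to be negative so that $\theta\mapsto\theta^{\,p_{\alpha}-q_{\alpha}+1}$ is decreasing on $[\theta_{\max},\infty)$ and thus bounded above by $\theta_{\max}^{\,p_{\alpha}-q_{\alpha}+1}$. Assembling these two pieces yields exactly the claimed bound $\frac{C_{\alpha}}{q_{\alpha}c_{\alpha}}\,\theta_{\max}^{\,p_{\alpha}+1-q_{\alpha}}\,\exp(-c_{\alpha}\theta_{\max}^{q_{\alpha}})$.

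The only non-routine verification, and what I would flag as the main obstacle to the argument being self-contained, is the sign check on the exponent: substituting $p_{\alpha}=\frac{\alpha-2}{2(1-\alpha)}$ and $q_{\alpha}=\frac{1}{1-\alpha}$ gives $p_{\alpha}-q_{\alpha}+1=-\frac{\alpha+2}{2(1-\alpha)}$, which is strictly negative on $(0,1)$. This confirms that the monotonicity step is legitimate and that no polynomial correction survives in the leading-order bound. Everything else is bookkeeping: the left-hand side depends on $\lambda$ only through a factor dominated by $1$, the stretched-exponential envelope is $\lambda$-free, and the resulting estimate controls the zero eigenmode as strongly as the rest of the spectrum.
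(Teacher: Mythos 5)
Your argument is correct, and its first step is identical to the paper's: drop the factor $e^{-\theta t^{\alpha}\lambda}\le 1$, which makes the bound $\lambda$-free (and $t$-free), then insert the stretched-exponential envelope \eqref{eq:Mwright-tail} on $[\theta_{\max},\infty)$ using $\theta_{\max}\ge\theta_{0}(\alpha)$. Where you diverge is in the treatment of the remaining integral: the paper simply cites the standard inequality $\int_{x}^{\infty}u^{p}e^{-cu^{q}}\,du\le\frac{1}{qc}\,x^{p+1-q}e^{-cx^{q}}$ ``for $x$ large'', whereas you prove it directly by writing $\theta^{p_{\alpha}}e^{-c_{\alpha}\theta^{q_{\alpha}}}=\theta^{p_{\alpha}-q_{\alpha}+1}\cdot\theta^{q_{\alpha}-1}e^{-c_{\alpha}\theta^{q_{\alpha}}}$, recognizing the second factor as $-\tfrac{1}{c_{\alpha}q_{\alpha}}\tfrac{d}{d\theta}e^{-c_{\alpha}\theta^{q_{\alpha}}}$, and pulling out the first factor by monotonicity after checking $p_{\alpha}+1-q_{\alpha}=-\tfrac{\alpha+2}{2(1-\alpha)}<0$ for $0<\alpha<1$. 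That sign check is the genuine added value: it shows the closed-form bound holds for \emph{every} $\theta_{\max}\ge\theta_{0}(\alpha)$ with no asymptotic caveat and no leftover polynomial correction, which matches the proposition's statement more tightly than the cited ``for $x$ large'' version (in general, when $p+1-q>0$, that inequality is only an asymptotic leading-order estimate). So your route is self-contained and slightly sharper in its range of validity, at the cost of a few lines of elementary computation that the paper outsources to a standard reference.
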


\begin{proof}
Drop the factor $e^{-t\lambda\theta}\le1$ and bound the tail of a
monotone stretched-exponential via the standard inequality $\int_{x}^{\infty}u^{p}e^{-cu^{q}}du\le\frac{1}{qc}\,x^{p+1-q}e^{-cx^{q}}$
for $x$ large. 
\end{proof}
\begin{prop}
\label{prop:right-tail-gap} If the graph is connected and we restrict
to $\mathbf{1}^{\perp}$ (thus $\lambda\ge\lambda_{2}>0$), then for
any $\theta_{\max}>0$, 
\begin{equation}
\sup_{\lambda\in[\lambda_{2},\lambda_{\max}]}\mathcal{T}_{R}(\lambda)\;\le\;e^{-t^{\alpha}\lambda_{2}\theta_{\max}}\int_{\theta_{\max}}^{\infty}M_{\alpha}(\theta)\,d\theta\;\le\;e^{-t^{\alpha}\lambda_{2}\theta_{\max}}.\label{eq:right-tail-gap}
\end{equation}
\end{prop}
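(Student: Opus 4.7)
The plan is to exploit the spectral gap $\lambda_2 > 0$ in order to replace the stretched-exponential tail bound of Proposition~\ref{prop:right-tail-general} by a much cleaner purely exponential one. The key observation is that the map $\lambda \mapsto \mathcal{T}_R(\lambda)$ is monotonically nonincreasing on $[0,\infty)$: since $M_\alpha \ge 0$ and $e^{-\theta t^\alpha \lambda}$ is nonincreasing in $\lambda$ for every fixed $\theta \ge 0$, the same holds for the integral. Therefore the supremum over $[\lambda_2,\lambda_{\max}]$ is attained at the left endpoint $\lambda = \lambda_2$, reducing the problem to bounding $\mathcal{T}_R(\lambda_2)$.

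Next I would bound $\mathcal{T}_R(\lambda_2)$ by factoring out the value of the exponential at the left endpoint of the integration domain. On $\theta \ge \theta_{\max}$ the monotonicity of $x\mapsto e^{-x}$ gives $e^{-\theta t^\alpha \lambda_2} \le e^{-\theta_{\max} t^\alpha \lambda_2}$, so this quantity can be pulled outside the integral as a uniform multiplicative constant. What remains inside is the truncated mass $\int_{\theta_{\max}}^\infty M_\alpha(\theta)\,d\theta$, which is at most $1$ because $M_\alpha$ is a probability density (cf.~\S\ref{sec:subordination-identity}). Chaining these two estimates yields both inequalities in the statement simultaneously.

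There is no substantive obstacle in this argument — it is essentially a one-line calculation. What is conceptually worth flagging in the exposition is the qualitative improvement relative to Proposition~\ref{prop:right-tail-general}: restricting to $\mathbf{1}^\perp$, i.e., away from the zero eigenvalue of $L$, converts the tail decay from stretched-exponential in $\theta_{\max}^{q_\alpha}$ (which relied on the fine asymptotic \eqref{eq:Mwright-tail} of the M--Wright density) to an ordinary exponential in $\theta_{\max}$, at the cost of the prefactor $t^\alpha \lambda_2$. This prefactor is beneficial for large $t$ or well-connected graphs with a sizeable spectral gap, but the bound degenerates as $t\downarrow 0$ or $\lambda_2\downarrow 0$, which is why the universal bound of Proposition~\ref{prop:right-tail-general} remains indispensable for controlling the zero mode and for small-time regimes.
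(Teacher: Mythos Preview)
Your proof is correct and follows essentially the same approach as the paper's: bound the exponential $e^{-\theta t^\alpha \lambda}$ by its value at $(\lambda_2,\theta_{\max})$ and then use that $M_\alpha$ has total mass at most one. Your presentation is just slightly more explicit (first using monotonicity of $\mathcal{T}_R$ in $\lambda$ to reduce to $\lambda=\lambda_2$, then bounding in $\theta$), whereas the paper compresses both steps into a single line.
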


\begin{proof}
Use $e^{-t\lambda\theta}\le e^{-t\lambda_{2}\theta}$ and $\int_{\theta_{\max}}^{\infty}M_{\alpha}(\theta)\,d\theta\le1$. 
\end{proof}
We now consider the practical choices based on the previous results: 
\begin{itemize}
\item \emph{All modes (including $\lambda=0$):} pick $\theta_{\max}$ so
that $c_{\alpha}\theta_{\max}^{q_{\alpha}}\ge\log(2/\varepsilon)$;
then by \eqref{eq:right-tail-general}, $\sup_{\lambda}\mathcal{T}_{R}(\lambda)\lesssim\tfrac{C_{\alpha}}{q_{\alpha}c_{\alpha}}\,\theta_{\max}^{p_{\alpha}+1-q_{\alpha}}\,e^{-c_{\alpha}\theta_{\max}^{q_{\alpha}}}\le\varepsilon/2$
for large enough $\theta_{\max}$. 
\item \emph{Mean-zero subspace:} set $\theta_{\max}\ge\frac{1}{t^{\alpha}\lambda_{2}}\log(2/\varepsilon)$
to guarantee $\sup_{\lambda\ge\lambda_{2}}\mathcal{T}_{R}(\lambda)\le\varepsilon/2$
by \eqref{eq:right-tail-gap}. 
\end{itemize}

\subsection{Putting it together: window rules}

Combining Propositions~\ref{prop:left-tail}--\ref{prop:right-tail-gap}
yields explicit choices for $(\theta_{\min},\theta_{\max})$ ensuring
a total tail below a target tolerance $\varepsilon$: 
\begin{cor}
\label{cor:window} Given $\varepsilon\in(0,1)$ and $0<\alpha<1$:

\textbf{General (all modes).} Choose 
\[
\theta_{\min}=\frac{\varepsilon}{2}\,\Gamma(1-\alpha),\qquad\theta_{\max}\ \text{s.t.}\quad c_{\alpha}\theta_{\max}^{q_{\alpha}}\ge\log\!\frac{2}{\varepsilon}.
\]
Then $\sup_{\lambda\in[0,\lambda_{\max}]}\big(\mathcal{T}_{L}(\lambda)+\mathcal{T}_{R}(\lambda)\big)\le\varepsilon$.

\textbf{Mean-zero subspace (connected graph).} If one estimates the
operator on $\mathbf{1}^{\perp}$, take 
\[
\theta_{\min}=\frac{\varepsilon}{2}\,\Gamma(1-\alpha),\qquad\theta_{\max}=\frac{1}{t\lambda_{2}}\,\log\!\frac{2}{\varepsilon}.
\]
\medspace{}\emph{If one intends to reuse the same nodes for all}
$t\in[t_{\min},t_{\max}]$ , replace $t$ by $t_{\min}$ in this formula
so that $(a_{j},b_{j})$ are $t$-independent. Then $\sup_{\lambda\in[\lambda_{2},\lambda_{\max}]}\big(\mathcal{T}_{L}(\lambda)+\mathcal{T}_{R}(\lambda)\big)\le\varepsilon$. 
\end{cor}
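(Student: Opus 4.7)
The plan is to assemble the corollary as a direct consequence of Propositions~\ref{prop:left-tail}, \ref{prop:right-tail-general}, and \ref{prop:right-tail-gap}, by budgeting $\varepsilon/2$ for each of the left and right tails so that their sum is at most $\varepsilon$. The spectral identity of Lemma~\ref{lem:op-scalar} reduces the operator-norm truncation error to the scalar quantity $\sup_{\lambda \in \sigma(L)} (\mathcal{T}_L(\lambda) + \mathcal{T}_R(\lambda))$, so it suffices to produce two $\lambda$-uniform bounds, one for each tail, over the appropriate spectral range.

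The left tail is identical in both cases. I would invoke Proposition~\ref{prop:left-tail} to get the $\lambda$-uniform estimate $\sup_{\lambda}\mathcal{T}_L(\lambda)\le \theta_{\min}/\Gamma(1-\alpha)$, and the prescription $\theta_{\min}=(\varepsilon/2)\,\Gamma(1-\alpha)$ immediately forces $\sup_{\lambda}\mathcal{T}_L(\lambda)\le \varepsilon/2$, uniformly in $t$ and including the zero mode. The right tail is where the two cases branch. For the general (all-modes) statement, I would apply Proposition~\ref{prop:right-tail-general}, whose bound is already $\lambda$-independent and therefore captures $\lambda=0$; the stretched-exponential factor $\exp(-c_\alpha\theta_{\max}^{q_\alpha})$ is made $\le \varepsilon/2$ by the stated condition $c_\alpha \theta_{\max}^{q_\alpha}\ge \log(2/\varepsilon)$, after absorbing the algebraic prefactor $(C_\alpha/(q_\alpha c_\alpha))\,\theta_{\max}^{p_\alpha+1-q_\alpha}$ into the stretched exponential. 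For the mean-zero case, restricting to $\lambda\ge \lambda_2>0$ lets me apply Proposition~\ref{prop:right-tail-gap} and obtain $\sup_{\lambda\ge \lambda_2}\mathcal{T}_R(\lambda)\le e^{-t^\alpha \lambda_2\theta_{\max}}$; choosing $\theta_{\max} = (t^\alpha\lambda_2)^{-1}\log(2/\varepsilon)$ (i.e., with $t^\alpha$ in the denominator, matching the exponent appearing in Proposition~\ref{prop:right-tail-gap}) makes this exactly $\varepsilon/2$. Adding the two $\varepsilon/2$ budgets closes each case. The remark on reusing nodes across $t\in[t_{\min},t_{\max}]$ then follows from monotonicity of $\theta_{\max}\mapsto e^{-t^\alpha \lambda_2 \theta_{\max}}$: the smallest $t$ is the worst, so replacing $t$ by $t_{\min}$ in the formula yields a single $t$-independent window that works uniformly.

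The main obstacle is essentially cosmetic and lies entirely in the general case: the algebraic prefactor $\theta_{\max}^{p_\alpha+1-q_\alpha}$ is not eliminated by the stated exponential condition alone, so one must either require $\theta_{\max}$ to exceed a baseline $\theta_0(\alpha,\varepsilon)$ where the prefactor is $\le 1$, or replace $\log(2/\varepsilon)$ by $\log(2K_\alpha/\varepsilon)$ for an explicit constant $K_\alpha$ depending only on $\alpha$. Beyond this accounting issue, no new ideas are needed: the corollary is purely a budget-splitting packaging of the three tail estimates already proved.
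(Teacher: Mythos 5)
Your proposal is correct and takes essentially the same route as the paper, which offers no separate proof but obtains the corollary precisely by combining Propositions~\ref{prop:left-tail}, \ref{prop:right-tail-general} and \ref{prop:right-tail-gap} with an $\varepsilon/2$ budget per tail, exactly as you do. Your two caveats are also well placed: the paper itself only claims the general-case right-tail bound ``for large enough $\theta_{\max}$'' (so the algebraic prefactor must indeed be absorbed as you describe), and its writing of $t$ rather than $t^{\alpha}$ in the mean-zero formula is an inconsistency with Proposition~\ref{prop:right-tail-gap} that your choice of $\theta_{\max}=(t^{\alpha}\lambda_{2})^{-1}\log(2/\varepsilon)$ resolves.
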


\begin{rem}
The log--trapezoidal SOE uses $y=\log\theta$ on $[y_{\min},y_{\max}]$
with $y_{\min}=\log\theta_{\min}$ and $y_{\max}=\log\theta_{\max}$.
After fixing the window by Corollary~\ref{cor:window}, increasing
$J$ then controls the \emph{discretization} error inside the window
with geometric rate (Theorem~\ref{thm:geo}). 
\end{rem}

\begin{rem}
The choice $\theta_{\max}$ via Proposition~\ref{prop:right-tail-gap}
with $\lambda_{2}$ replaced by $\lambda_{\max}$ gives a conservative
upper bound on the tail for \emph{high-frequency} modes and motivates
rules of the form $\theta_{\max}\sim C/(t\lambda)$. For reuse across
multiple times, take $t=t_{\min}$ of the range. Empirically, taking
$C\approx32$ makes $\exp(-\theta_{\max}t\lambda)$ fall near machine
precision $10^{-14}$ and works well when the focus is on modes away
from the zero eigenvalue; the rigorous alternative \eqref{eq:right-tail-general}
controls \emph{all} modes using only properties of $M_{\alpha}$. 
\end{rem}

\subsection{Constants in the stretched-exponential bound.}

Explicit asymptotics for $M_{\alpha}$ give $q_{\alpha}=\frac{1}{1-\alpha}$
and an exponent constant $c_{\alpha}=(1-\alpha)\,\alpha^{\alpha/(1-\alpha)}$.
The prefactor has a power $\theta^{p_{\alpha}}$ with $p_{\alpha}=\frac{\alpha-2}{2(1-\alpha)}$
and a multiplicative constant depending only on $\alpha$ (see \cite[Sec.~4.3]{MainardiBook}
and \cite[Sec.~2.3]{GorenfloKilbasMainardiRogosin}). Using these
in \eqref{eq:right-tail-general} yields a fully explicit $\theta_{\max}(\varepsilon,\alpha)$.

\section{Vertex waiting times and error metrics.}
\label{sec:waiting-time-errors}
%\subsection{Random clock}

In \emph{time-fractional} diffusion, physical time $t$ is linked
to the baseline (operational) time $s$ by a \emph{random clock} $E_{t}$
(the inverse $\alpha$-stable subordinator, $0<\alpha<1$). Its density
satisfies the scaling 
\begin{equation}
g_{\alpha}(s,t)\;=\;t^{-\alpha}\,M_{\alpha}\!\Big(\frac{s}{t^{\alpha}}\Big),\qquad s>0,\ t>0,\label{eq:clock-scaling}
\end{equation}
and one has the identical subordination forms 
\begin{equation}
E_{\alpha}(-t^{\alpha}L)\;=\;\int_{0}^{\infty}e^{-sL}\,g_{\alpha}(s,t)\,ds\;=\;\int_{0}^{\infty}M_{\alpha}(\theta)\,e^{-\theta\,t^{\alpha}\,L}\,d\theta.
\end{equation}
Let $T_{i}$ denote the (physical-time) waiting time before the next
jump when the process is at node $i$ at time $0$. Conditioning on
the random clock and averaging yields the \emph{survival function}
\begin{equation}
\underbrace{S_{i}(t)}_{\text{survival}}\;:=\;\mathbb{P}(T_{i}>t)\;=\;\int_{0}^{\infty}e^{-d_{i}s}\,g_{\alpha}(s,t)\,ds\;=\;\int_{0}^{\infty}M_{\alpha}(\theta)\,e^{-d_{i}t^{\alpha}\theta}\,d\theta\;=\;E_{\alpha}(-d_{i}\,t^{\alpha}),\qquad t\ge0,\label{eq:survival-exact}
\end{equation}
so each vertex exhibits a \emph{Mittag-Leffler} waiting-time law
with heavy tail $S_{i}(t)\sim\big(d_{i}\,\Gamma(1-\alpha)\big)^{-1}t^{-\alpha}$.

\subsection{Raw vs.\ normalized quadrature weights.} For numerical quadrature
on $\theta=e^{y}$ we use a uniform grid $y_{j}=y_{\min}+(j-1)h_{y}$
with $h_{y}=(y_{\max}-y_{\min})/(J-1)$ and set $b_{j}=\theta_{j}=e^{y_{j}}$.
The \emph{raw} log--trapezoid weights are 
\[
w_{j}^{\mathrm{raw}}\;=\;h_{y}\,M_{\alpha}(b_{j})\,b_{j},\qquad j=1,\dots,J,
\]
which approximate the measure $M_{\alpha}(\theta)\,d\theta$ on $[\theta_{\min},\theta_{\max}]$.
Their sum $\mathrm{mass}_{\mathrm{win}}:=\sum_{j}w_{j}^{\mathrm{raw}}=\int_{\theta_{\min}}^{\theta_{\max}}M_{\alpha}(\theta)\,d\theta\le1$
is the in-window mass. For the \emph{operator} we normalize to $a_{j}=w_{j}^{\mathrm{raw}}/\sum_{k}w_{k}^{\mathrm{raw}}$
so that $\sum_{j}a_{j}=1$ and obtain 
\[
E_{\alpha}(-t^{\alpha}L)\;\approx\;\sum_{j=1}^{J}a_{j}\,e^{-(t^{\alpha}\,b_{j})L}.
\]
For \emph{scalar} quantities like vertex survivals, probability density
functions and hazards, one must use the \emph{raw} weights directly
(equivalently, multiply the normalized sums by $\mathrm{mass}_{\mathrm{win}}$).

\subsection{Mixture interpretation for the random clock.} With the raw weights,
\[
g_{\alpha}(s,t)\,ds\;\approx\;\sum_{j=1}^{J}w_{j}^{\mathrm{raw}}\;\delta\big(s-t\,b_{j}\big)\,ds,
\]
so that, simultaneously for \emph{every} vertex $i$, 
\begin{align}
S_{i}(t) & \approx\sum_{j=1}^{J}w_{j}^{\mathrm{raw}}\,e^{-\,d_{i}\,t\,b_{j}},\label{eq:survival-soe}\\[0.25em]
f_{i}(t) & :=-\frac{d}{dt}S_{i}(t)\;\approx\;d_{i}\sum_{j=1}^{J}w_{j}^{\mathrm{raw}}\,b_{j}\,e^{-\,d_{i}\,t\,b_{j}},\label{eq:pdf-soe}\\[0.25em]
h_{i}(t) & :=\frac{f_{i}(t)}{S_{i}(t)}\;\approx\;d_{i}\,\frac{\sum_{j=1}^{J}w_{j}^{\mathrm{raw}}\,b_{j}\,e^{-\,d_{i}\,t\,b_{j}}}{\sum_{j=1}^{J}w_{j}^{\mathrm{raw}}\,e^{-\,d_{i}\,t\,b_{j}}}\;=\;d_{i}\,\frac{\sum_{j}a_{j}\,b_{j}\,e^{-d_{i}tb_{j}}}{\sum_{j}a_{j}\,e^{-d_{i}tb_{j}}}.\label{eq:hazard-soe}
\end{align}

Then the coefficients $w_{j}^{\mathrm{raw}}$ are the actual ``importance''
of every exponential term in defining the waiting time at every vertex
of the graph.

The SOE internal times $s_{j}(t)=t\,b_{j}$ are \emph{global}
operational-time samples drawn from the inverse-stable clock; they
are not per-vertex waits. Vertex-dependence enters only through the
degree $d_{i}$ (the baseline attempt rate). Mixing the global internal-time
samples with the node-dependent rates $d_{i}$ yields Mittag-Leffler
waiting times at each vertex. In this precise sense, the $\{b_{j}\}$
are the global internal-time samples whose mixtures simultaneously
approximate all vertex waiting-time distributions.

Because \eqref{eq:survival-soe} is a convex combination of decaying
exponentials in $t$, the approximate hazard $h_{i}(t)$ in \eqref{eq:hazard-soe}
is a \emph{decreasing} function of $t$ (``aging''), matching the
time-fractional renewal interpretation. Also, from $S_{i}(t)=E_{\alpha}(-d_{i}t^{\alpha})$
one has $S_{i}(t)\sim\big(d_{i}\,\Gamma(1-\alpha)\big)^{-1}\,t^{-\alpha}$
as $t\to\infty$, so the mean waiting time is infinite for every $i$
when $0<\alpha<1$.

\subsection{Error metrics}

Here we define some error metrics to check the accuracy of the previous
approximations. Given a probe vector $u_{0}$ and the SOE approximation
\[
u^{\mathrm{SOE}}(t)\;=\;F_{J}(t,L)\,u_{0}\;=\;\sum_{j=1}^{J}a_{j}\,e^{-b_{j}t^{\alpha}L}\,u_{0},
\]
we measure the error in the following ways.

\subsubsection{Relative error (per probe).}

\begin{equation}
\mathrm{relerr}(u_{0})\;=\;\frac{\big\|\,u^{\star}(t)-u^{\mathrm{SOE}}(t)\,\big\|_{2}}{\big\|\,u^{\star}(t)\,\big\|_{2}}.\label{eq:relerr}
\end{equation}
When multiple probes $\{u_{0}^{(k)}\}_{k=1}^{r}$ are used (e.g.,
random), we report $\max_{k}\mathrm{relerr}(u_{0}^{(k)})$.

\subsubsection{Mass conservation error.}

\begin{equation}
\mathrm{masserr}(u_{0})\;=\;\left|\,\mathbf{1}^{\top}u^{\star}(t)-\mathbf{1}^{\top}u^{\mathrm{SOE}}(t)\,\right|.\label{eq:masserr}
\end{equation}
Ideally $\mathrm{masserr}(u_{0})=0$; with floating-point arithmetic
it is typically at the level of machine precision due to $\sum_{j}a_{j}=1$
and $L\mathbf{1}=0$.

\subsubsection{Scalar  error.}
The scalar error, intended on the spectrum of the matrix-functions, is
\begin{equation}
 \max\limits_{\lambda\in[0,\lambda_{\max}]} \left| E_{\alpha}(-t^{\alpha}\lambda) - F_{J}(t,\lambda)\right|.
\end{equation}
By Theorem~\ref{thm:geo} and  uniformly for $\lambda\in[0,\lambda_{\max}]$, 
the scalar error decays \emph{geometrically} with $J$ once the window captures the effective support of the integrand;

The raw error as a function of $J$ need not be strictly decreasing
(typical oscillations of spectrally convergent trapezoidal rules),
but its envelope decays until it reaches the desired accuracy \cite{TrefethenWeideman2014}.

In all cases the SOE inherits the correct long-time limit and conservation
properties by construction (Proposition~\ref{prop:mass}). Accuracy
improves with $J$ and with a window adapted to the relevant scale
$t, \lambda$.

\subsubsection{Operator error.}
If an operator-level diagnostic is desired, one may estimate
\begin{equation} 
\frac{\big\|\,E_{\alpha}(-t^{\alpha}L)-F_{J}(t,L)\,\big\|_{2}}{\big\|\,E_{\alpha}(-t^{\alpha}L)\,\big\|_{2}}
\end{equation}
via power iteration on the difference operator; in practice, the probe-based
relative error \eqref{eq:relerr} is sufficient and cheaper.

Note that in exact arithmetic the operator error and the scalar spectral error are the same, since $\big\|\,E_{\alpha}(-t^{\alpha}L)\,\big\|_{2}=1$ and the 2-norm of the symmetric matrix $E_{\alpha}(-t^{\alpha}L)-F_{J}(t,L)$ is the magnitude of its largest eigenvalue, which is precisely the scalar error by the spectral theorem applied to the function $g(\lambda) =E_{\alpha}(-t^{\alpha}\lambda) - F_{J}(t,\lambda)$. However the computation of matrices in floating-point arithmetic may introduce numerical inaccuracies. This can be seen by comparing Fig.~\ref{fig:errors_ER} and Fig.~\ref{fig:error_probing}(B).

\subsection{Computational results}

In this section we consider an Erd\H{o}s--R\'{e}nyi (ER) random graph
with $250$ vertices and $1000$ edges. The graph is simple and connected.
We then compute the scalar error for three different values of $\alpha$
and a wide range of times $11\leq t\leq1001$. The results are illustrated
in Fig.~\ref{fig:errors_ER} for $\alpha=0.8$ (a), $\alpha=0.5$ (b),
and $\alpha=0.25$ (c). The convergence speed of the scalar error is
larger for smaller values of $\alpha$, that is, fewer addends are
needed to obtain a good approximation in the regime of stronger memory
effects.

\begin{figure}
\subfloat[$\alpha=0.8$]{\begin{centering}
\includegraphics[width=0.32\textwidth]{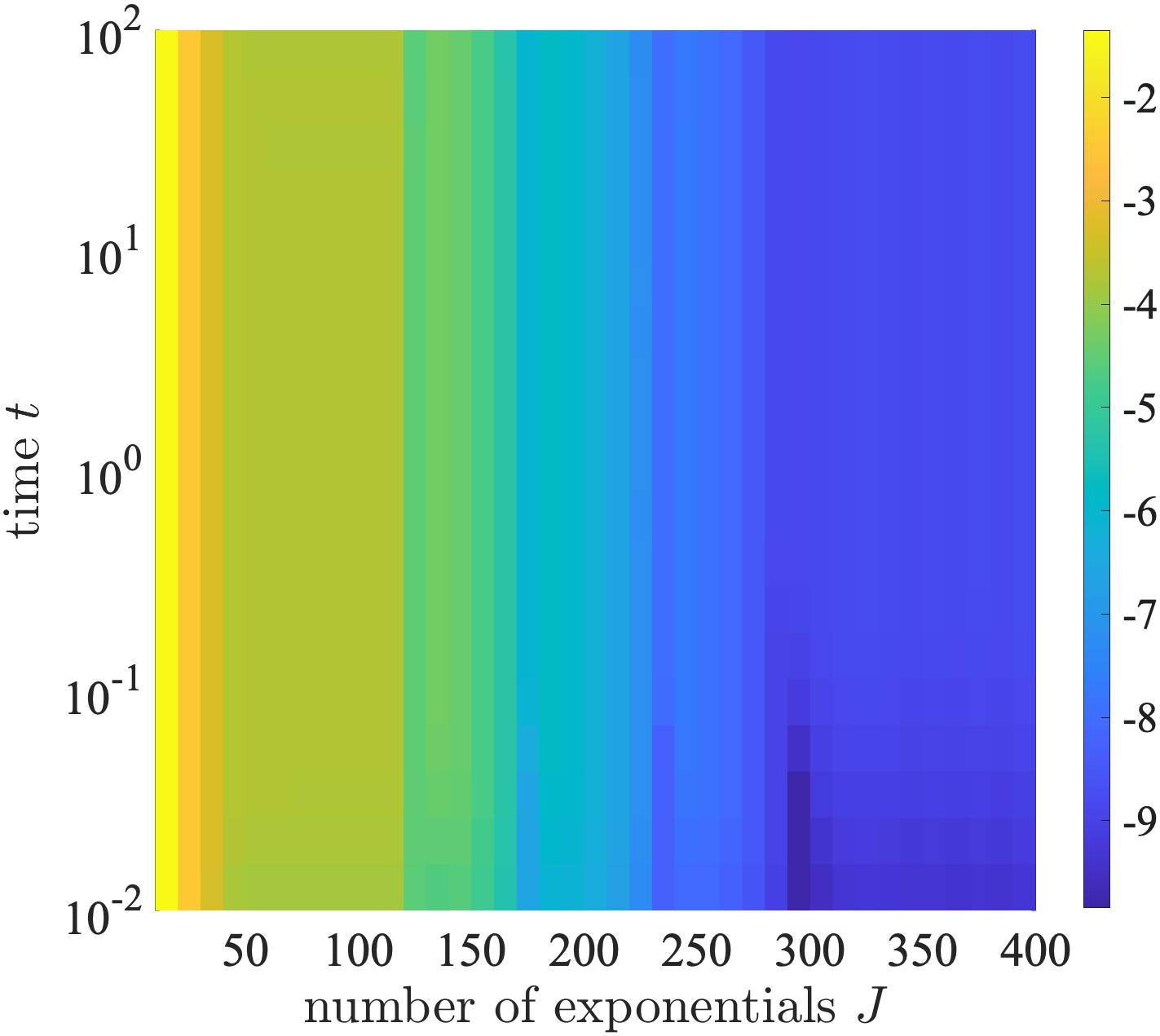}
\par\end{centering}
}\subfloat[$\alpha=0.5$]{\begin{centering}
\includegraphics[width=0.32\textwidth]{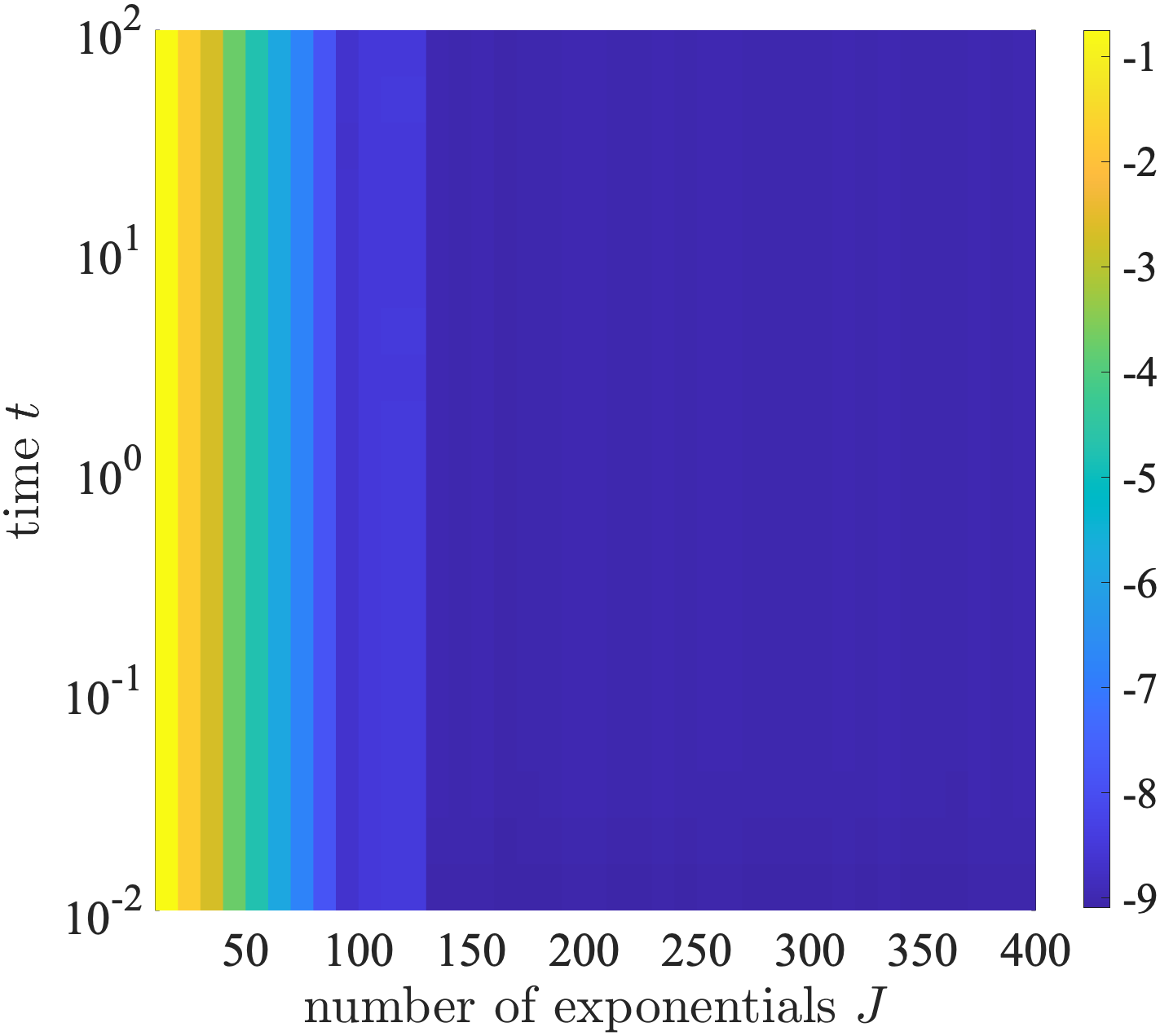}
\par\end{centering}
}\subfloat[$\alpha=0.25$]{\begin{centering}
\includegraphics[width=0.32\textwidth]{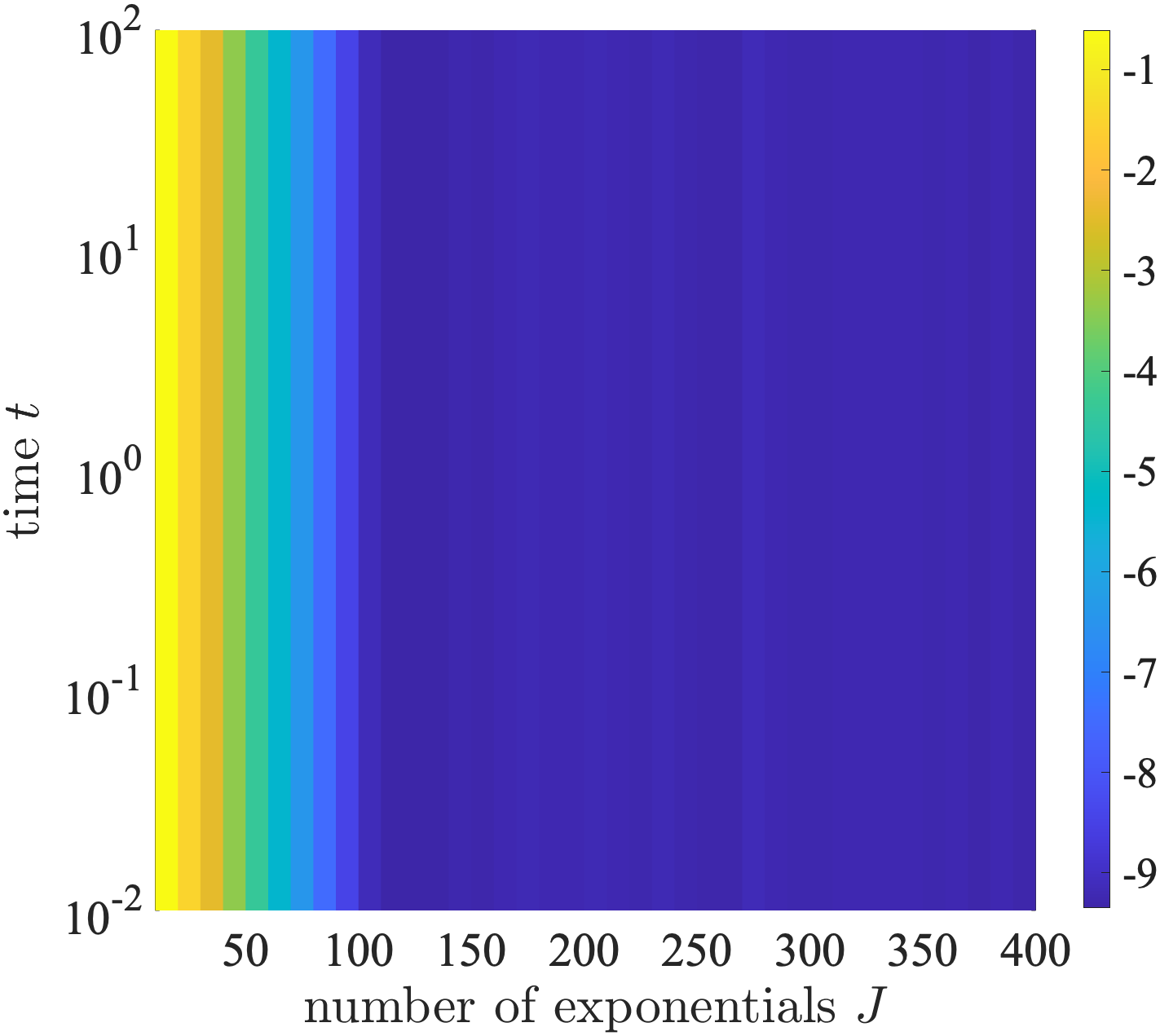}
\par\end{centering}
}

\caption{Base-10 logarithm of the scalar error of the SOE approximation for different times $t$ and number of addends $J$ for an Erd\H{o}s--R\'{e}nyi (ER) random graph with $250$ vertices and $1000$ edges. The plots use $\alpha=0.8$ (A), $\alpha=0.5$ (B), and $\alpha=0.25$ (C).}
\label{fig:errors_ER}
\end{figure}

Let us now focus on the nature of the SOE for the three different
values of $\alpha$, with $J=61$. Using our approach we have the
\emph{operator}-level approximations 
\begin{equation}
E_{0.8}(-t^{0.8}L)\approx0.242e^{-1.52t^{0.8}L}+0.220e^{-1.19t^{0.8}L}+0.148e^{-0.92t^{0.8}L}+\cdots,
\end{equation}

\begin{equation}
E_{0.5}(-t^{0.5}L)\approx0.119e^{-1.48t^{0.5}L}+0.115e^{-1.15t^{0.5}L}+0.108e^{-1.90t^{0.5}L}+\cdots,
\end{equation}
and

\begin{equation}
E_{0.25}(-t^{0.25}L)\approx0.096e^{-1.10t^{0.25}L}+0.095e^{-1.42t^{0.25}L}+0.091e^{-0.86t^{0.25}L}+\cdots.
\label{eq:example-SOE-alpha025}
\end{equation}

Here each coefficient $a_{j}>0$ may be read as the \emph{importance}
of the corresponding exponential in the operator mixture; by construction
$\sum_{j}a_{j}=1$ so these can also be seen as convex weights. The
remaining question is the physical meaning of the rates $b_{j}$. 
For larger values of $\alpha$, the weight of coefficients is skewed towards one diffusion. Indeed, in the limit $\alpha \to 1$, we recover exactly the diffusive behavior, so we expect $a_1=1$ and $a_j=0$ for $j\neq 2$. As $\alpha$ decreases, the contributions of all other diffusions is more prominent. The three leading coefficients in Equation~\eqref{eq:example-SOE-alpha025} indicate that the respective diffusion processes have (almost) the same importance.

To illustrate the importance of the strength of subdiffusion on \emph{vertex
waiting times} as previously discussed theoretically in this work let
us consider a vertex $i$ of degree one in the ER random graph studied
before. Then, for $t=1$ and  $\alpha=0.8, 0.5, 0.25$ we have the following
\emph{survival functions} respectively:

\begin{equation*}
	\begin{aligned}
		S_{i}(0.8,1)&\approx0.242e^{-1.52}+0.220e^{-1.19}+0.148e^{-0.92}+\cdots \approx 0.3869, \\
		S_{i}(0.5,1)&\approx0.119e^{-1.48}+0.115e^{-1.15}+0.108e^{-1.90}+\cdots\approx 0.4276,\\
		S_{i}(0.25,1)&\approx0.096e^{-1.10}+0.095e^{-1.42}+0.091e^{-0.86}+\cdots\approx 0.4639.
	\end{aligned}
\end{equation*}

As in the approximation of the operator $E_{\alpha}(-t^{\alpha}L)$, for large $\alpha$ the weight distribution is skewed toward the waiting time of the most influential distributions. Conversely, as $\alpha$ decreases, the subdiffusive effect is stronger. Particles experience longer residence times at each vertex, before the hop to the next vertex occurs.

What happen for vertices of larger degree? Obviously, the waiting
time decreases with the increase of the degree because the diffusive
particle has more choices to escape from the vertex. But, why does it also
drop with the increase of time? We hypothesize here that the waiting time decreases with the increase of the
physical time, because the diffusive particle remembers the paths
used previously, such that it has not much to wait
to continue a path it already knows.

\section{Subdiffusive distance and paths}
\label{sec:shortest-paths}

\subsection{Subdiffusive distance}
Let us consider the solution of the Caputo time-fractional diffusion
equation: $u(t)=E_{\alpha}(-t^{\alpha}L)\,u_{0}$ in which we consider
the initial condition $u_{0}=e_{v}$, where $e_{v}$ is the vector
with one at position $v$ and zero elsewhere. Then, let us consider
the capacity of the whole graph to diffuse mass between the vertices
$v$ and $w$ at time $t$. That is, we consider at time $t$ the
difference between the mass remaining at the origin, i.e., vertex
$v$, minus the mass diffused to vertex $w$:
\begin{equation*}
u_{v|u_{0}=e_{v}}\left(t\right)-u_{w|u_{0}=e_{v}}\left(t\right)=\left(E_{\alpha}(-t^{\alpha}L)\right)_{vv}-\left(E_{\alpha}(-t^{\alpha}L)\right)_{vw}.
\end{equation*}

Similarly, the capacity of the whole graph to move mass from $w$
to $v$ conditioned to all initial mass being allocated at $w$ is:
\begin{equation*}
u_{w|u_{0}=e_{w}}\left(t\right)-u_{v|u_{0}=e_{w}}\left(t\right)=\left(E_{\alpha}(-t^{\alpha}L)\right)_{ww}-\left(E_{\alpha}(-t^{\alpha}L)\right)_{wv}.
\end{equation*}

As we only consider undirected graphs here, the total capacity of
mass diffusion between both vertices is:
\begin{equation}
\mathcal{\mathscr{D}}_{\alpha,t}\left(v,w\right)=\left(E_{\alpha}(-t^{\alpha}L)\right)_{vv}+\left(E_{\alpha}(-t^{\alpha}L)\right)_{ww}-2\left(E_{\alpha}(-t^{\alpha}L)\right)_{vw}.
\end{equation}

Then, we have the following result. 
\begin{prop}
Let $\mathcal{\mathscr{D}}_{\alpha,t}\left(v,w\right)$ as defined
before. Then \textup{$\mathcal{\mathscr{D}}_{\alpha,t}\left(v,w\right)$}
is real and non-negative. There exist an embedding of vertices $V\to\mathbb{R}^{n}$,
$i\to x_{i}$ such that $\mathcal{\mathscr{D}}_{\alpha,t}\left(v,w\right)=\|x_{v}-x_{w}\|_{2}^{2}$. 
\end{prop}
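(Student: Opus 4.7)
The plan is to recognize $\mathscr{D}_{\alpha,t}(v,w)$ as a squared seminorm induced by the positive semidefinite operator $K:=E_{\alpha}(-t^{\alpha}L)$ and then invoke the standard Gram-matrix construction for the embedding. First I would rewrite
\[
\mathscr{D}_{\alpha,t}(v,w)\;=\;(e_{v}-e_{w})^{\top}\,K\,(e_{v}-e_{w}),
\]
which is immediate by expanding the quadratic form and reading off the four entries of $K$ appearing in the definition.

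Next I would invoke the earlier proposition in \S\ref{sec:prelim}, which shows that $K$ is real symmetric with spectrum $\{E_{\alpha}(-t^{\alpha}\lambda_{k})\}_{k=1}^{n}\subset[0,1]$, hence positive semidefinite. In particular $(e_{v}-e_{w})^{\top}K(e_{v}-e_{w})\ge 0$, and it is real because $K$ is real symmetric; this settles the first assertion.

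For the embedding, I would form the symmetric positive semidefinite square root $K^{1/2}$ via functional calculus on $L$, namely $K^{1/2}=V\,E_{\alpha}(-t^{\alpha}\Lambda)^{1/2}\,V^{\top}$, and define $x_{i}:=K^{1/2}e_{i}\in\mathbb{R}^{n}$. Then $K_{ij}=\langle x_{i},x_{j}\rangle$, so
\[
\|x_{v}-x_{w}\|_{2}^{2}=K_{vv}+K_{ww}-2K_{vw}=\mathscr{D}_{\alpha,t}(v,w),
\]
yielding the required Euclidean embedding of $V$ into $\mathbb{R}^{n}$.

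There is no real obstacle here: the proof is essentially the Gram-matrix characterization of positive semidefinite matrices, specialized to $K=E_{\alpha}(-t^{\alpha}L)$. The only substantive input is the positivity of $K$, which has already been established. One could mention, as a remark, that the effective dimension of the embedding equals $\operatorname{rank}(K)=n$ when $L$ has simple spectrum and $E_{\alpha}(-t^{\alpha}\lambda_{k})>0$ for all $k$, which is guaranteed by complete monotonicity of $E_{\alpha}$ on $[0,\infty)$.
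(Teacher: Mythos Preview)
Your proof is correct and follows essentially the same Gram-matrix argument as the paper: both exploit positive semidefiniteness of $K=E_{\alpha}(-t^{\alpha}L)$ to factor it and read off a Euclidean embedding. The only cosmetic difference is that the paper takes the rows of $X=V\,E_{\alpha}(-t^{\alpha}\Lambda)^{1/2}$ as the embedded points, whereas you use the symmetric square root $K^{1/2}=V\,E_{\alpha}(-t^{\alpha}\Lambda)^{1/2}\,V^{\top}$ and take its columns; the two embeddings differ by the orthogonal map $V$, so they yield the same distances.
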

\begin{proof}
Using positivity $E_{\alpha}(-t^{\alpha}\lambda)\geq0$
for $0<\alpha<1$ we write
\begin{align*}
E_{\alpha}(-t^{\alpha}L) & =VE_{\alpha}(-t^{\alpha}\varLambda)V^{\top}\\
 & =V\left(E_{\alpha}(-t^{\alpha}\varLambda)\right)^{1/2}\,\left(E_{\alpha}(-t^{\alpha}\varLambda)\right)^{1/2}V^{\top}\\
 & =\left(VE_{\alpha}(-t^{\alpha}\varLambda)\right)^{1/2}\left(\left(VE_{\alpha}(-t^{\alpha}\varLambda)\right)^{1/2}\right)^{\top}
\end{align*}
Denote $X_{\alpha}\left(t\right)=\left(VE_{\alpha}(-t^{\alpha}\varLambda)\right)^{1/2}$
and by $x_{\alpha,t}^{\top}\left(v\right)$ the $v$-th row of $X$.
Thus, 
\[
\left(E_{\alpha}(-t^{\alpha}L)\right)_{vw}=x_{\alpha,t}^{\top}\left(v\right)x_{\alpha,t}\left(w\right).
\]
Then, $v\mapsto x_{\alpha,t}^{\top}\left(v\right)$ is an embedding of
vertices in $\mathbb{R}^{n}$. Therefore 
\[
\mathcal{\mathscr{D}}_{\alpha,t}\left(v,w\right)=\|x_{\alpha,t}\left(v\right)-x_{\alpha,t}\left(w\right)\|_{2}^{2},
\]
so $\mathcal{\mathscr{D}}_{\alpha,t}\left(v,w\right)$ is a square
Euclidean distance between the vertices $v$ and $w$. 
\end{proof}
Hereafter we call $\mathcal{\mathscr{D}}_{\alpha,t}\left(v,w\right)$
the (squared) subdiffusive distance between the corresponding vertices in the
graph when $\alpha<1$. Notice that for $\alpha=1$, the quantity $\mathcal{\mathscr{D}}_{1,t}\left(v,w\right)$
is the diffusion distance defined by Coiffman et al. \cite{coifman2005geometric,coifman2006diffusion}.
These distances are part of a large family of diffusion-like distances
on graphs \cite{estrada2012communicability,estrada2014hyperspherical}.

\subsection{Subdiffusive shortest paths}

Instead of considering the subdiffusive distance between any pair
of vertices in the graph, which may be adjacent or not, let us instead
consider what should be the trajectory of a single subdiffusive particle
between two vertices of the graph. In order to compare such trajectories
for a standard diffusive particle and a subdiffusive one we define
the following general geometrization of the graph (see \cite{markvorsen2008minimal,bridson2013metric}). 

For that purpose we consider every edge $e=(v,w)\in E$ as a compact
$1$--dimensional manifold with boundary $\partial e=\{v,w\}$. To
each edge we assign a time--dependent length 
\[
W_{\alpha,t}(v,w)=\begin{cases}
\sqrt{\mathcal{\mathscr{D}}_{\alpha,t}\left(v,w\right)}, & (v,w)\in E,\\[4pt]
0, & (v,w)\notin E,
\end{cases}
\]
so that the geometric edge is the interval $\widetilde{e}_{\alpha,t}(v,w)=[0,W_{\alpha,t}(v,w)]$.
Equipping each edge with this length measure turns the graph into
a (time--dependent) metric length space by extending distances through
infima of curve lengths in the associated $1$--dimensional complex.
In practice we obtain this geometrization by means of the following
definition.
\begin{defn}\label{def:weighted-graph}
Let $\sqrt{\mathcal{\mathscr{D}}_{\alpha,t}}$ be the matrix
of (subdiffusive) distances, intended as the entry-wise square root of $\mathcal{\mathscr{D}}_{\alpha,t}$. 
Define a weighted graph whose adjacency matrix is: 
$W=A\odot\sqrt{\mathcal{\mathscr{D}}_{\alpha,t}}$,
which assigns to each edge $(i,j)$ the cost induced by the discrepancy
of their diffusion profiles at time $t$. The resulting weighted graph
$\widetilde{G}$ is the geometrization of the original graph $G.$
\end{defn}

The resulting shortest paths (computed via Dijkstra's algorithm) identify
chains of vertices whose subdiffusion states remain maximally coherent
at time~$t$. We dub these paths the \emph{subdiffusive shortest paths}. These paths are not, in general,
the paths that require the fewest arcs of $A$, which we call \emph{topological shortest paths} or \emph{geodesic shortest paths}. They generalize the shortest communicability paths which were introduced in \cite{silver2018tuned}
with matrix function $f(A)=\exp(\beta A)$.

Consider the subdiffusive shortest paths on $W(t)$ as time
varies. When $t$ is close to zero, the effects of diffusion are minimal.
That is, each particle has explored only a small part of the graph.
Thus, local properties of the graph are highlighted by the weighted
metric. We then have the following two results.
\begin{thm}[Shortest-path dominance for the fractional heat kernel as $t\to0$]
\label{thm:Shortest-path-dominance} Let $G=(V,E)$ be a simple
undirected graph with combinatorial Laplacian $L=D-A$, and let $0<\alpha<1$.
Denote by $d(i,j)$ the graph distance between vertices $i\neq j$.
Then, for every $i\neq j$, 
\[
\bigl(E_{\alpha}(-t^{\alpha}L)\bigr)_{ij}=\frac{(-t^{\alpha})^{d(i,j)}}{\Gamma(\alpha d(i,j)+1)}\,(L^{d(i,j)})_{ij}+O\!\bigl(t^{\alpha(d(i,j)+1)}\bigr),\qquad t\downarrow0.
\]
In particular, 
\[
\bigl(E_{\alpha}(-t^{\alpha}L)\bigr)_{ij}=O\!\bigl(t^{\alpha \,d(i,j)}\bigr),
\]
and the leading-order contribution is determined by topological shortest paths between $i$ and $j$.

Moreover, 
\[
\bigl(E_{\alpha}(-t^{\alpha}L)\bigr)_{ij}=\frac{t^{\alpha\, d(i,j)}}{\Gamma(\alpha d(i,j)+1)}\,(A^{d(i,j)})_{ij}+O\!\bigl(t^{\alpha(d(i,j)+1)}\bigr),
\]
where $(A^{d(i,j)})_{ij}$ counts the number of shortest walks of
length $d(i,j)$ from $i$ to $j$. 
\end{thm}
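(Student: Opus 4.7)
The plan is to work directly from the power series definition
\[
\bigl(E_{\alpha}(-t^{\alpha}L)\bigr)_{ij}\;=\;\sum_{k=0}^{\infty}\frac{(-t^{\alpha})^{k}}{\Gamma(\alpha k+1)}\,(L^{k})_{ij},
\]
which converges absolutely since $L$ is a finite matrix. The entire statement then reduces to understanding which powers $k$ of $L$ can possibly contribute an off-diagonal entry at position $(i,j)$, together with a clean tail estimate in $t$.

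The key structural step is the following graph-theoretic observation about $L=D-A$. Expanding $L^{k}=(D-A)^{k}$ yields a sum of products of diagonal matrices $D$ and adjacency matrices $A$ (with alternating signs). Because $D$ is diagonal, only the $A$-factors move mass between vertices, so any such product with $m$ occurrences of $A$ can have a nonzero $(i,j)$-entry only if there is a walk of length $m$ from $i$ to $j$. For $i\ne j$ this forces $m\ge d(i,j)$, hence $(L^{k})_{ij}=0$ whenever $k<d(i,j)$. At the borderline $k=d(i,j)$, every term with at least one factor of $D$ uses strictly fewer than $d(i,j)$ copies of $A$ and therefore contributes zero; only the single term $(-A)^{d(i,j)}$ survives, giving
\[
(L^{d(i,j)})_{ij}\;=\;(-1)^{d(i,j)}\,(A^{d(i,j)})_{ij}.
\]
Substituting this into the series extracts the announced leading term, and the factor $(-1)^{d(i,j)}$ cancels $(-t^{\alpha})^{d(i,j)}$ so that the second form of the expansion, with the combinatorial interpretation as the number of shortest walks, follows immediately.

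For the remainder, I would bound
\[
\Biggl|\sum_{k=d(i,j)+1}^{\infty}\frac{(-t^{\alpha})^{k}}{\Gamma(\alpha k+1)}\,(L^{k})_{ij}\Biggr|\;\le\;\sum_{k=d(i,j)+1}^{\infty}\frac{t^{\alpha k}\,\|L\|_{2}^{\,k}}{\Gamma(\alpha k+1)},
\]
using $|(L^{k})_{ij}|\le\|L\|_{2}^{k}$. Factoring out $t^{\alpha(d(i,j)+1)}$ and re-indexing by $m=k-d(i,j)-1\ge 0$, the remaining series is majorized by $\|L\|_{2}^{\,d(i,j)+1}\,E_{\alpha}\!\bigl(t^{\alpha}\|L\|_{2}\bigr)$, which is finite and bounded on any neighborhood of $t=0$. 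This yields the uniform bound $O\!\bigl(t^{\alpha(d(i,j)+1)}\bigr)$ as $t\downarrow 0$, with an implicit constant depending only on $\alpha$, $\|L\|_{2}$, and $d(i,j)$.

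There is no serious obstacle: the proof is essentially combinatorial bookkeeping plus a Mittag--Leffler tail estimate. The only subtlety worth flagging is the cancellation argument at $k=d(i,j)$ — one must verify that every mixed product containing a factor of $D$ is killed because it has too few $A$-steps to reach $j$ from $i$ — and the observation that this is exactly what selects shortest walks through $A^{d(i,j)}$. Once that is in place, the scaling $t^{\alpha\,d(i,j)}$ and the identification of the leading contribution with topological shortest paths are immediate.
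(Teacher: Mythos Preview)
Your proposal is correct and follows essentially the same route as the paper's proof: both expand the Mittag--Leffler series, use the combinatorial fact that in $(D-A)^{k}$ only the $A$-factors move mass so that $(L^{k})_{ij}=0$ for $k<d(i,j)$ and $(L^{d(i,j)})_{ij}=(-1)^{d(i,j)}(A^{d(i,j)})_{ij}$, and then read off the leading term. Your explicit tail bound via $|(L^{k})_{ij}|\le\|L\|_{2}^{k}$ and comparison with $E_{\alpha}(t^{\alpha}\|L\|_{2})$ is a slight elaboration over the paper, which simply invokes absolute convergence.
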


\begin{proof}
The Mittag-Leffler function admits the operator series representation
\[
E_{\alpha}(-t^{\alpha}L)=\sum_{m=0}^{\infty}\frac{(-t^{\alpha})^{m}}{\Gamma(\alpha m+1)}\,L^{m},
\]
which converges absolutely for all $t\ge0$ since $L$ is bounded
(see, e.g., \cite{metzler2000random}).

Taking the $(i,j)$ entry yields 
\[
\bigl(E_{\alpha}(-t^{\alpha}L)\bigr)_{ij}=\sum_{m=0}^{\infty}\frac{(-t^{\alpha})^{m}}{\Gamma(\alpha m+1)}\,(L^{m})_{ij}.
\]

As in the classical heat-kernel case, write $L=D-A$. Any term contributing
to $(L^{m})_{ij}$ corresponds to a product of $m$ factors, each
equal to either $D$ or $-A$. Diagonal factors $D$ do not change
the vertex index, while each factor $A$ corresponds to a single edge
traversal. Hence, in order for $(L^{m})_{ij}$ to be nonzero, the
word must contain at least $d(i,j)$ factors of $A$. Consequently,
\[
(L^{m})_{ij}=0\qquad\text{for all }m<d(i,j).
\]

For $m=d(i,j)$, the only contributing word is $(-A)^{d(i,j)}$, since
any appearance of $D$ would prevent reaching $j$ in exactly $d(i,j)$
steps. Therefore, 
\[
(L^{d(i,j)})_{ij}=(-1)^{d(i,j)}(A^{d(i,j)})_{ij}.
\]

Substituting into the series above, the first nonzero term occurs
at $m=d(i,j)$, which proves the stated expansion. 
\end{proof}
\begin{thm}[Short-time selection of shortest paths]
\label{thm:Short-time-selection}Let $G$ be a finite connected graph and let $X(t)$ be the
time-fractional continuous-time random walk associated with the Caputo
fractional diffusion equation on $G$ (with $0<\alpha<1$). Let $N(t)$
be the number of jumps of $X(\cdot)$ up to time $t$, and let $d(i,j)$
be the graph distance between distinct vertices $i\neq j$. Then,
for every $i\neq j$, 
\[
\lim_{t\downarrow0}\mathbb{P}\!\bigl(N(t)=d(i,j)\,\big|\,X(t)=j,\ X(0)=i\bigr)=1.
\]
In particular, 
\[
\lim_{t\downarrow0}\mathbb{P}\!\bigl(\text{the jump sequence from \ensuremath{i} to \ensuremath{j} up to time \ensuremath{t}has length }d(i,j)\ \big|\ X(t)=j,\ X(0)=i\bigr)=1.
\]
That is, conditioned on arrival at $j$ at very short times, the process
selects a topological shortest path with probability tending to $1$. 
\end{thm}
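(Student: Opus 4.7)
The plan is to reduce the assertion to a short-time comparison between numerator and denominator of the conditional probability. Let $p_{ij}(t) := \mathbb{P}(X(t)=j \mid X(0)=i) = (E_\alpha(-t^\alpha L))_{ij}$, so that Theorem~\ref{thm:Shortest-path-dominance} gives the sharp asymptotic
\[
p_{ij}(t) = \frac{t^{\alpha d(i,j)}}{\Gamma(\alpha d(i,j)+1)}\,(A^{d(i,j)})_{ij} + O\bigl(t^{\alpha(d(i,j)+1)}\bigr).
\]
Writing $\mathbb{P}(N(t)=k \mid X(t)=j, X(0)=i) = q_k(t;i,j)/p_{ij}(t)$ where $q_k(t;i,j) := \mathbb{P}(N(t)=k,\,X(t)=j \mid X(0)=i)$, the result will follow if I show $q_{d(i,j)}(t;i,j) \sim p_{ij}(t)$ and $\sum_{k>d(i,j)} q_k(t;i,j) = O(t^{\alpha(d(i,j)+1)})$.

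The key step is to obtain short-time asymptotics for the baseline (unsubordinated) joint probabilities $\widetilde q_k(s;i,j) := \mathbb{P}(N^\star(s)=k,\,X_s=j \mid X_0=i)$, where $N^\star$ counts jumps of the Markov process $X$ with generator $-L$. Decomposing over trajectories $i=i_0\to i_1\to\cdots\to i_k=j$ with jump times $0<\tau_1<\cdots<\tau_k<s$ and exponential holding times with rates $d_{i_\ell}$, one gets
\[
\widetilde q_k(s;i,j) = \sum_{\text{paths}} A_{i_0 i_1}\cdots A_{i_{k-1} i_k}\!\!\int_{0<\tau_1<\cdots<\tau_k<s}\!\!\!\prod_{\ell=0}^{k} e^{-d_{i_\ell}(\tau_{\ell+1}-\tau_\ell)}\,d\tau_1\cdots d\tau_k,
\]
with $\tau_0=0$, $\tau_{k+1}=s$. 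Expanding the exponentials at $0$ yields $\widetilde q_k(s;i,j) = \frac{s^k}{k!}(A^k)_{ij}+O(s^{k+1})$, which vanishes identically for $k<d(i,j)$ because $(A^k)_{ij}=0$ in that range. Invoking subordination, $q_k(t;i,j) = \int_0^\infty \widetilde q_k(s;i,j)\,g_\alpha(s,t)\,ds$, and the stable-subordinator moment identity $\int_0^\infty s^k\,g_\alpha(s,t)\,ds = \frac{k!}{\Gamma(\alpha k+1)}\,t^{\alpha k}$ (equivalently, the moments of $E_t$) then transfers the expansion to
\[
q_k(t;i,j) = \frac{t^{\alpha k}}{\Gamma(\alpha k+1)}(A^k)_{ij} + O\bigl(t^{\alpha(k+1)}\bigr), \qquad k\ge d(i,j).
\]
The case $k=d(i,j)$ matches the leading term of $p_{ij}(t)$ exactly, so $q_{d(i,j)}/p_{ij}\to 1$.

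The main technical nuisance is controlling the tail $\sum_{k>d(i,j)} q_k(t;i,j)$ uniformly, so that the error terms in the expansions above can be summed. Here I use a stochastic-domination argument: because $G$ is finite, $N^\star(s)$ is dominated by a Poisson process with rate $d_{\max}:=\max_i d_i$, giving $\mathbb{P}(N^\star(s)\ge k) \le \mathbb{P}(\mathrm{Poisson}(d_{\max} s)\ge k) \le \frac{(d_{\max}s)^k}{k!}e^{d_{\max} s}$, and hence $\widetilde q_k(s;i,j)\le (d_{\max} s)^k/k!\cdot e^{d_{\max}s}$ on any bounded $s$-interval. Splitting the subordination integral at $s = t^{\alpha/2}$ and using the heavy-tail decay of $g_\alpha(\cdot,t)$ for the far part, one concludes $\sum_{k\ge d(i,j)+1} q_k(t;i,j) = O(t^{\alpha(d(i,j)+1)})$. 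Dividing by $p_{ij}(t) \asymp t^{\alpha d(i,j)}$ yields $O(t^\alpha)\to 0$, which both completes the proof that $\mathbb{P}(N(t)=d(i,j)\mid X(t)=j,X(0)=i)\to 1$ and, combined with $(A^{d(i,j)})_{ij}>0$ picking out precisely the topological shortest paths, establishes the final claim about the jump sequence.
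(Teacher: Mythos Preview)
Your proof is correct and, in fact, somewhat more careful than the paper's own argument, though both share the same skeleton: show that the $k=d(i,j)$ contribution dominates and that the remainder is $O(t^{\alpha(d+1)})$. The paper proceeds through the decoupled CTRW representation $\mathbb{P}(X(t)=j\mid X(0)=i)=\sum_m \mathbb{P}(N(t)=m)\,(P^m)_{ij}$, imports the asymptotic $\mathbb{P}(N(t)=m)=t^{\alpha m}/\Gamma(\alpha m+1)+O(t^{\alpha(m+1)})$ from \cite{BaeumerMeerschaert2001,MeerschaertNaneVellaisamy2011}, and reads off leading terms---leaving the control of the infinite tail $\sum_{m>d}$ implicit. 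You instead stay inside the subordinated picture: you decompose the baseline CTMC over paths with the correct degree-dependent exponential holding times, derive $\widetilde q_k(s;i,j)=\tfrac{s^k}{k!}(A^k)_{ij}+O(s^{k+1})$ from first principles, and transfer via the moment identity $\int_0^\infty s^k g_\alpha(s,t)\,ds=k!\,t^{\alpha k}/\Gamma(\alpha k+1)$. The Poisson domination (uniformization at rate $d_{\max}$) then supplies an explicit tail bound; note that you can dispense with the splitting at $s=t^{\alpha/2}$ entirely, since $\sum_{k\ge d+1}\widetilde q_k(s;i,j)\le\mathbb{P}(N^\star(s)\ge d{+}1)\le (d_{\max}s)^{d+1}/(d{+}1)!$ holds for \emph{all} $s\ge 0$, and one more application of the moment identity gives $O(t^{\alpha(d+1)})$ directly. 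What you gain is a self-contained argument that handles vertex-dependent holding rates explicitly and provides the summability the paper leaves to the reader; what the paper gains is brevity by outsourcing the jump-count asymptotics to the cited references.
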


\begin{proof}
Write $d=d(i,j)$. Let $Y_{m}$ be the embedded discrete-time jump
chain of $X(t)$. The fractional CTRW representation yields 
\[
\mathbb{P}(X(t)=j\mid X(0)=i)=\sum_{m=0}^{\infty}\mathbb{P}(N(t)=m)\,\mathbb{P}(Y_{m}=j\mid Y_{0}=i),
\]
see \cite{BaeumerMeerschaert2001,MeerschaertNaneVellaisamy2011}.
Let $P$ be the one-step transition matrix of $Y_{m}$, so that $\mathbb{P}(Y_{m}=j\mid Y_{0}=i)=(P^{m})_{ij}$.
By definition of graph distance, $(P^{m})_{ij}=0$ for all $m<d$.
Hence, 
\[
\mathbb{P}(X(t)=j\mid X(0)=i)=\sum_{m=d}^{\infty}\mathbb{P}(N(t)=m)\,(P^{m})_{ij}.
\]

For the time-fractional walk, the jump-count distribution satisfies
the short-time scaling 
\[
\mathbb{P}(N(t)=m)=\frac{t^{\alpha m}}{\Gamma(\alpha m+1)}+O\bigl(t^{\alpha(m+1)}\bigr),\qquad t\downarrow0,
\]
see \cite{MeerschaertNaneVellaisamy2011}. Therefore the leading contribution
comes from $m=d$, and we obtain 
\[
\mathbb{P}(X(t)=j\mid X(0)=i)=\frac{t^{\alpha d}}{\Gamma(\alpha d+1)}(P^{d})_{ij}+O\bigl(t^{\alpha(d+1)}\bigr).
\]
Similarly, 
\[
\mathbb{P}\bigl(N(t)=d,\ X(t)=j\mid X(0)=i\bigr)=\frac{t^{\alpha d}}{\Gamma(\alpha d+1)}(P^{d})_{ij}+O\bigl(t^{\alpha(d+1)}\bigr).
\]
Dividing the two expansions yields 
\[
\mathbb{P}\bigl(N(t)=d\,\big|\,X(t)=j,\ X(0)=i\bigr)=1+O(t^{\alpha}),
\]
which proves the claim.

Finally, on the event $\{N(t)=d,\,X(t)=j,\,X(0)=i\}$ the jump sequence
$(Y_{0},\dots,Y_{d})$ has length $d$ and connects $i$ to $j$,
hence it is a topological shortest path. 
\end{proof}
\begin{rem}
[Physical interpretation of short-time path selection] The previous
results (Theorems~\ref{thm:Shortest-path-dominance} and~\ref{thm:Short-time-selection})
show that, for both classical and time-fractional diffusion on graphs,
the short-time behavior is governed by topological constraints rather
than by long-time transport mechanisms. In the fractional case, memory
effects and heavy-tailed waiting times manifest themselves through
the slower time scaling $t^{\alpha d(i,j)}$ of transition probabilities;
however, they do not alter the mechanism by which mass first propagates
across the graph. At very short times, the process has insufficient
opportunity to perform redundant or backtracking moves, and any realization
that reaches a vertex $j$ from $i$ must do so using the minimal
number of jumps permitted by the graph distance. Thus, topological shortest
paths dominate not because they are energetically or entropically
preferred, but because they are the only dynamically admissible routes
in the short-time regime. Memory affects \emph{when} such paths become
observable, but not \emph{which} paths contribute to the leading-order
behavior. This provides a precise mathematical explanation for the
observed agreement between diffusion-based distances and graph distances
at very small times, even in the presence of anomalous (fractional)
temporal dynamics. 
\end{rem}

Using Theorem~\ref{thm:Shortest-path-dominance} for two adjacent vertices $v,w$, we can compute the first-order term of the subdiffusive distance for $t \to 0$
\begin{equation}
\begin{aligned}\mathcal{\mathscr{D}}_{\alpha,t}\left(v,w\right) & =E_{\alpha}(-t^{\alpha}L)_{vv}+E_{\alpha}(-t^{\alpha}L)_{ww}-2E_{\alpha}(-t^{\alpha}L)_{vw}\\
 & =2-\frac{t^{\alpha}}{\Gamma(\alpha+1)}\left(L_{vv}+L_{ww}-2L_{vw}\right)+O(t^{2\alpha})\\
 & =2-\frac{t^{\alpha}}{\Gamma(\alpha+1)}\left(d_{v}+d_{w}+2\right)+O(t^{2\alpha}).
\end{aligned}
\end{equation}
Note that usually there are multiple topological shortest paths between
the same pair of vertices. To understand which of these shortest paths
coincides with the subdiffusive shortest path at $t\to0$ we find
\begin{equation}
W_{\alpha,t}\left(v,w\right)=\sqrt{2}-\frac{\sqrt{2}}{4}\frac{t^{\alpha}}{\Gamma(\alpha+1)}(d_{v}+d_{w}+2)+O\left(t^{2\alpha}\right).
\end{equation}
Let $P$ be a path having $m(P)$ edges. In this
case the sum of the edge weights in the path is
\[
W_{\alpha,t}\left(P\right)=\sqrt{2}\,m(P)\left(1+\frac{t^{\alpha}}{\Gamma(\alpha+1)}\right)-\frac{\sqrt{2}}{4}\frac{t^{\alpha}}{\Gamma(\alpha+1)}\sum_{k=1}^{m\left(P\right)}\delta_{e_{k}}+O\left(t^{2\alpha}\right),
\]
where $\delta_{e_{k}}=d_{i}+d_{j}-2$ is the degree of the edge $e_{k}=\left(i,j\right)$.
Consequently, among all the topological shortest paths, the one with
the largest sum of edge degrees is chosen by the subdiffusive particle,
as it has the smallest subdiffusive length. 

\subsection{Subdiffusive shortest paths on a geometric graph}

We will analyze computationally some of the previous analytical results
by considering a Gabriel graph with $n=600$ vertices and 1156 edges.
We use this graph because it is geometric in the sense that its vertices
are embedded in $\mathbb{R}^{d}$ and it can be easily visualized,
specially for illustrating paths. Gabriel graphs are defined as follows.
\begin{defn}
Let $P\subset\mathbb{R}^{d}$ be a finite set of points, referred
to as \emph{generators}. The Gabriel graph $G_{G}=(V_{G},E_{G})$
associated with $P$ has vertex set $V_{G}=P$. Two distinct vertices
$v,w\in P$ are connected by an edge $\{v,w\}\in E_{G}$ if and only
if the closed ball having the segment $[vw]$ as its diameter contains
no other points of $P$. 
\end{defn}

In this work, we restrict our attention to the planar case $d=2$
and use a rectangle with length to width proportion of 2:1 instead
of a square to embed the points. We geometrize the Gabriel graph using
both the subdiffusive communicability distance $\mathcal{\mathscr{D}}_{\alpha,t}(v,w)$
and its sum-of-exponentials (SOE) approximation. Specifically, we
use 
\[
F_{J}(t,L)=\sum_{j=1}^{J}a_{j}e^{-t^{\alpha}b_{j}L},
\]
and introduce the corresponding approximate squared distance 
\begin{equation}
\mathcal{\mathscr{\widetilde{D}}}_{J,\alpha,t}(v,w)=\left(F_{J}(t,L)\right)_{vv}+\left(F_{J}(t,L)\right)_{ww}-2\left(F_{J}(t,L)\right)_{vw}.
\end{equation}

Throughout this section, we consider the case $\alpha=0.85$. Using
the square roots of both $\mathcal{\mathscr{D}}_{\alpha,t}(v,w)$
and $\mathcal{\mathscr{\widetilde{D}}}_{J,\alpha,t}(v,w)$, we construct the weighted graph as in Definition~\ref{def:weighted-graph} and compute the subdiffusive shortest paths.

\subsubsection{Experimental results}
We begin by reporting the topological shortest paths (TSP) between
two vertices located near opposite corners of the Gabriel graph (see
Fig.~\ref{fig:SOE paths multiple J}(A)). As is typical---even for
planar graphs such as Gabriel graphs---there exist multiple TSPs
between a given pair of vertices. These TSPs are colored according
to the \emph{average edge degree} along the path, defined as 
\[
\delta_{e(v,w)}=d_{v}+d_{w}-2,
\]
where $d_{j}$ denotes the degree of vertex $j$.

\begin{figure}
\begin{centering}
\subfloat[topological shortest paths]{\begin{centering}
\includegraphics[width=0.49\textwidth]{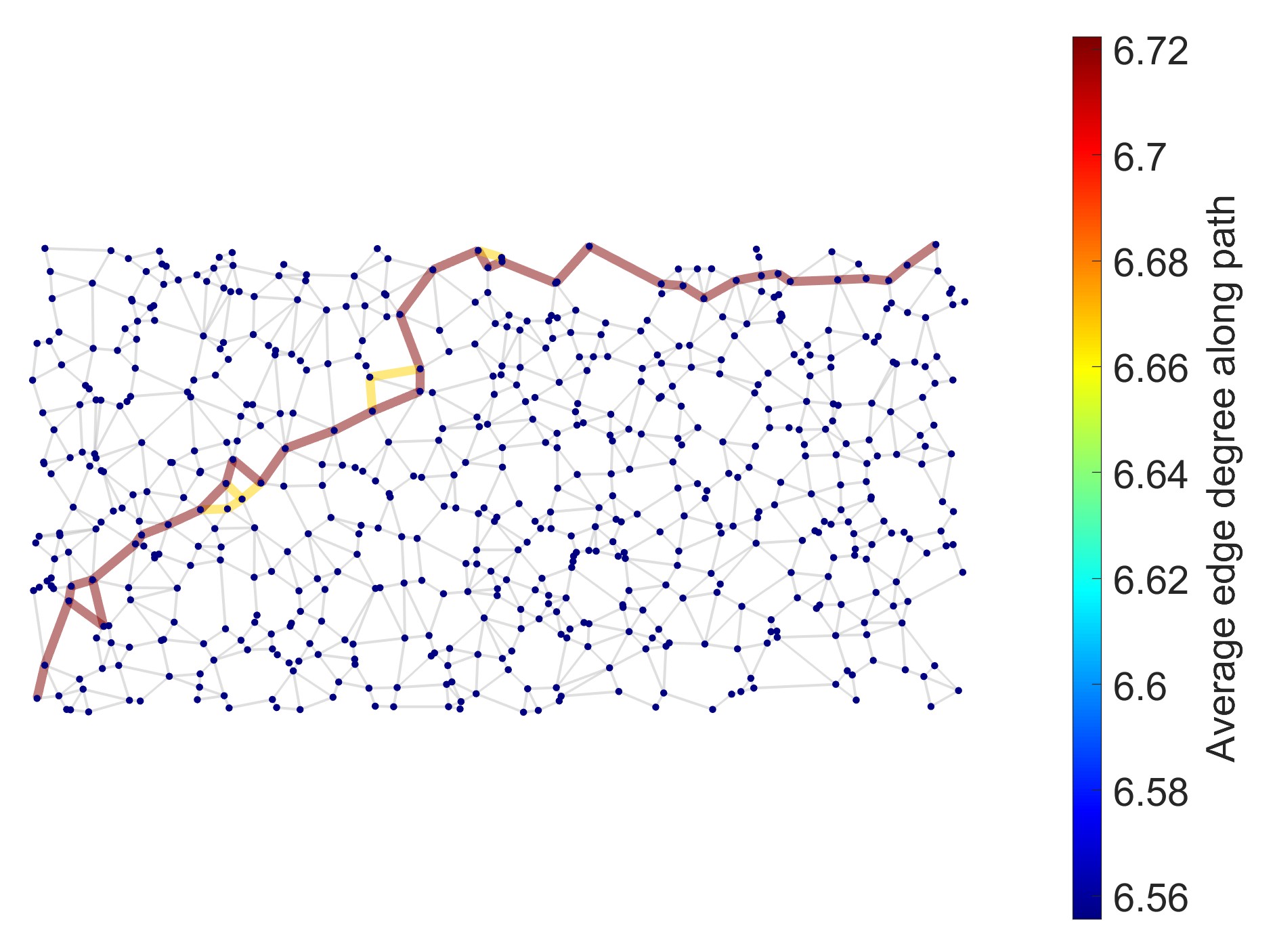}
\par\end{centering}
}\subfloat[$J=1$]{\begin{centering}
\includegraphics[width=0.49\textwidth]{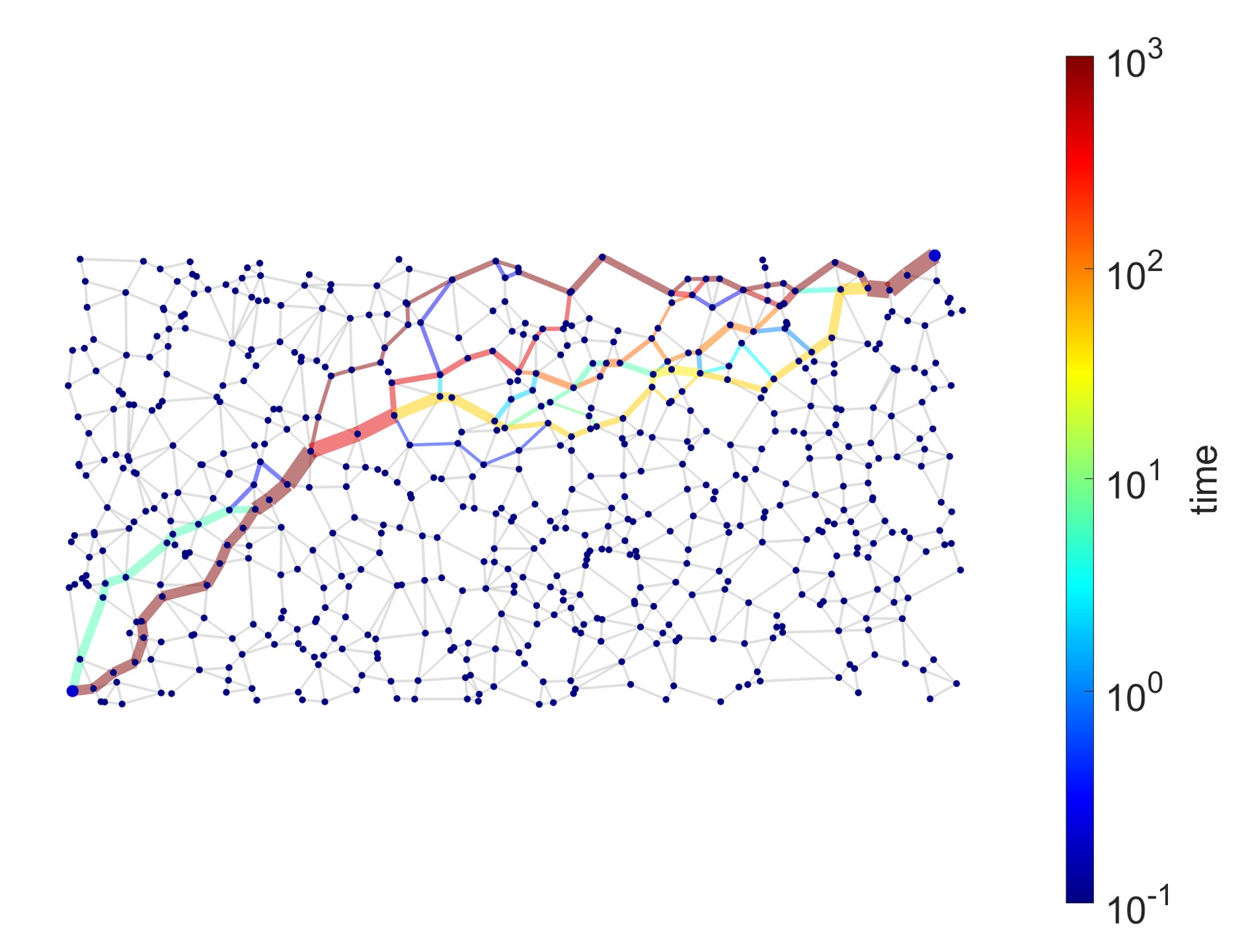}
\par\end{centering}
}
\par\end{centering}
\begin{centering}
\subfloat[$J=10$]{\begin{centering}
\includegraphics[width=0.49\textwidth]{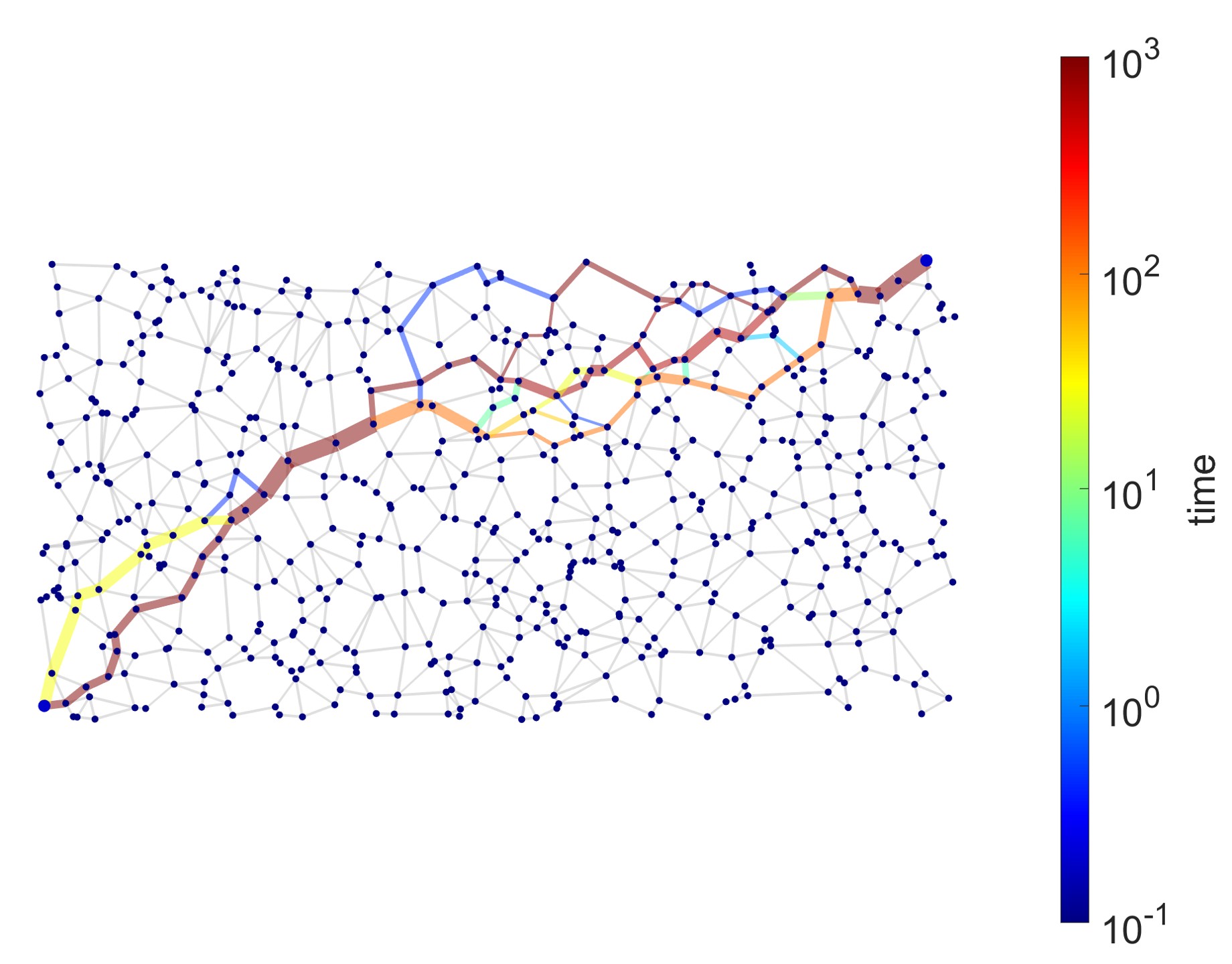}
\par\end{centering}
}\subfloat[$J=20$]{\begin{centering}
\includegraphics[width=0.49\textwidth]{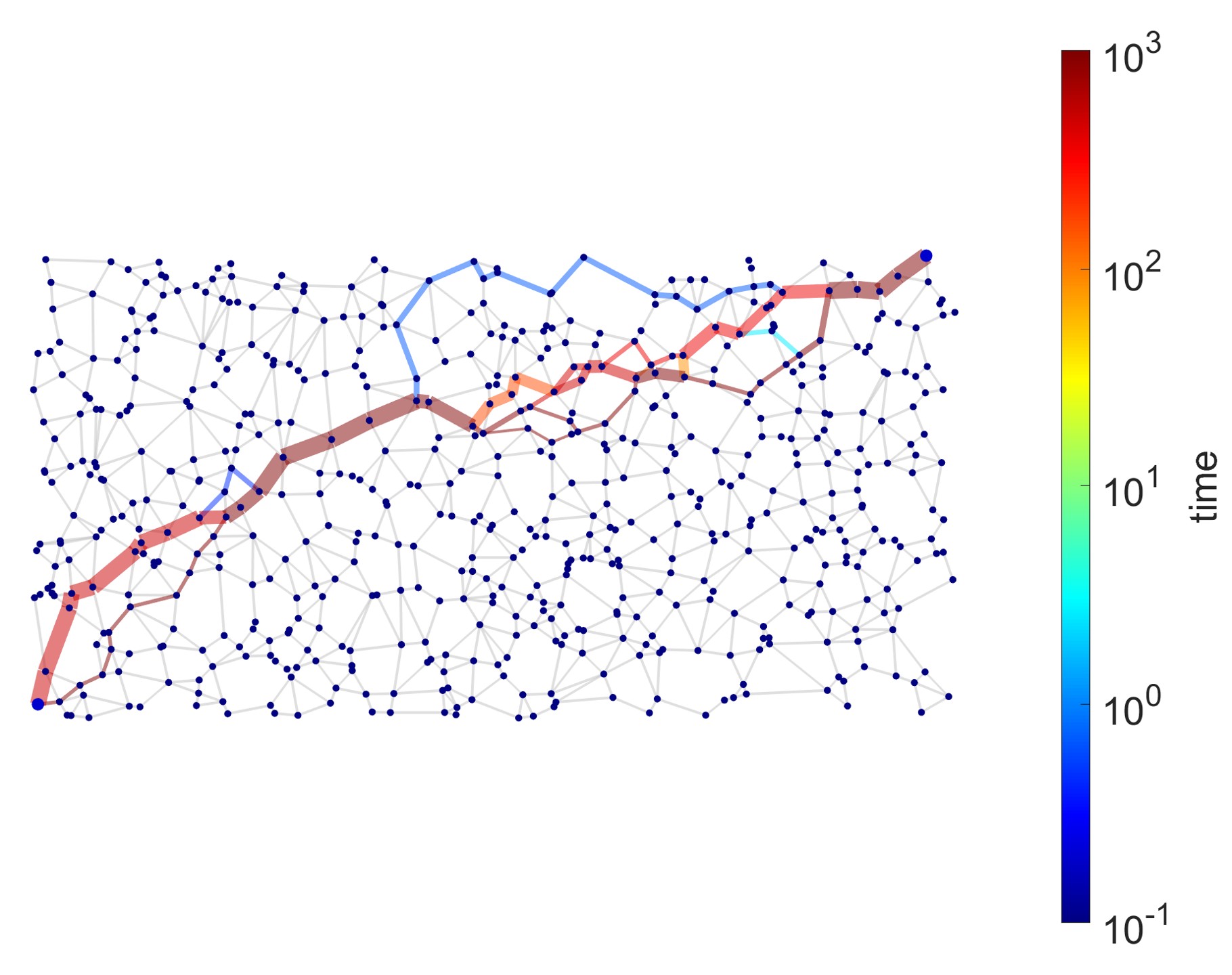}
\par\end{centering}
}
\par\end{centering}
\begin{centering}
\subfloat[$J=40$]{\begin{centering}
\includegraphics[width=0.49\textwidth]{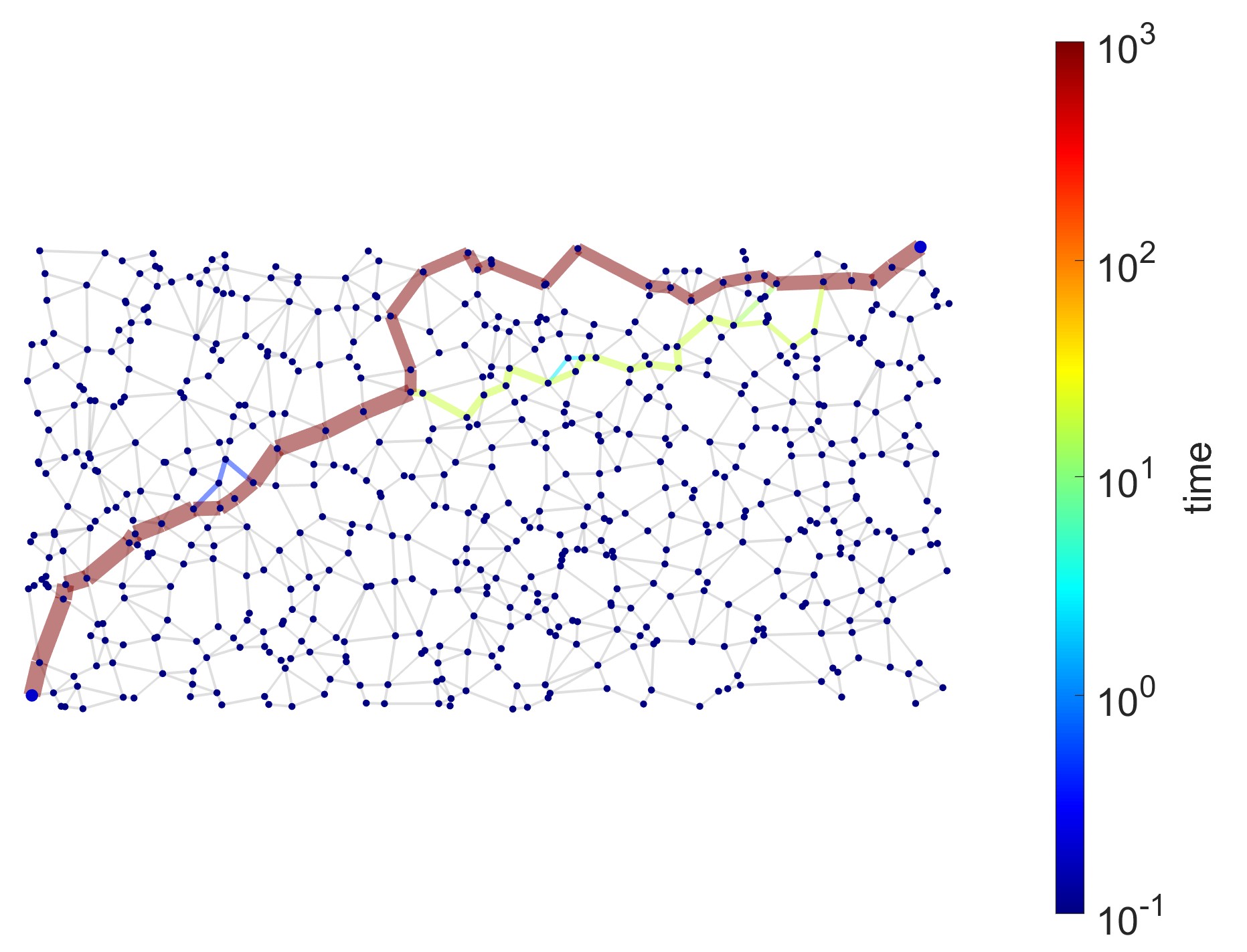}
\par\end{centering}
}\subfloat[subdiffusive shortest paths]{\begin{centering}
\includegraphics[width=0.49\textwidth]{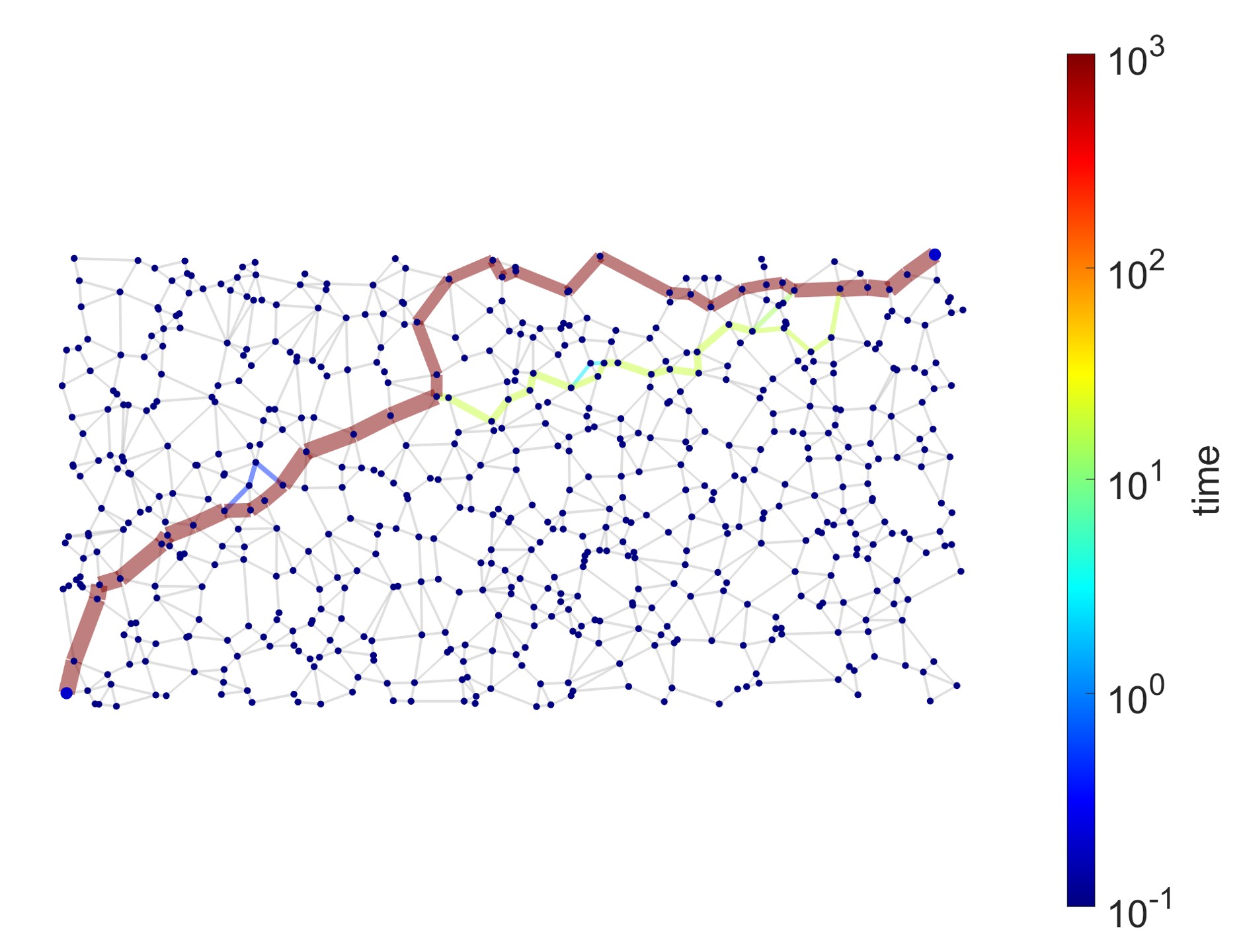}
\par\end{centering}
}
\par\end{centering}
\caption{(A) Illustration of all topological shortest paths between a given
pair of vertices in the example Gabriel graph. Coloring is by edge
degrees. Shortest paths of SOE approximation for $J=1,10,20,40$ 
(B, C, D, E respectively) and $\alpha=0.85$. (F) Subdiffusive shortest
path between the same pair of vertices as before obtained from the
Mittag-Leffler function. In (B)-(F) the paths are colored according
to the time, which is taken as $0.1\protect\leq t\protect\leq1000$
with 300 steps in the interval, at which they are observed and the
thickness of the edges in proportional to the number of times they
are by the corresponding paths.}

\label{fig:SOE paths multiple J}
\end{figure}

We next consider the shortest paths induced by $\mathcal{\mathscr{\widetilde{D}}}_{J,\alpha,t}(v,w)$
for $J=1$ (see Fig.~\ref{fig:SOE paths multiple J}(B)). This case
corresponds to a standard diffusive process, since it involves a single
exponential term, which is the solution of the classical diffusion
equation. We observe that, while at early times the diffusive paths
coincide with the TSP as predicted by Theorems~\ref{thm:Shortest-path-dominance}
and~\ref{thm:Short-time-selection}, they progressively deviate as
time increases.

As the number of exponentials $J$ in the SOE approximation increases,
the corresponding shortest paths converge toward the TSP, as illustrated
in panels (C)--(E) of Fig.~\ref{fig:SOE paths multiple J}. The
exact solution based on the Mittag-Leffler function is shown in
Fig.~\ref{fig:SOE paths multiple J}(F), and clearly demonstrates
that the subdiffusive shortest paths coincide with the TSPs. Moreover,
the limiting paths correspond to those TSPs with the largest average
edge degree, in agreement with the analytical results derived in the
previous section. Physically, the situation is as follows. At $t\rightarrow0$
the subdiffusive shortest path coincides with the  topological shortest path
as proved by Theorems~\ref{thm:Shortest-path-dominance} and~\ref{thm:Short-time-selection}.
However, in the subdiffusive case, as $t$ evolves, the particles
remember the previous path used to navigate between two vertices,
i.e., the shortest topological path, and repeat it. The result is
the repeated use of these paths as shown in Fig.~\ref{fig:SOE paths multiple J}(F).

The time evolution of the vector probing errors, $\mathrm{relerr}(u_{0}(t))$
and $\mathrm{masserr}(u_{0}(t))$, is displayed in Fig.~\ref{fig:error_probing}(A).
Fixing $J$, we can see the mass error decreases with time, while
the relative error remains above a certain threshold, which indicates
that the two solutions $u^{\star}(t)$ and $u^{\mathrm{SOE}}(t)$ are
meaningfully distinct. In Fig.~\ref{fig:error_probing}(B), the
time-averages of the vector probing errors and the operator error
are plotted as a function of $J$. We can see that all the errors
decay. The relative and mass errors reach an accuracy of $10^{-9}$,
while the operator error stagnates $10^{-4}$, due to the larger errors
in floating-point computations of the operators $F_{J}(t,L)$.
\begin{figure}
\centering %
\begin{tabular}{cc}
\includegraphics[width=0.47\textwidth,height=0.25\textheight]{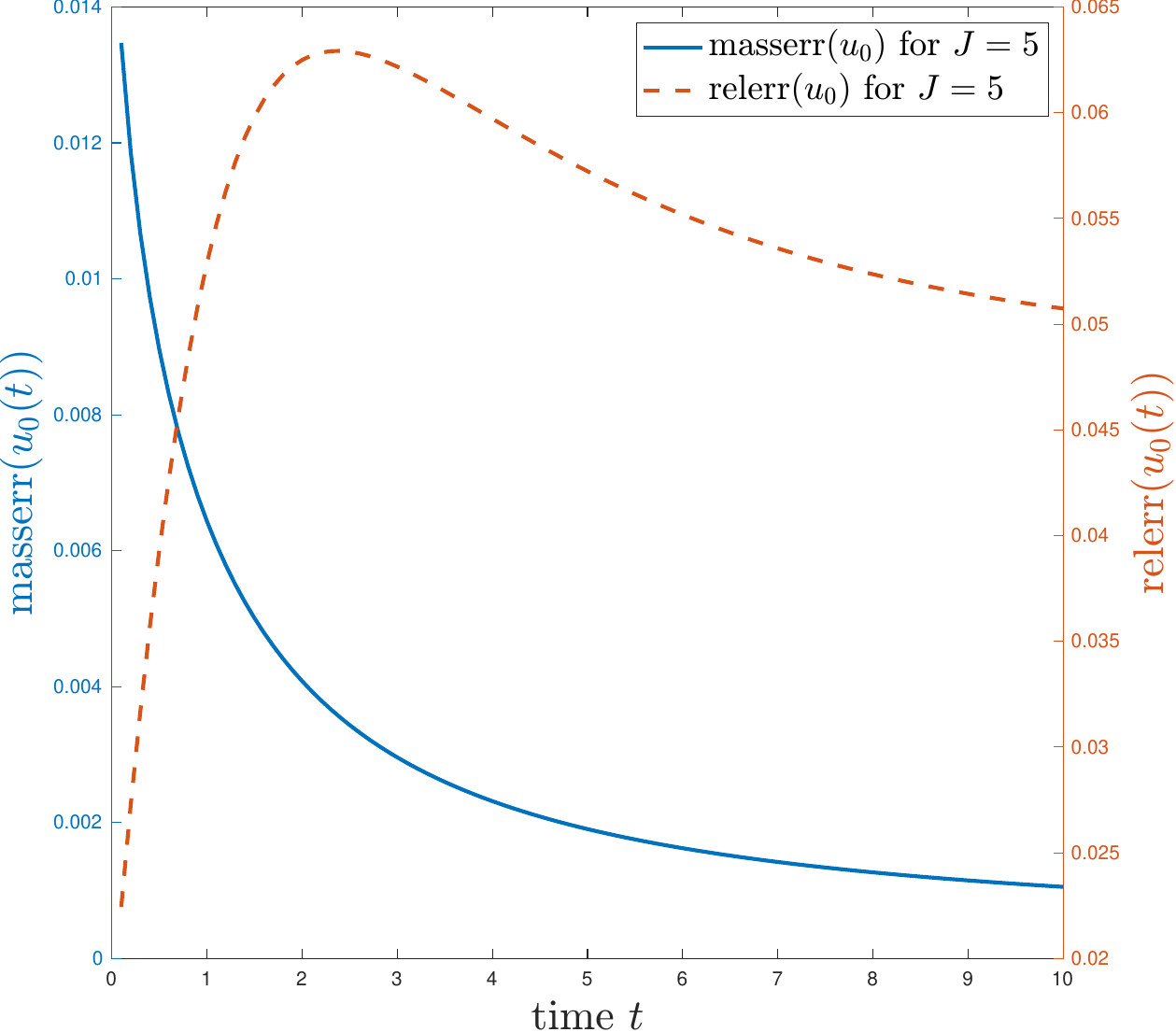} & \includegraphics[width=0.47\textwidth,height=0.25\textheight]{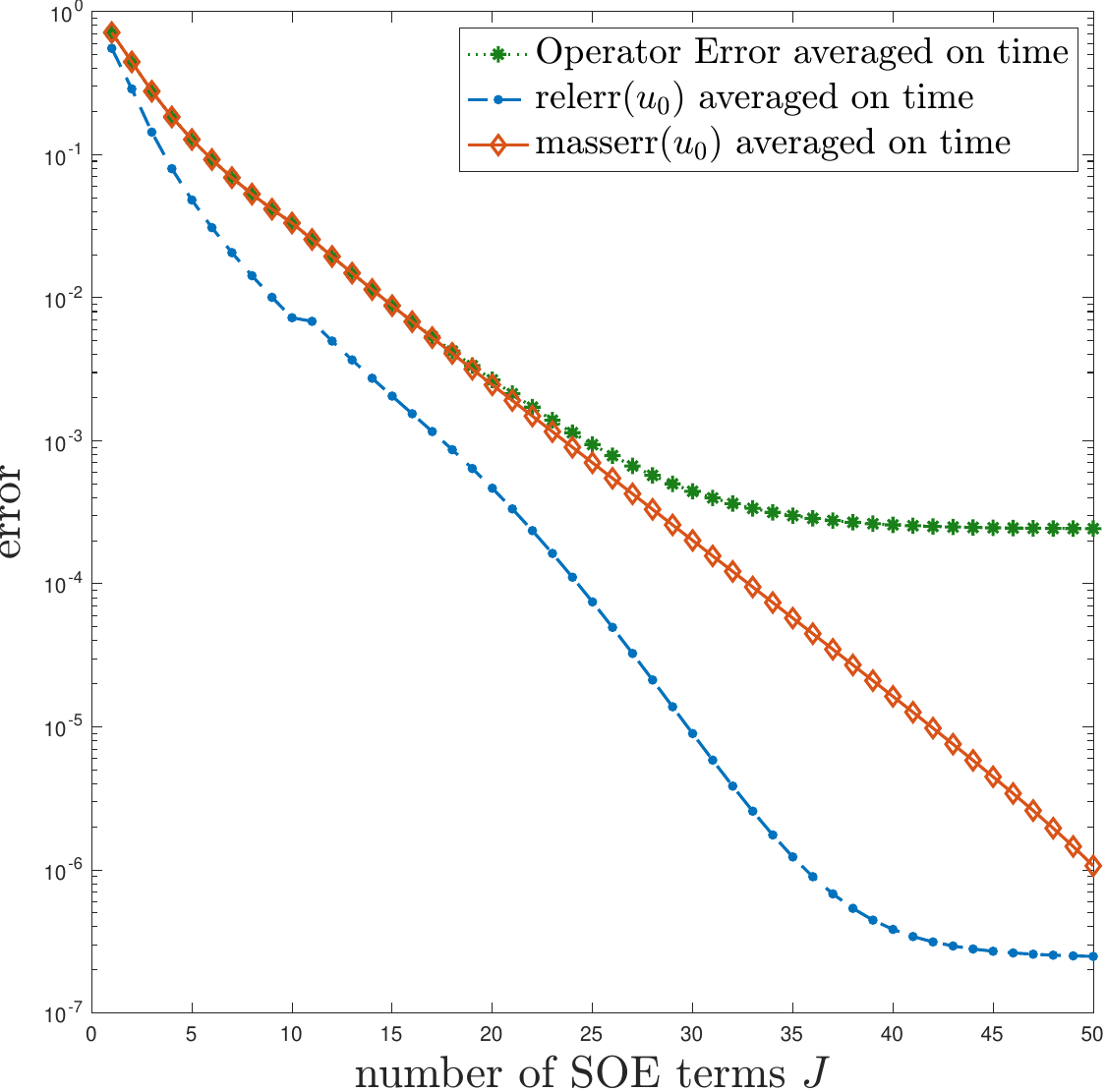}\tabularnewline
%scale by height as figures have different widths. If need, width proportion is 0.5 to 0.446
(A) & (B)\tabularnewline
\end{tabular}\caption{(A) relative (blue) and mass (orange) errors of the probing vector
$u(t)$. Comparison of the solutions dynamics with exact Mittag-Leffler
and the SOE approximation with $J=5$. (B) average on time of the
operator (green), relative (blue), and mass (orange) errors as function
of $J$. Average is taken on 300 time instants in the interval $0.1\protect\leq t\protect\leq1000$.}
\label{fig:error_probing}
\end{figure}
\subsubsection{Memory reinforcement of past states}
It is clear that increasing $J$ improves the accuracy of the SOE
approximation to the Mittag-Leffler function, leading to convergence
of the corresponding shortest paths. To interpret this convergence
from a physical perspective, recall that the SOE represents a superposition
of diffusive processes. Let $b_{p}=\max\{b_{j}:1\le j\le J\}$. Then
we can rewrite the SOE operator as
\begin{equation}
F_{J}(t,L)=\sum_{j=1}^{J}a_{j}e^{-b_{j}t^{\alpha}L}=\sum_{j=1}^{J}a_{j}e^{-b_{p}\frac{b_{j}}{b_{p}}t^{\alpha}L}=\sum_{j=1}^{J}a_{j}e^{-b_{p}t_{j}^{\alpha}L},
\end{equation}
where the effective time instants are given by 
$t_{j}=\left(\frac{b_{j}}{b_{p}}\right)^{\frac{1}{\alpha}}t$.

In the example graph, the maximum diffusion speed is $b_{p}=b_{2}=1.53$
(see Table~\ref{tab:coeff}). Each term $e^{-b_{p}t_{j}^{\alpha}L}$
therefore represents a diffusion process with speed $b_{p}$, evaluated
at a slowed time $t_{j}^{\alpha}$. Consequently, subdiffusion can be interpreted
not only as a superposition of diffusion processes with different
speeds, but also as the sampling of a single diffusion process at
multiple past time instants.

\begin{table}
\caption{The largest ten coefficients by magnitude of $a_{j}$ of the SOE approximation
as in the general case of Corollary~\ref{cor:window} for $\alpha=0.85$.
These coefficients are used for the SOE approximation for the example
Gabriel graph are displayed.}
\begin{tabular}{ccccccccccc}
\hline 
%$a_j$ & 0.286 & 0.269 & 0.167 & 0.0941 & 0.0558 & 0.0352 & 0.0234 & 0.0161 & 0.0114 & 0.00829\\
% $b_j$ & 1.19 & 1.53 & 0.93 & 0.725 & 0.565 & 0.44 & 0.343 & 0.267 & 0.208 & 0.162 
$j$ & 1 & 2 & 3 & 4 & 5 & 6 & 7 & 8 & 9 & 10\tabularnewline
\hline 
$a_{j}$ & 0.286 & 0.269 & 0.167 & 0.094 & 0.056 & 0.035 & 0.0234 & 0.016 & 0.011 & 0.008\tabularnewline
$b_{j}$ & 1.19 & 1.53 & 0.93 & 0.725 & 0.57 & 0.44 & 0.34 & 0.27 & 0.21 & 0.16\tabularnewline
\hline 
\end{tabular}\label{tab:coeff}
\end{table}

This behavior reflects the intrinsic memory property of subdiffusion.
Unlike classical diffusion, whose evolution depends solely on the
current state, subdiffusive dynamics depend on the entire history
of the process. As shown in Fig.~\ref{fig:SOE paths multiple J},
increasing $J$ incorporates a larger number of past states, allowing
the particle to remember its early-time behavior. In this sense,
both the Caputo fractional derivative and subdiffusion can be viewed
as time-averaged processes over the system's past evolution.

The fractional order $\alpha$ controls the strength of this memory
effect. As $\alpha\to1$, the fractional diffusion equation reduces
to the classical diffusion equation, and memory effects vanish. Conversely,
as $\alpha\to0$, greater weight is assigned to earlier times. This
behavior is illustrated in Fig.~\ref{fig:SOE memory}, where we plot
the dissimilarities of the subdiffusive shortest paths with respect to 
the topogical shortest path for different values of $\alpha$ and $J$. The dissimilarities are computed 
with the Levenshtein distance (also known as \emph{edit distance}), which counts the minimum
number of edits (insertions, deletions, or substitutions) needed
to transform the edge sequence of one path into the other. We observe that smaller values
of $\alpha$ require fewer SOE terms to recover the Mittag-Leffler
shortest paths. For instance, when $J=20$, only two distinct paths
are observed for $\alpha=0.25$, whereas multiple paths persist for
$\alpha=0.85$.

The contrast between the diffusive (Fig.~\ref{fig:SOE paths multiple J}(A)) and subdiffusive (Fig.~\ref{fig:SOE paths multiple J}(F)) behaviors is striking. When memory is not present, the particles are inclined to explore the graph, as the multiple possible paths show. The incipience of memory allows them to recall the topological shortest path, and gravitate towards it. Memory is strengthened either by lowering the parameter $\alpha$, or by considering more terms, i.e. by increasing $J$, as each term is a recall to a previous time instant. Stronger memory means that the Levensthein distance to the TSP is smaller, as shown in Fig.~\ref{fig:SOE paths multiple J}.

\begin{figure}
\centering{}\includegraphics[width=0.5\textwidth]{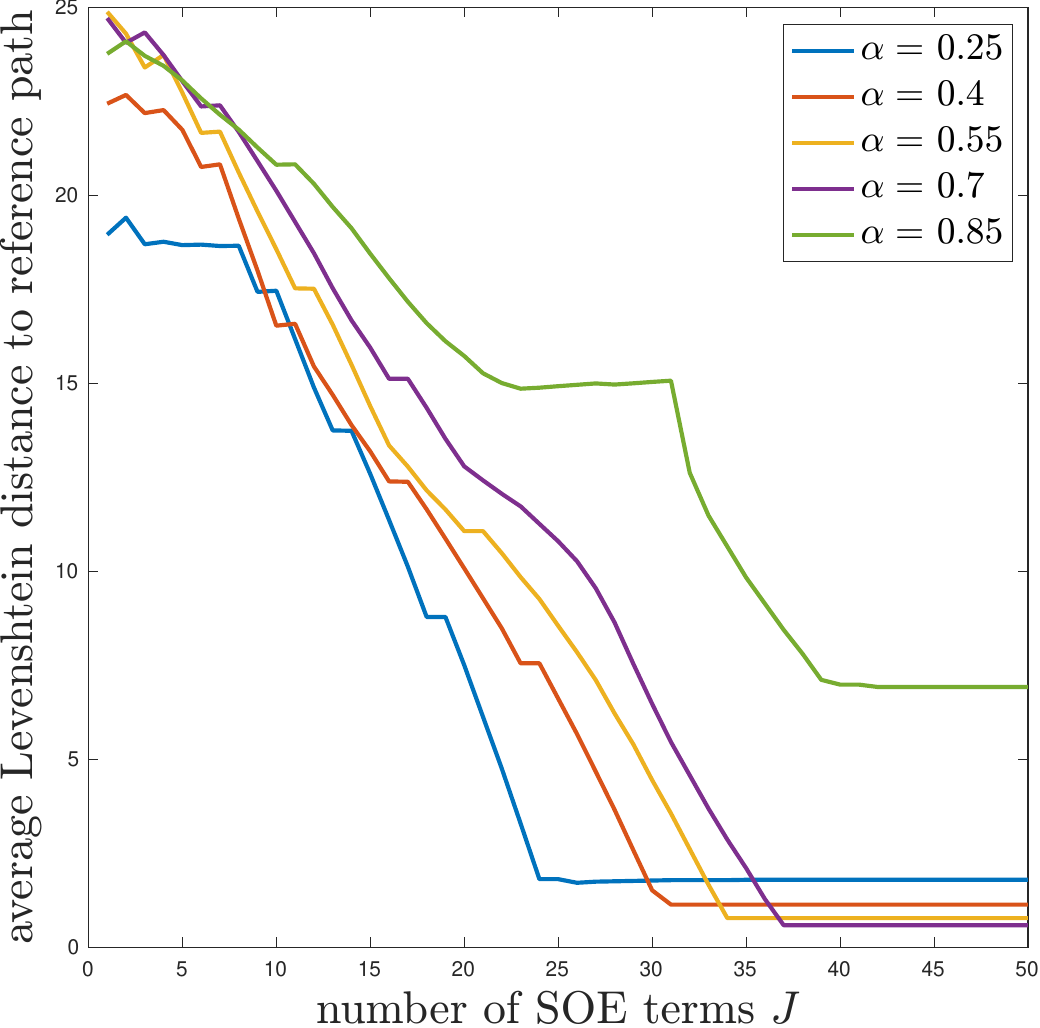}
\caption{Comparison of the average Levenshtein distance between the shortest
subdiffusive paths and the topological shortest path in the example Gabriel graph, for five
values of $\alpha$. The average is taken on 300 time instants in the
interval $0.1\protect\leq t\protect\leq1000$.}
\label{fig:SOE memory}
\end{figure}

\section{Caputo fractional derivative and memory}
\label{sec:remote-early-memory}

We have seen that memory dictates the behavior of subdiffusion and the underlying fractional-time differential equation. In this section, we analyze the memory contributions present in the Caputo derivative, distinguishing between the influence of the remote past and more recent times. We show an underlying connection between these effects and the convexity of the solution to the subdiffusion equation. 

\begin{defn}
	Consider the time-fractional Caputo derivative of a function $x(t)$ as defined in
	\S\ref{sec:prelim}. For odd $k$ let the interval $\left[0,t\right]$ be subdivided
	into $k$ subintervals $\left[t_{j},t_{j+1}\right]$ for $j=0,\ldots,k-1$
	of equal length $h=t/k$. Let us define the following specific subintervals in $\left[0,t\right]$: 
	
	\begin{itemize}
		\item Remote past: $\mathcal{R}\left(k,\alpha\right)=\dfrac{h^{1-\alpha}}{\varGamma\left(3-\alpha\right)}\left(\left(k-1\right)^{2-\alpha}-\left(k+\alpha-2\right)k^{1-\alpha}\right)x'\left(0\right);$
		\item Late past: 
		\[\mathcal{L_{R}}\left(k,\alpha\right)=\dfrac{h^{1-\alpha}}{\varGamma\left(3-\alpha\right)}\sum_{j=1}^{\left(k-1\right)/2}\left(\left(k-j+1\right)^{2-\alpha}-2\left(k-j\right)^{2-\alpha}+\left(k-j-1\right)^{2-\alpha}\right)x'\left(t_{j}\right);\]
		\item Early past: \[\mathcal{L_{P}}\left(k,\alpha\right)=\dfrac{h^{1-\alpha}}{\varGamma\left(3-\alpha\right)}\sum_{j=\left(k+1\right)/2}^{k-1}\left(\left(k-j+1\right)^{2-\alpha}-2\left(k-j\right)^{2-\alpha}+\left(k-j-1\right)^{2-\alpha}\right)x'\left(t_{j}\right);\]
		\item Present: 
		$\mathcal{P}\left(t\right)=\dfrac{h^{1-\alpha}}{\varGamma\left(3-\alpha\right)}x'\left(t\right).$
		
	\end{itemize}
\end{defn}

Then, we have the following result proved by Odibat \cite{odibat2006approximations}.
\begin{thm}
	The fractional Caputo derivative \textup{$D_{t}^{\alpha}x\left(t\right)$
		can be expressed as}
	\begin{equation}
		\begin{split}D_{t}^{\alpha}x\left(t\right) & =\mathcal{R}\left(k,\alpha\right)+\mathcal{L_{R}}\left(k,\alpha\right)+\mathcal{L_{P}}\left(k,\alpha\right)+\mathcal{P}\left(t\right)-E_{C}\left(f,h,\alpha\right)\end{split}
		,\label{eq:trapezoidal}
	\end{equation}
	where $E_{C}\left(f,h,\alpha\right)\leq\mathcal{O}\left(h^{2}\right)$
	is the error term.
\end{thm}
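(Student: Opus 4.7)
The plan is to view the claim as a composite \emph{product trapezoidal} quadrature applied to the weakly singular Caputo integral. Starting from
\[
D_t^\alpha x(t) \;=\; \frac{1}{\Gamma(1-\alpha)}\sum_{j=0}^{k-1}\int_{t_j}^{t_{j+1}}\frac{x'(\tau)}{(t-\tau)^\alpha}\,d\tau,
\]
I would, on each subinterval $[t_j,t_{j+1}]$, replace $x'(\tau)$ by its linear interpolant $L_j(\tau)=\tfrac{t_{j+1}-\tau}{h}x'(t_j)+\tfrac{\tau-t_j}{h}x'(t_{j+1})$, and integrate the product $L_j(\tau)(t-\tau)^{-\alpha}$ exactly. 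This is the natural adaptation of the trapezoidal rule to a kernel that is integrable singular at $\tau=t$: using the elementary antiderivatives
\[
\int_{t_j}^{t_{j+1}}(t-\tau)^{-\alpha}d\tau = \frac{h^{1-\alpha}}{1-\alpha}\bigl[(k-j)^{1-\alpha}-(k-j-1)^{1-\alpha}\bigr],\qquad
\int_{t_j}^{t_{j+1}}(t-\tau)^{1-\alpha}d\tau = \frac{h^{2-\alpha}}{2-\alpha}\bigl[(k-j)^{2-\alpha}-(k-j-1)^{2-\alpha}\bigr],
\]
together with $t_{j+1}-\tau=(t_{j+1}-t)+(t-\tau)$ and $\tau-t_j=(t-t_j)-(t-\tau)$, all required integrals reduce to explicit combinations of powers of $(k-j)$.

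Next I would collect terms by $x'(t_j)$. For $0<j<k$ the coefficient of $x'(t_j)$ is the sum of a right-endpoint contribution from $[t_{j-1},t_j]$ and a left-endpoint contribution from $[t_j,t_{j+1}]$. A short algebraic simplification (factoring $m^{2-\alpha}=m\cdot m^{1-\alpha}$ where appropriate and using $\tfrac1{1-\alpha}-\tfrac1{2-\alpha}=\tfrac1{(1-\alpha)(2-\alpha)}$ together with $\Gamma(3-\alpha)=(2-\alpha)(1-\alpha)\Gamma(1-\alpha)$) collapses everything into the second difference
\[
\frac{h^{1-\alpha}}{\Gamma(3-\alpha)}\bigl[(k-j+1)^{2-\alpha}-2(k-j)^{2-\alpha}+(k-j-1)^{2-\alpha}\bigr],
\]
which is exactly the coefficient appearing in $\mathcal{L}_R$ and $\mathcal{L}_P$. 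Splitting the interior index range at the midpoint $(k-1)/2$ (the odd-$k$ hypothesis is used here to split cleanly) partitions this interior sum into the late-past block and the early-past block. For $j=k$, only the right-endpoint contribution of $[t_{k-1},t_k]$ survives, and with $k-j=0$ it reduces immediately to $h^{1-\alpha}/\Gamma(3-\alpha)$, giving $\mathcal P(t)$. The main bookkeeping obstacle is the boundary term at $j=0$: only a single left-endpoint contribution is present, and I would have to verify by direct expansion that
\[
-\tfrac{(k-1)}{1-\alpha}\bigl[k^{1-\alpha}-(k-1)^{1-\alpha}\bigr]+\tfrac{1}{2-\alpha}\bigl[k^{2-\alpha}-(k-1)^{2-\alpha}\bigr]=\tfrac{1}{(1-\alpha)(2-\alpha)}\bigl[(k-1)^{2-\alpha}-(k+\alpha-2)k^{1-\alpha}\bigr],
\]
which then matches $\mathcal{R}(k,\alpha)/x'(0)$ after dividing by $\Gamma(1-\alpha)$.

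For the error, I would apply the standard linear-interpolation bound $\|L_j-x'\|_{L^\infty[t_j,t_{j+1}]}\le \tfrac{h^2}{8}\|x'''\|_\infty$ and integrate against the singular kernel:
\[
E_C(f,h,\alpha)\;\le\;\frac{1}{\Gamma(1-\alpha)}\sum_{j=0}^{k-1}\int_{t_j}^{t_{j+1}}\frac{|L_j(\tau)-x'(\tau)|}{(t-\tau)^\alpha}\,d\tau\;\le\;\frac{\|x'''\|_\infty\,h^2}{8\,\Gamma(1-\alpha)}\int_0^t\frac{d\tau}{(t-\tau)^\alpha}\;=\;\frac{\|x'''\|_\infty\,t^{1-\alpha}\,h^2}{8\,\Gamma(2-\alpha)}.
\]
The singularity is absorbed because $(t-\tau)^{-\alpha}$ is integrable over $[0,t]$ for $0<\alpha<1$; subintervals away from $t$ actually enjoy the sharper smooth-kernel bound $O(h^3)$ but only the global $O(h^2)$ rate is claimed. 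The main delicate point, beyond the algebraic reduction at $j=0$ described above, is precisely this handling of the endpoint subinterval $[t_{k-1},t_k]$ where the kernel blows up: the crude bound above still works, since $\int_{t_{k-1}}^{t_k}(t-\tau)^{-\alpha}d\tau=h^{1-\alpha}/(1-\alpha)\to 0$, so no refined end-correction is required to attain the stated order.
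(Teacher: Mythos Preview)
Your derivation is correct and complete. The paper does not supply its own proof of this statement; it simply attributes the result to Odibat \cite{odibat2006approximations} and moves on. What you have written is precisely the product--trapezoidal (L1--type) construction underlying that reference: piecewise linear interpolation of $x'$ on each subinterval, exact integration against the weakly singular kernel, collection into second differences for the interior nodes, and direct evaluation of the two boundary weights. Your verification of the $j=0$ identity is correct (it reduces to $(k-1)^{2-\alpha}-(k+\alpha-2)k^{1-\alpha}$ after the $\Gamma(3-\alpha)$ normalization, exactly as in $\mathcal{R}(k,\alpha)$), and your error estimate via $\|x'-L_j\|_\infty\le\tfrac{h^2}{8}\|x'''\|_\infty$ together with integrability of $(t-\tau)^{-\alpha}$ gives the stated $O(h^2)$ bound. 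There is nothing to compare against in the paper itself; your argument is a faithful and self-contained reconstruction of the cited proof.
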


We have then the following result.
\begin{lem}
	Let $i\in V$ be a vertex of the graph $G.$ Then, in the limit $\alpha\to0$,
	the contributions for vertex $i$ of the different time subintervals
	take the values:
	
	\begin{align*}
		\mathcal{R}^{(i)}(k,0) & =\frac{h}{2}\,x_{i}'(0),\\[0.3em]
		\mathcal{L_{R}}^{(i)}(k,0) & =h\sum_{j=1}^{(k-1)/2}x_{i}'(t_{j}),\\
		\mathcal{L_{P}}^{(i)}(k,0) & =h\sum_{j=(k+1)/2}^{k-1}x_{i}'(t_{j}),\\
		\mathcal{P}^{(i)}(k,0) & =\frac{h}{2}\,x_{i}'(T).
	\end{align*}
	
	In the limit $\alpha\to1$, the only contribution which survives is
	$\mathcal{P}^{(i)}(k,\alpha\to1)=x_{i}'(T)$ and the rest vanish.
\end{lem}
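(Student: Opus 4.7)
The plan is to verify both limits by direct substitution into the definitions of $\mathcal{R}, \mathcal{L_{R}}, \mathcal{L_{P}}, \mathcal{P}$, exploiting only the componentwise linearity of the Caputo derivative (so that $x_i(t)$ denotes the $i$-th component of the vector solution $u(t)$) together with two elementary algebraic identities. First I would evaluate the common prefactor $h^{1-\alpha}/\Gamma(3-\alpha)$ in the two regimes: since $\Gamma(3)=2$ and $\Gamma(2)=1$, it collapses to $h/2$ when $\alpha\to 0$ and to $1$ when $\alpha\to 1$. This reduces the problem to examining the bracketed combinatorial coefficients in each of the four terms at the two limits, and justifies interchanging the limit with the finite sums by continuity of $\alpha\mapsto h^{1-\alpha}$ and $\alpha\mapsto m^{2-\alpha}$ at the endpoints.

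For $\alpha\to 0$, the exponent $2-\alpha$ tends to $2$. A direct expansion gives $(k-1)^{2}-(k-2)k=1$, so together with the prefactor $h/2$ one obtains $\mathcal{R}^{(i)}(k,0)=\tfrac{h}{2}\,x_i'(0)$. For the summands, setting $m=k-j$, the bracket $(m+1)^{2-\alpha}-2m^{2-\alpha}+(m-1)^{2-\alpha}$ is a discrete second difference which at $\alpha=0$ equals the second difference of the quadratic $m\mapsto m^{2}$, namely the constant $2$. Multiplied by the $h/2$ prefactor, this collapses $\mathcal{L_{R}}$ and $\mathcal{L_{P}}$ to the Riemann-sum-like expressions $h\sum_{j}x_i'(t_j)$ on the stated index ranges, while $\mathcal{P}^{(i)}(k,0)=\tfrac{h}{2}\,x_i'(T)$ follows immediately from the prefactor.

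For $\alpha\to 1$ the exponent $2-\alpha$ tends to $1$. The remote-past bracket reduces to $(k-1)-(k-1)\cdot 1=0$, killing $\mathcal{R}^{(i)}$. The sum brackets become the discrete second difference of the linear function $m\mapsto m$, which is identically zero, so $\mathcal{L_{R}}^{(i)}$ and $\mathcal{L_{P}}^{(i)}$ also vanish. Only the present contribution survives, and with its prefactor equal to $1$ we obtain $\mathcal{P}^{(i)}(k,\alpha\to 1)=x_i'(T)$, as claimed.

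The argument is essentially elementary, so no step presents a substantive obstacle; the only point requiring modest care is the legitimacy of taking pointwise limits inside the finite sums, which follows at once from the continuity of $\Gamma$ away from its poles and of the power maps $\alpha\mapsto h^{1-\alpha}$, $\alpha\mapsto m^{2-\alpha}$ at $\alpha=0,1$ for fixed $h>0$ and positive integers $m$. The conceptual content of the lemma lies entirely in the interpretation: the two algebraic identities above make explicit the transition from a pure present-time contribution (classical diffusion, $\alpha=1$) to a uniformly weighted trapezoidal averaging over the full history (maximal memory, $\alpha=0$).
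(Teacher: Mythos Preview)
Your proof is correct. The paper states this lemma without proof, so there is nothing to compare against; your direct-substitution argument---evaluating the prefactor $h^{1-\alpha}/\Gamma(3-\alpha)$ at the endpoints and then checking that the bracketed coefficients reduce to $1$, $2$, $2$, (implicit $1$) at $\alpha=0$ and to $0$, $0$, $0$, (implicit $1$) at $\alpha=1$---is exactly the natural way to verify it, and all the algebraic identities you use are right.
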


\begin{defn}
	For a fixed vertex $i$, fixed $T>0$ and odd $k$, we say that vertex
	$i$ \emph{recalls its remote past more strongly than its recent past}
	on $[0,T]$ if 
	\[
	\mathcal{L_{R}}^{(i)}(k,0)>\mathcal{L_{P}}^{(i)}(k,0),
	\]
	and we say it \emph{recalls its recent past more strongly than its
		remote past} if 
	\[
	\mathcal{L_{R}}^{(i)}(k,0)<\mathcal{L_{P}}^{(i)}(k,0).
	\]
\end{defn}

\begin{thm}[Memory Bias]
	\label{thm:memory-bias} Let $x(t)=(x_{1}(t),\dots,x_{n}(t))^{\top}$
	be the solution of the fractional diffusion system with sufficiently
	smooth temporal evolution at each vertex. Fix a vertex $i\in V$,
	a time $t>0$, and an odd integer $k\ge3$ as above. Then,
	\begin{itemize}
		\item If the derivative $x'(t)$ is nonincreasing on $[0,t]$, then 
		\[
		\mathcal{L_{R}}^{(i)}(k,0)\;\ge\;\mathcal{L_{P}}^{(i)}(k,0)\quad\text{and}\quad\mathcal{R}^{(i)}(k,0)\;\ge\;\mathcal{P}^{(i)}(k,0),
		\]
		and both inequalities are strict if $x_{i}'(t)$ is strictly decreasing
		on $[0,t]$. In this case, vertex $i$ recalls its more remote recent
		past more strongly than its more recent past on $[0,t]$, and it also
		recalls its remote past more strongly than its present.
		
		\item If the derivative $x_{i}'(t)$ is nondecreasing on $[0,t]$, then
		\[
		\mathcal{L_{R}}^{(i)}(k,0)\;\le\;\mathcal{L_{P}}^{(i)}(k,0)\quad\text{and}\quad\mathcal{R}^{(i)}(k,0)\;\le\;\mathcal{P}^{(i)}(k,0),
		\]
		and both inequalities are strict if $x_{i}'(t)$ is strictly increasing
		on $[0,t]$. In this case, vertex $i$ recalls its more recent past
		more strongly than its more remote recent past on $[0,t]$, and it
		recalls its present more strongly than its remote past. 
	\end{itemize}
\end{thm}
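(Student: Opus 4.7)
The plan is to exploit the closed-form limits supplied by the preceding lemma and then reduce each of the two comparisons to an elementary term-by-term inequality driven by the monotonicity of $x_i'$. Concretely, I work throughout in the $\alpha\to 0$ regime so that all four contributions $\mathcal{R}^{(i)}(k,0)$, $\mathcal{L_R}^{(i)}(k,0)$, $\mathcal{L_P}^{(i)}(k,0)$, $\mathcal{P}^{(i)}(k,0)$ are the simple closed-form averages/sums of values of $x_i'$ at the partition nodes $t_j=jh$, $h=t/k$; no further manipulation of Odibat's weights is needed.

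First I handle the endpoint pair $\mathcal{R}^{(i)}$ versus $\mathcal{P}^{(i)}$. By the lemma,
\[
\mathcal{R}^{(i)}(k,0)-\mathcal{P}^{(i)}(k,0)\;=\;\tfrac{h}{2}\bigl(x_i'(0)-x_i'(t)\bigr),
\]
so monotonicity of $x_i'$ on $[0,t]$ decides the sign directly, and strict monotonicity forces the strict inequality. This is immediate.

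Next I compare $\mathcal{L_R}^{(i)}$ and $\mathcal{L_P}^{(i)}$ by a pairing argument. Both sums contain exactly $(k-1)/2$ summands (odd $k$), so I reindex the right-hand sum by $m=j-(k-1)/2$ to write
\[
\mathcal{L_R}^{(i)}(k,0)-\mathcal{L_P}^{(i)}(k,0)\;=\;h\sum_{m=1}^{(k-1)/2}\Bigl(x_i'(t_m)-x_i'\!\bigl(t_{m+(k-1)/2}\bigr)\Bigr).
\]
Since $t_m<t_{m+(k-1)/2}$ for every $m$ in the sum, each bracketed term has a definite sign as soon as $x_i'$ is monotone on $[0,t]$, giving the two inequalities of the theorem. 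Strict monotonicity makes every bracket strict (the nodes are distinct), hence the whole sum is strict.

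I expect no real obstacle: once the $\alpha\to 0$ formulas of the lemma are invoked, the theorem reduces to two one-line monotonicity comparisons. The only care needed is bookkeeping, namely checking that the two late-past sums indeed have the same length $(k-1)/2$ and that the canonical pairing $j\leftrightarrow j+(k-1)/2$ respects the ordering of the partition. Both points are guaranteed by the assumption that $k$ is odd, and the pairing sends a time in $\{t_1,\dots,t_{(k-1)/2}\}$ to a strictly later time in $\{t_{(k+1)/2},\dots,t_{k-1}\}$, which is precisely what the monotonicity hypothesis requires.
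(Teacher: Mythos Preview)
Your proposal is correct and follows essentially the same approach as the paper: both reduce to the lemma's closed-form $\alpha\to 0$ expressions and then compare term by term via monotonicity. The only cosmetic difference is the bijection used to pair the two sums---you shift by $(k-1)/2$ while the paper reflects via $j\mapsto k-j$---but either pairing works since all that matters is matching each early node with a strictly later one.
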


\begin{proof}
	The proof is purely temporal and uses only the behavior of $x_{i}'(t)$
	on the interval $[0,t]$; the graph structure enters only through
	the fact that $x'_{i}(t)$ arises from the fractional diffusion equation.
	
	Write $k=2m+1$, so that $(k-1)/2=m$ and $k-1=2m$. For the first part, assume $x_{i}'(t)$
	is nonincreasing on $[0,t]$.  Let us compare $\mathcal{L_{R}}^{(i)}(k,0)$ and $\mathcal{L_{P}}^{(i)}(k,0)$. Define the index reflection $j'=k-j$
	for $j=1,\dots,m$. Then $j'\in\{m+1,\dots,2m\}$ and $t_{j}=jh<j'h=t_{j'}$.
	By monotonicity, $x'(t_{j})\ge x_{i}'(t_{j'})$. Reindexing the right
	sum via $j'=k-j$ yields 
	\[
	\sum_{j=m+1}^{2m}x_{i}'(t_{j})=\sum_{j=1}^{m}x_{i}'(t_{k-j})=\sum_{j=1}^{m}x_{i}'(t_{j'}),
	\]
	so that 
	\[
	\sum_{j=1}^{m}x_{i}'(t_{j})\;\ge\;\sum_{j=1}^{m}x_{i}'(t_{j'})=\sum_{j=m+1}^{2m}x_{i}'(t_{j}).
	\]
	Multiplying by $h>0$ yields $\mathcal{L_{R}}^{(i)}(k,0)\ge\mathcal{L_{P}}^{(i)}(k,0)$,
	with strict inequality if $x_{i}'$ is strictly decreasing.
	
	For the comparison between $\mathcal{R}^{(i)}(k,0)$ and $\mathcal{P}^{(i)}(k,0)$
	we use their explicit expressions 
	\[
	\mathcal{R}^{(i)}(k,0)=\frac{h}{2}x_{i}'(0),\qquad\mathcal{P}^{(i)}(k,0)=\frac{h}{2}x_{i}'(T).
	\]
	If $x_{i}'(t)$ is nonincreasing on $[0,t]$, then $x_{i}'(0)\ge x_{i}'(T)$,
	and multiplying by $h/2>0$ gives $\mathcal{R}^{(i)}(k,0)\ge\mathcal{P}^{(i)}(k,0)$,
	with strict inequality if $x_{i}'$ is strictly decreasing.
	
	For the second part, apply the first part to the function $-x_{i}(t)$. Denote by $\mathcal{L_{R}}^{(i)}_{-x_i}(k,0)$ the late past contribution, and similarly for the other time intervals.  
	If $x_{i}'(t)$ is nondecreasing on $[0,t]$, then $(-x_{i})'(t)=-x_{i}'(t)$
	is nonincreasing, hence 
	\[
	\mathcal{L_{R}}^{(i)}_{-x_i}(k,0)\;\ge\;\mathcal{L_{P}}^{(i)}_{-x_i}(k,0)\quad\text{and}\quad\mathcal{R}^{(i)}_{-x_i}(k,0)\;\ge\;\mathcal{P}^{(i)}_{-x_i}(k,0).
	\]
	By linearity of the sums defining $\mathcal{L_{R}}$, $\mathcal{L_{P}}$
	and by the explicit formulas for $\mathcal{R}$ and $\mathcal{P}$,
	\[
	\mathcal{L_{R}}^{(i)}_{-x_i}(k,0)=-\mathcal{L_{R}}^{(i)}(k,0),\quad\mathcal{L_{P}}^{(i)}_{-x_i}(k,0)=-\mathcal{L_{P}}^{(i)}(k,0),
	\]
	\[
	\mathcal{R}^{(i)}_{-x_i}(k,0)=-\mathcal{R}^{(i)}(k,0),\quad\mathcal{P}^{(i)}_{-x_i}(k,0)=-\mathcal{P}^{(i)}(k,0),
	\]
	and thus 
	\[
	-\mathcal{L_{R}}^{(i)}(k,0)\;\ge\;-\mathcal{L_{P}}^{(i)}(k,0),\qquad-\mathcal{R}^{(i)}(k,0)\;\ge\;-\mathcal{P}^{(i)}(k,0),
	\]
	which is equivalent to 
	\[
	\mathcal{L_{R}}^{(i)}(k,0)\;\le\;\mathcal{L_{P}}^{(i)}(k,0),\qquad\mathcal{R}^{(i)}(k,0)\;\le\;\mathcal{P}^{(i)}(k,0),
	\]
	with strict inequalities if $x_{i}'$ is strictly increasing. 
\end{proof}
\begin{cor}
	\label{cor:global-past-present} Under the assumptions of Theorem~\ref{thm:memory-bias},
	define the total ``past'' and ``present'' contributions at vertex
	$i$ by 
	\[
	\mathcal{T}_{\mathrm{past}}^{(i)}(k,0):=\mathcal{R}^{(i)}(k,0)+\mathcal{L_{R}}^{(i)}(k,0),\qquad\mathcal{T}_{\mathrm{pres}}^{(i)}(k,0):=\mathcal{L_{P}}^{(i)}(k,0)+\mathcal{P}^{(i)}(k,0).
	\]
	If $x_{i}'(t)$ is nonincreasing on $[0,t]$, then 
	\[
	\mathcal{T}_{\mathrm{past}}^{(i)}(k,0)\;\ge\;\mathcal{T}_{\mathrm{pres}}^{(i)}(k,0),
	\]
	with strict inequality if $x_{i}'$ is strictly decreasing. If $x_{i}'(t)$
	is nondecreasing on $[0,t]$, then 
	\[
	\mathcal{T}_{\mathrm{past}}^{(i)}(k,0)\;\le\;\mathcal{T}_{\mathrm{pres}}^{(i)}(k,0),
	\]
	with strict inequality if $x_{i}'$ is strictly increasing. 
\end{cor}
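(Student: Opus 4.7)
The plan is to derive the corollary directly by summing the pairs of inequalities established in Theorem~\ref{thm:memory-bias}, exploiting the fact that $\mathcal{T}_{\mathrm{past}}^{(i)}$ and $\mathcal{T}_{\mathrm{pres}}^{(i)}$ decompose as $\mathcal{R}^{(i)} + \mathcal{L_{R}}^{(i)}$ and $\mathcal{L_{P}}^{(i)} + \mathcal{P}^{(i)}$, respectively. In particular, the corollary requires no new temporal or spectral analysis; it is purely an aggregation of the vertex-level memory comparisons already proved at the level of individual subintervals.

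First I would treat the case where $x_i'(t)$ is nonincreasing on $[0,t]$. Theorem~\ref{thm:memory-bias} then immediately delivers the two inequalities $\mathcal{L_{R}}^{(i)}(k,0) \ge \mathcal{L_{P}}^{(i)}(k,0)$ and $\mathcal{R}^{(i)}(k,0) \ge \mathcal{P}^{(i)}(k,0)$. Adding them preserves the direction and yields $\mathcal{T}_{\mathrm{past}}^{(i)}(k,0) \ge \mathcal{T}_{\mathrm{pres}}^{(i)}(k,0)$, which is the claimed bound. For the strict version, if $x_i'$ is strictly decreasing on $[0,t]$, Theorem~\ref{thm:memory-bias} guarantees that both of the summand inequalities are strict, and a sum of strict inequalities with the same orientation is strict; in fact, strictness of either summand alone would already be enough.

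For the nondecreasing and strictly increasing cases, I would reuse the trick already employed in the proof of Theorem~\ref{thm:memory-bias}, namely applying the just-established nonincreasing statement to $-x_i(t)$. The linearity of the defining formulas in $x_i'$ gives the sign-flipping identities $\mathcal{R}^{(i)}_{-x_i} = -\mathcal{R}^{(i)}$, $\mathcal{L_{R}}^{(i)}_{-x_i} = -\mathcal{L_{R}}^{(i)}$, $\mathcal{L_{P}}^{(i)}_{-x_i} = -\mathcal{L_{P}}^{(i)}$, $\mathcal{P}^{(i)}_{-x_i} = -\mathcal{P}^{(i)}$, so the summed inequality for $-x_i$ translates, upon multiplying through by $-1$, into exactly the claimed reversed inequality for $x_i$, with strictness preserved whenever $x_i'$ is strictly increasing.

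I do not expect any real obstacle here: the corollary is essentially a one-line additive consequence of the theorem, and the only piece of bookkeeping worth double-checking is that strict monotonicity of $x_i'$ forces strictness in both of the pointwise comparisons being added, so that the sum is unambiguously strict and the corresponding conclusion for $\mathcal{T}_{\mathrm{past}}^{(i)}$ versus $\mathcal{T}_{\mathrm{pres}}^{(i)}$ follows without loss.
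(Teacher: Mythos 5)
Your proposal is correct and follows essentially the same route as the paper: the paper's proof simply adds the two inequalities $\mathcal{R}^{(i)}\ge\mathcal{P}^{(i)}$ and $\mathcal{L_{R}}^{(i)}\ge\mathcal{L_{P}}^{(i)}$ from Theorem~\ref{thm:memory-bias} in the nonincreasing case and uses the reversed inequalities for the nondecreasing case. Your re-derivation of the nondecreasing case via $-x_i$ is a harmless redundancy, since Theorem~\ref{thm:memory-bias} already states that case directly.
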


\begin{proof}
	In the nonincreasing case, Theorem~\ref{thm:memory-bias} yields
	$\mathcal{R}^{(i)}\ge\mathcal{P}^{(i)}$ and $\mathcal{L_{R}}^{(i)}\ge\mathcal{L_{P}}^{(i)}$,
	so adding the inequalities gives 
	\[
	\mathcal{R}^{(i)}+\mathcal{L_{R}}^{(i)}\;\ge\;\mathcal{L_{P}}^{(i)}+\mathcal{P}^{(i)}.
	\]
	The nondecreasing case is analogous, using the reversed inequalities
	from Theorem~\ref{thm:memory-bias}. 
\end{proof}
\begin{cor}
	\label{cor:same-vertex-two-windows} Let $x(t)$ be as in the Caputo
	fractional diffusion equation, and fix a vertex $i\in V$. Let $[t_{1},t_{2}]$
	and $[t_{3},t_{4}]$ be two disjoint time intervals with $0\le t_{1}<t_{2}\le t_{3}<t_{4}$.
	Fix an odd integer $k\ge3$ and define uniform grids on each interval:
	\[
	h_{1}=\frac{t_{2}-t_{1}}{k},\quad s_{j}^{(1)}=t_{1}+jh_{1},\quad j=0,\dots,k,
	\]
	\[
	h_{2}=\frac{t_{4}-t_{3}}{k},\quad s_{j}^{(2)}=t_{3}+jh_{2},\quad j=0,\dots,k.
	\]
	Define the corresponding left and right recent contributions: 
	\begin{align*}
		\mathcal{L_{R}}^{(i,1)}(k,0) & :=h_{1}\sum_{j=1}^{(k-1)/2}x_{i}'\bigl(s_{j}^{(1)}\bigr), & \mathcal{L_{P}}^{(i,1)}(k,0) & :=h_{1}\sum_{j=(k+1)/2}^{k-1}x_{i}'\bigl(s_{j}^{(1)}\bigr),\\[0.3em]
		\mathcal{L_{R}}^{(i,2)}(k,0) & :=h_{2}\sum_{j=1}^{(k-1)/2}x_{i}'\bigl(s_{j}^{(2)}\bigr), & \mathcal{L_{P}}^{(i,2)}(k,0) & :=h_{2}\sum_{j=(k+1)/2}^{k-1}x_{i}'\bigl(s_{j}^{(2)}\bigr).
	\end{align*}
	Assume that
	
	$x_{i}'(t)$ is strictly decreasing on $[t_{1},t_{2}]$,
	
	$x_{i}'(t)$ is strictly increasing on $[t_{3},t_{4}]$.
	
	Then 
	\[
	\mathcal{L_{R}}^{(i,1)}(k,0)\;>\;\mathcal{L_{P}}^{(i,1)}(k,0),\qquad\mathcal{L_{R}}^{(i,2)}(k,0)\;<\;\mathcal{L_{P}}^{(i,2)}(k,0).
	\]
	In particular, on $[t_{1},t_{2}]$ vertex $i$ recalls its more remote
	recent past more strongly than its more recent past, whereas on $[t_{3},t_{4}]$
	the bias is reversed. 
\end{cor}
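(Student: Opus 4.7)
The plan is to reduce the corollary to two independent applications of the reflection argument used in the proof of Theorem~\ref{thm:memory-bias}. The crucial observation is that the comparison between $\mathcal{L_R}$ and $\mathcal{L_P}$ in that proof only uses (i) the uniform spacing of the grid and (ii) the monotonicity of $x_i'$ at the interior grid points; it is completely insensitive to the left endpoint of the interval. Thus nothing in the argument forces the reference interval to start at $0$, and the same reasoning applies verbatim once we relabel the grid.

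First I would handle the interval $[t_1,t_2]$. Writing $k=2m+1$, the two sums are
\[
\mathcal{L_R}^{(i,1)}(k,0)=h_1\sum_{j=1}^{m}x_i'\bigl(s_j^{(1)}\bigr),\qquad
\mathcal{L_P}^{(i,1)}(k,0)=h_1\sum_{j=m+1}^{2m}x_i'\bigl(s_j^{(1)}\bigr),
\]
and I would introduce the reflection $j'=k-j$, which maps $\{1,\dots,m\}$ bijectively onto $\{m+1,\dots,2m\}$. Since $s_{j'}^{(1)}-s_j^{(1)}=(k-2j)h_1>0$ for $1\le j\le m$, each pair of reflected nodes lies inside $[t_1,t_2]$ with $s_j^{(1)}<s_{j'}^{(1)}$. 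Strict monotonicity of $x_i'$ then gives $x_i'(s_j^{(1)})>x_i'(s_{j'}^{(1)})$ for every such $j$; summing and multiplying by $h_1>0$ yields the strict inequality $\mathcal{L_R}^{(i,1)}(k,0)>\mathcal{L_P}^{(i,1)}(k,0)$.

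For the second interval, I would apply the same reflection to the grid $\{s_j^{(2)}\}$ inside $[t_3,t_4]$. The bijection $j\mapsto k-j$ again identifies the late-past and early-past index sets; the only change is that $x_i'$ is now strictly increasing on $[t_3,t_4]$, so the pointwise inequality reverses to $x_i'(s_j^{(2)})<x_i'(s_{j'}^{(2)})$, and summation yields $\mathcal{L_R}^{(i,2)}(k,0)<\mathcal{L_P}^{(i,2)}(k,0)$. Equivalently, one may invoke the $-x_i$ trick already used in the second part of the proof of Theorem~\ref{thm:memory-bias} and combine it with the first step.

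There is no genuine obstacle here, since the corollary is essentially a localization of Theorem~\ref{thm:memory-bias}: the reflection pairing depends only on the grid geometry, and the disjointness of the two intervals plus the independent monotonicity hypotheses let the two inequalities coexist without interaction. The only point to verify with care is that the grid indices $j=1,\dots,k-1$ entering $\mathcal{L_R}^{(i,\ell)}$ and $\mathcal{L_P}^{(i,\ell)}$ lie strictly inside their respective intervals, so that the monotonicity hypotheses on $(t_1,t_2)$ and $(t_3,t_4)$ apply at every node used in the sums.
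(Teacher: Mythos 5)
Your proposal is correct and takes essentially the same route as the paper: the paper's proof simply shifts time, defining $z_{1}(s)=x_{i}(t_{1}+s)$ and $z_{2}(s)=x_{i}(t_{3}+s)$, and applies Theorem~\ref{thm:memory-bias} to each, whereas you inline that theorem's reflection pairing $j\mapsto k-j$ directly on the shifted grids --- the same mechanism, with the same strict inequalities. Your checks (the reflected nodes satisfy $s_{j}^{(\ell)}<s_{k-j}^{(\ell)}$ inside each interval, and the inequality reverses under the increasing hypothesis on $[t_{3},t_{4}]$) are exactly what the invocation of the theorem delivers, so no gap remains.
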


\begin{proof}
	Define $z_{1}(s)=x_{i}(t_{1}+s)$ on $[0,t_{2}-t_{1}]$ and $z_{2}(s)=x_{i}(t_{3}+s)$
	on $[0,t_{4}-t_{3}]$. Then $z_{1}'$ is strictly decreasing and $z_{2}'$
	is strictly increasing on their respective domains, and the points
	$s_{j}^{(1)}$, $s_{j}^{(2)}$ define uniform grids. Applying Theorem~\ref{thm:memory-bias}
	to $z_{1}$ and $z_{2}$ yields the desired inequalities. 
\end{proof}
\begin{cor}
	\label{cor:two-vertices-same-interval} Let $x(t)$ be as before,
	and consider two vertices $i,j\in V$. Fix a time $t>0$ and an odd
	integer $k\ge3$, and define $h=t/k$, $t_{\ell}=\ell h$ for $\ell=0,\dots,k$.
	Define 
	\begin{align*}
		\mathcal{L_{R}}^{(i)}(k,0) & :=h\sum_{\ell=1}^{(k-1)/2}x_{i}'(t_{\ell}), & \mathcal{L_{P}}^{(i)}(k,0) & :=h\sum_{\ell=(k+1)/2}^{k-1}x_{i}'(t_{\ell}),\\[0.3em]
		\mathcal{L_{R}}^{(j)}(k,0) & :=h\sum_{\ell=1}^{(k-1)/2}x_{j}'(t_{\ell}), & \mathcal{L_{P}}^{(j)}(k,0) & :=h\sum_{\ell=(k+1)/2}^{k-1}x_{j}'(t_{\ell}).
	\end{align*}
	Assume that
	
	$x_{i}'(t)$ is strictly decreasing on $[0,t]$,
	
	$x_{j}'(t)$ is strictly increasing on $[0,t]$.
	
	Then 
	\[
	\mathcal{L_{R}}^{(i)}(k,0)\;>\;\mathcal{L_{P}}^{(i)}(k,0),\qquad\mathcal{L_{R}}^{(j)}(k,0)\;<\;\mathcal{L_{P}}^{(j)}(k,0).
	\]
	Thus, over the same time interval $[0,t]$ and for the same fractional
	model, vertex $i$ exhibits a remote-past memory bias, whereas vertex
	$j$ exhibits a recent-past memory bias. 
\end{cor}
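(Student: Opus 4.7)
The plan is to observe that the statement reduces to two independent applications of Theorem~\ref{thm:memory-bias}, one per vertex. The definitions of $\mathcal{L_{R}}^{(i)}(k,0)$ and $\mathcal{L_{P}}^{(i)}(k,0)$ used here are exactly the quantities already treated in Theorem~\ref{thm:memory-bias}, with the same uniform grid $t_\ell=\ell h$ on $[0,t]$ and the same spacing $h=t/k$. Because those quantities involve only the values of $x_i'$ at the grid points, the comparison $\mathcal{L_{R}}^{(i)}(k,0)$ versus $\mathcal{L_{P}}^{(i)}(k,0)$ depends only on the temporal profile of the single component $x_i$ on $[0,t]$; no coupling to other vertices enters.

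First, I would invoke the first part of Theorem~\ref{thm:memory-bias} applied to vertex $i$: since $x_i'$ is strictly decreasing on $[0,t]$ by assumption, the theorem gives
\[
\mathcal{L_{R}}^{(i)}(k,0)\;>\;\mathcal{L_{P}}^{(i)}(k,0),
\]
with the strictness coming from the strictness of the monotonicity hypothesis, exactly as in the proof of Theorem~\ref{thm:memory-bias} (index reflection $j\mapsto k-j$ pairs each early grid point with a later one and strict monotonicity makes the pairwise comparison strict).

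Next, I would apply the second part of Theorem~\ref{thm:memory-bias} to vertex $j$: since $x_j'$ is strictly increasing on $[0,t]$, the theorem (via the reduction to $-x_j$) yields
\[
\mathcal{L_{R}}^{(j)}(k,0)\;<\;\mathcal{L_{P}}^{(j)}(k,0).
\]
Combining the two inequalities proves the corollary, and the ``physical'' interpretation (remote-past bias at $i$, recent-past bias at $j$) follows directly from the definitions.

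There is essentially no hard step here: the content is entirely in Theorem~\ref{thm:memory-bias}. The only subtlety worth flagging is that the two inequalities involve \emph{different} components of the solution $x(t)$ over the \emph{same} time window, which is legitimate precisely because the memory-bias comparison is a purely temporal, componentwise statement; the graph Laplacian enters only through the fact that the profiles $x_i(t)$ and $x_j(t)$ solve the fractional diffusion system, which is used only to justify the smoothness of $x_i'$ and $x_j'$ on $[0,t]$.
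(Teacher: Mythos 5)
Your proposal is correct and matches the paper's own proof, which likewise just applies Theorem~\ref{thm:memory-bias} componentwise to $t\mapsto x_i(t)$ and $t\mapsto x_j(t)$ on $[0,t]$, using the strict-monotonicity hypotheses to get strict inequalities. Your extra remark that the comparison is purely temporal and componentwise is consistent with the paper's emphasis that the graph structure enters only through the fact that the profiles solve the fractional diffusion system.
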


\begin{proof}
	Apply Theorem~\ref{thm:memory-bias} to the scalar functions $t\mapsto x_{i}(t)$
	and $t\mapsto x_{j}(t)$ on $[0,t]$. 
\end{proof}

\subsection{Physical Interpretation: Memory Regimes on a Graph}

The Memory Bias Theorem shows that, within the Caputo--driven fractional
diffusion dynamics on a graph for $\alpha\rightarrow0$, the relative
importance of the ``remote'' versus ``recent'' parts of the past
is not uniform across the network. Instead, it depends sensitively
on the \emph{local temporal curvature} of the solution at each vertex.

If the temporal derivative $x_{i}'(t)$ is decreasing on a given window,
the trajectory $x_{i}(t)$ is bending downward, and older information
within that window receives a larger weight than more recent information.
In this regime, vertex $i$ is said to exhibit a \emph{remote--past
	memory bias}. Our results show that, in this case, not only the more
remote part of the recent past dominates the more recent part, but
the remote past also dominates the present when $\alpha\to0$.

Conversely, if $x_{i}'(t)$ is increasing on a given window, the trajectory
is bending upward, and the more recent information receives a larger
weight than the more remote information. The vertex then exhibits
a \emph{recent--past memory bias}, and in the $\alpha\to0$ limit
the present dominates the remote past as well.

A striking consequence is that different vertices of the same graph
may simultaneously reside in opposite memory regimes, even though
they are driven by the same fractional dynamics and are evaluated
over the same time interval. Similarly, the \emph{same} vertex may
switch memory regimes over time, depending on the evolution of its
temporal curvature. This expresses a fundamental ``heterogeneity
of memory'' in fractional diffusion on graphs.

In summary, the fractional order determines \emph{how much} of the
past is remembered globally, but the \emph{shape of the temporal evolution}
at each vertex dictates \emph{which} part of the past (remote or recent,
and past versus present) is preferentially recalled. This creates
rich, spatially distributed memory patterns that reflect both the
graph geometry and the initial configuration.

\section{How memory emerges from fractional diffusion on a graph?}
\label{sec:memory-emergence}

Once we have seen how convexity influences recall of early or late past, we now study how convexity can emerge in a network, depending on the initial mass distribution of the process. The strong influence of memory is also shown through the underlying random clock process. 

\subsection{Local Convexity and Concavity in Caputo Fractional Diffusion: A Mittag-Leffler
	and SOE-Based Analysis}

We consider the Caputo fractional diffusion equation on a finite graph
$G$ with combinatorial Laplacian $L=D-A$: 
\[
D_{t}^{\alpha}x(t)=-Lx(t),\qquad x(0)=e_{v},
\]
where $0<\alpha<1$ and $e_{v}$ is the $v$th standard basis vector.
The mild solution is given by the Mittag-Leffler matrix function
\[
x(t)=E_{\alpha}(-t^{\alpha}L)\,e_{v}.
\]

We are interested in the \textit{curvature} of the time evolution
at early times, i.e., the sign of $x_{i}''(t)$ for $t>0$ sufficiently
small. This sign determines whether the temporal trajectory at a vertex
is locally convex or concave. We show below that, regardless of the
graph, the vertex where the mass is initially placed has an early-time
\textit{convex and decreasing} profile, while every neighbor exhibits
an early-time \textit{concave and increasing} profile. Both statements
hold rigorously for every $0<\alpha<1$.
\begin{thm}
	[Local convexity/concavity from Mittag-Leffler and SOE] \label{thm:local-conv-conc}
	Let $G$ be a finite graph with Laplacian $L=D-A$. Consider the Caputo
	fractional diffusion equation 
	\[
	D_{t}^{\alpha}x(t)=-Lx(t),\qquad0<\alpha<1,\qquad x(0)=e_{v}.
	\]
	Then:
	\begin{itemize}
		\item At vertex $v$, the time evolution $x(t)$ is strictly decreasing
		and strictly convex for all $t\in(0,\varepsilon_{v})$, for some $\varepsilon_{v}>0$.
		\item If $w$ is any neighbor of $v$, i.e.\ $A_{wv}>0$, then $u_{w}(t)$
		is strictly increasing and strictly concave for all $t\in(0,\varepsilon_{w})$,
		for some $\varepsilon_{w}>0$.
	\end{itemize}
	In particular, there exists $\varepsilon>0$ such that for all $t\in(0,\varepsilon)$
	the excited vertex $v$ exhibits a convex decay, while each of its
	neighbors exhibits a concave growth. 
\end{thm}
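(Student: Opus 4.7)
The plan is to expand $x(t)=E_\alpha(-t^\alpha L)e_v$ using the operator power series for the Mittag-Leffler function and read off the short-time behavior of $x_i(t)$, $x_i'(t)$, $x_i''(t)$ at the two relevant indices $i=v$ (the source) and $i=w$ (a neighbor). The structural input is entirely contained in the first two terms of the series together with the values of $(L^k)_{iv}$ for $k=0,1$, so the argument is essentially a leading-order asymptotic expansion; the work reduces to (a) computing those coefficients, (b) justifying termwise differentiation in $t$, and (c) checking that the leading powers of $t$ beat the next ones.

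First I would write, from the convergent series $E_\alpha(-t^\alpha L)=\sum_{k\ge 0}\frac{(-t^\alpha)^k}{\Gamma(\alpha k+1)}L^k$, the entrywise expansion
\[
x_i(t)=\sum_{k=0}^{\infty}\frac{(-t^\alpha)^k}{\Gamma(\alpha k+1)}(L^k)_{iv}.
\]
For $i=v$, using $(L^0)_{vv}=1$, $(L^1)_{vv}=d_v\ge 1$ (the theorem is vacuous if $d_v=0$ since then there are no neighbors and $x(t)\equiv e_v$), I get
\[
x_v(t)=1-\frac{d_v}{\Gamma(1+\alpha)}\,t^\alpha+O(t^{2\alpha}),\qquad t\downarrow 0.
\]
For $i=w$ with $A_{wv}=1$, I use $(L^0)_{wv}=0$ and $(L^1)_{wv}=-A_{wv}=-1$ to get
\[
x_w(t)=\frac{1}{\Gamma(1+\alpha)}\,t^\alpha+O(t^{2\alpha}),\qquad t\downarrow 0.
\]
These expansions are exactly the $k=0,1$ specialization of Theorem~\ref{thm:Shortest-path-dominance}, so I would quote that theorem to keep the derivation compact.

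Next I would differentiate termwise in $t$ on any compact subinterval of $(0,\infty)$. This is legitimate because $L$ is bounded and the resulting series $\sum_k a_k t^{k\alpha-1}$ and $\sum_k a_k t^{k\alpha-2}$ converge uniformly on compact subsets of $(0,\infty)$ (the Mittag-Leffler scalar coefficients decay factorially). Applying $d/dt$ and $d^2/dt^2$ to the leading two terms and using $\Gamma(\alpha+1)=\alpha\Gamma(\alpha)$, I obtain
\begin{align*}
x_v'(t) &= -\frac{d_v}{\Gamma(\alpha)}\,t^{\alpha-1}+O(t^{2\alpha-1}), &
x_v''(t) &= \frac{(1-\alpha)\,d_v}{\Gamma(\alpha)}\,t^{\alpha-2}+O(t^{2\alpha-2}),\\
x_w'(t) &= \frac{1}{\Gamma(\alpha)}\,t^{\alpha-1}+O(t^{2\alpha-1}), &
x_w''(t) &= -\frac{1-\alpha}{\Gamma(\alpha)}\,t^{\alpha-2}+O(t^{2\alpha-2}).
\end{align*}
Since $\alpha>0$, one has $\alpha-1<2\alpha-1$ and $\alpha-2<2\alpha-2$, so the leading terms dominate as $t\downarrow 0$, and their signs are fixed by $d_v>0$ and $1-\alpha>0$.

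The conclusion is then immediate: there exists $\varepsilon_v>0$ on which $x_v'(t)<0$ and $x_v''(t)>0$ (decreasing, convex), and $\varepsilon_w>0$ on which $x_w'(t)>0$ and $x_w''(t)<0$ (increasing, concave). Taking $\varepsilon=\min(\varepsilon_v,\varepsilon_w)$ yields the final assertion. The only real subtlety, and the step I would write most carefully, is the termwise differentiation: one must rule out that the remainder $O(t^{2\alpha-2})$ in $x_v''$ or $x_w''$ overtakes the leading term as $t\downarrow 0$, which is exactly the inequality $\alpha-2<2\alpha-2$, equivalently $\alpha>0$. Everything else is a routine consequence of the Mittag-Leffler power series and the elementary identities $(L)_{vv}=d_v$, $(L)_{wv}=-A_{wv}$.
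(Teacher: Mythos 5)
Your proposal is correct and follows essentially the same route as the paper: a short-time Mittag-Leffler series expansion at the source and at a neighbor, termwise differentiation, and a sign check on the leading $t^{\alpha-1}$ and $t^{\alpha-2}$ terms (the paper adds only a supplementary SOE-based interpretation as a second step). Your explicit justification of termwise differentiation and the remark on the degenerate case $d_v=0$ are minor refinements of the same argument.
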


\begin{proof}
	We use two complementary arguments: (i)~the small-time expansion
	of the Mittag-Leffler matrix function, and (ii)~a Sum-of-Exponentials
	(SOE) representation for $E_{\alpha}$.
	
	\medskip{}
	\textbf{Step 1: Exact short-time Mittag-Leffler expansion.} The
	matrix Mittag-Leffler function admits the convergent series 
	\[
	E_{\alpha}(-t^{\alpha}L)=I-\frac{t^{\alpha}}{\Gamma(\alpha+1)}L+\frac{t^{2\alpha}}{\Gamma(2\alpha+1)}L^{2}+O(t^{3\alpha}).
	\]
	Applying this to the initial condition $x(t)=E_{\alpha}(-t^{\alpha}L)e_{v}$
	gives 
	\[
	x(t)=e_{v}-\frac{t^{\alpha}}{\Gamma(\alpha+1)}Le_{v}+O(t^{2\alpha}).
	\]
	
	\noindent At the excited vertex $v$ we have 
	\[
	x_{v}(t)=1-\frac{\deg(v)}{\Gamma(\alpha+1)}\,t^{\alpha}+O(t^{2\alpha}),
	\]
	because $(Le_{v})_{v}=\deg(v)$. Differentiating yields 
	\[
	x_{v}'(t)=-\frac{\deg(v)}{\Gamma(\alpha+1)}\alpha t^{\alpha-1}+O(t^{2\alpha-1})<0,
	\]
	\[
	x_{v}''(t)=-\frac{\deg(v)}{\Gamma(\alpha+1)}\alpha(\alpha-1)t^{\alpha-2}+O(t^{2\alpha-2})>0,
	\]
	since $\alpha(\alpha-1)<0$ for $0<\alpha<1$.  Note that the exponents $\alpha-1$, $2\alpha-2$ are negative. Thus $x_{v}(t)$ is
	strictly decreasing and strictly convex on $(0,\varepsilon_{v})$.
	
	At a neighbor $w$ of $v$, 
	\[
	x_{w}(t)=-\frac{t^{\alpha}}{\Gamma(\alpha+1)}(Le_{v})_{w}+O(t^{2\alpha})=\frac{A_{wv}}{\Gamma(\alpha+1)}\,t^{\alpha}+O(t^{2\alpha}),
	\]
	because $(Le_{v})_{w}=-A_{wv}$ for $w\neq v$. Thus 
	\[
	x_{w}'(t)=\frac{A_{wv}}{\Gamma(\alpha+1)}\alpha t^{\alpha-1}+O(t^{2\alpha-1})>0,
	\]
	\[
	x_{w}''(t)=\frac{A_{wv}}{\Gamma(\alpha+1)}\alpha(\alpha-1)t^{\alpha-2}+O(t^{2\alpha-2})<0.
	\]
	Hence $x_{w}(t)$ is strictly increasing and strictly concave on $(0,\varepsilon_{w})$.
	
	\medskip{}
	\textbf{Step 2: Interpretation via SOE approximation.} We have developed
	a SOE scheme to approximate the Mittag-Leffler function by 
	\[
	E_{\alpha}(-\lambda t^{\alpha})\approx\sum_{m=1}^{M}c_{m}e^{-d_{m}t},\qquad c_{m}>0,\ d_{m}>0.
	\]
	Applying the same SOE to the matrix $L$ yields 
	\[
	x(t)=E_{\alpha}(-t^{\alpha}L)e_{v}\;\approx\;\sum_{m=1}^{M}c_{m}e^{-d_{m}tL}e_{v}.
	\]
	Each term $e^{-d_{m}tL}e_{v}$ is the solution of a \emph{classical}
	diffusion equation with rate~$d_{m}$ and is therefore strictly convex
	at the source $v$ and strictly concave at neighbors $w$ at early
	times. Because all coefficients $c_{m}$ are positive, the SOE sum
	preserves the convexity at $v$ and concavity at its neighbors. This
	matches exactly the signs obtained in the rigorous Mittag-Leffler
	expansion above.
	
	\medskip{}
	\noindent Combining Steps~1 and~2 proves (i) and (ii). 
	
	Thus, over the same time interval $t\in(0,\varepsilon_{v})$ and for
	the same fractional model, vertex $v$ exhibits a remote-past memory
	bias, i.e., it recalls more the past than the present, whereas vertex
	$w$ exhibits a recent-past memory bias, i.e., it recalls more the
	present than the past. 
\end{proof}

\subsection{Random time change induces subdiffusion and memory}

\label{sec:subdiffusion-memory}

We summarize the mechanism by which the random clock $E_{t}$ slows
down spreading (\emph{subdiffusion}) and introduces \emph{memory}
into the dynamics on a graph. Under the subordination identity, 
\[
u(t)=E_{\alpha}(-t^{\alpha}L)\,u_{0}\;=\;\mathbb{E}\!\left[e^{-E_{t}L}\right]u_{0},
\]
the state at physical time $t$ equals the baseline heat state evaluated
at the \emph{random} operational time $s=E_{t}$. For $0<\alpha<1$,
the inverse $\alpha$-stable clock satisfies 
\[
\mathbb{E}[E_{t}]=\frac{t^{\alpha}}{\Gamma(1+\alpha)},\qquad\mathrm{Var}(E_{t})=\frac{2\,t^{2\alpha}}{\Gamma(1+2\alpha)}-\frac{t^{2\alpha}}{\Gamma(1+\alpha)^{2}},
\]
so the \emph{typical} amount of diffusion time available by physical
time $t$ scales like $t^{\alpha}$ (rather than $t$). Thus any diffusive
spread measure that, at baseline, scales with $s$ (e.g., mean-square
displacement in Euclidean space, or mixing surrogates on graphs) will
scale like $t^{\alpha}$ under the random clock. This is the essence
of \emph{subdiffusion}: slower-than-classical spreading ($t^{\alpha/2}$
instead of $t^{1/2}$ in continuum settings), and on graphs a slower
homogenization than the exponential-in-$t$ decay of the standard
heat semigroup.

In the time-changed walk $Y_{t}=X_{E_{t}}$, the holding time $T_i$ at node
$i$ has survival function 
\[
\mathbb{P}\!\left(T_i>t\right)=E_{\alpha}(-d_{i}\,t^{\alpha}),
\]
with $d_{i}=\sum_{j}A_{ij}$. As $t\to\infty$, the Mittag-Leffler
tail obeys $E_{\alpha}(-d_{i}\,t^{\alpha})\sim\dfrac{1}{d_{i}\,\Gamma(1-\alpha)}\,t^{-\alpha},$
a power law (no finite mean inter-jump time when $\alpha<1$). These
\emph{rare but very long} pauses act as traps that stretch physical
time relative to operational time, producing subdiffusive spread on
the graph.

The Caputo equation 
\[
\partial_{t}^{\alpha}u(t)=-L\,u(t),\qquad\partial_{t}^{\alpha}u(t)=\frac{1}{\Gamma(1-\alpha)}\int_{0}^{t}(t-\tau)^{-\alpha}\,u'(\tau)\,d\tau,
\]
is explicitly \emph{history dependent}: the instantaneous rate $u'(t)$
depends on the full past with a power-law kernel $(t-\tau)^{-\alpha}$.
In renewal terms, the holding-time hazard for Mittag-Leffler waiting
times is \emph{decreasing in age} the longer the process has been
waiting, the less likely it is to jump immediately. Hence the future
depends on how long the current wait has lasted (``aging''), which
breaks the Markov property in physical time $t$. Conditioned on $E_{t}=s$,
the baseline path is Markov; \emph{after averaging over} the random
clock, the observed dynamics inherit memory.

On the other hand, with $L=V\Lambda V^{\top}$, each mode decays as
\[
E_{\alpha}(-t^{\alpha}\lambda_{k})\;\text{instead of}\;e^{-t\lambda_{k}}.
\]
For $t\to\infty$, $E_{\alpha}(-t^{\alpha}\lambda_{k})\sim\dfrac{1}{\lambda_{k}\,\Gamma(1-\alpha)}\,t^{-\alpha}$:
\emph{algebraic} decay replaces exponential decay. Hence mixing, return
probabilities, and any observable built from the heat kernel trace
decay more slowly, reflecting both subdiffusion (slower spread) and
long memory (long tails). Practical consequences on networks. 
\begin{itemize}
	\item \emph{Slower homogenization:} community imbalances, gradients, or
	initial heterogeneities persist longer (power-law tail). 
	\item \emph{Trapping in dense/central regions:} large $d_{i}$ increases
	the attempt rate, but the heavy-tailed clock still produces long residence
	episodes; dwell-time distributions are broad across nodes. 
	\item \emph{Non-exponential relaxation:} observables fit Mittag-Leffler
	or power-law decays rather than exponentials; log--log slopes reveal
	$\alpha$. 
\end{itemize}
In summary, the random clock $E_{t}$ \emph{compresses} operational
time from $t$ to $t^{\alpha}$ on average and introduces \emph{power-law
	waiting} between moves; together these yield \emph{subdiffusive spreading}
and \emph{memory} (aging) in the network dynamics.

\section{A generalized physico-mathematical context of subdiffusion on graphs}
\label{sec:differential-equations}

The fractional-time differential equation that give rise to subdiffusion is widely known. The SOE it is not only the approximation of a function: we show how the superposition of diffusions can emerge in a graph, providing a bridge to graph dynamics, as well as integrating the concepts in the generalized structure of kernel-based differential equations. 

\subsection{A multiplex diffusion}

Let us consider that there are $J$ parallel diffusion processes occurring
on the graph $G$, each of them having a diffusivity constant $\beta_{j}$.
Therefore, we can consider a representation of the graph as a multiplex
formed by $h$ layers \cite{kivela2014multilayer,boccaletti2014structure},
each of them representing the same graph. Every pair of layers $l_{i}$
and $l_{j}$ are interconnected by mean of extra edges from a vertex
in one layer $v\in l_{i}$ to itself in another layer $v\in l_{j}$. 

Then, we create the super-Laplacian matrix, which was introduced in
\cite{gomez2013diffusion}: 

\begin{equation}
\mathfrak{L}=\oplus_{j=1}^{J}\beta_{j}L+\omega I\otimes\left(J-I\right),
\end{equation}
where $\otimes$ is the Kronecker product, $\omega\in\mathbb{R}^{+}\cup\left\{ 0\right\} $
is the strength coupling between the layers, $J$ is an all-ones matrix
and $I$ is the identity matrix. Then, 

\begin{equation}
\mathfrak{L}=\left(\begin{array}{cccc}
\beta_{1}L & C_{12} & \cdots & C_{1J}\\
C_{21} & \beta_{2}L & \cdots & C_{2J}\\
\vdots & \vdots & \ddots & \vdots\\
C_{J1} & C_{J2} & \cdots & \beta_{J}L
\end{array}\right).
\end{equation}

Let us write the diffusion equation on the multiplex:

\begin{equation}
\partial_{t}u\left(t\right)+\mathfrak{L}u\left(t\right)=0,
\end{equation}
with initial condition $u\left(0\right)=\gamma\otimes\varphi$ where
$\gamma=\left[\gamma_{1}\cdots\gamma_{J}\right]^{T},$ $\gamma_{j}\in\mathbb{R}$
and $\varphi\in\mathbb{R}^{n\times1}$. The solution of the abstract
Cauchy problem is then

\begin{equation}
u\left(t\right)=e^{-t\mathfrak{L}}\,u(0).
\end{equation}

Let us consider a very weak coupling strength between pairs of layers,
$\omega\ll1$, such that we can consider the diffusion at each layer
almost independently of the diffusion on other layers:

\begin{equation}
u\left(t\right)\cong\oplus_{j=1}^{J}\gamma_{j}e^{-t\beta_{j}L}\varphi.
\end{equation}

We then can consider the sum of the concentrations at each vertex
of the graph in all layers:

\begin{equation}
c\left(t\right)=\sum_{j=1}^{h}u_{j}\left(t\right)=\sum_{j=1}^{J}\gamma_{j}e^{-t\beta_{j}L}\varphi,
\end{equation}
which is exactly the approximation we have found for the Mittag-Leffler
Laplacian function as a SOE. Therefore, the subdiffusive process on
the graph can be seen as the total diffusion among the vertices of
a graph occurring in parallel at different layers with their own diffusivities
and with initial conditions in one layer proportional to the others.

\subsection{Factorized high-order temporal diffusion equation}

Let us consider $h$ independent diffusive species $v_{1}(t),\dots,v_{J}(t)\in\mathbb{R}^{n}$
evolving according to 
\begin{equation}
\partial_{t}v_{j}(t)+\beta_{j}Lv_{j}(t)=0,\qquad j=1,\dots,h,\label{eq:multi-species}
\end{equation}
with distinct diffusion rates $\beta_{j}>0$. Define the observable
field 
\begin{equation}
u(t):=\sum_{j=1}^{J}v_{j}(t).\label{eq:observable}
\end{equation}

Since $L$ is diagonalizable, it suffices to work in a single Laplacian
eigenmode with eigenvalue $\lambda\ge0$ and corresponding eigenvector $z_\lambda \in \mathbb{R}^n$. 
Let $x_{j}(t)$ be the scalar component of $v_j(t)$ relative to 
the eigenmode $\lambda$; that is, $v_j(t) = \sum_{\lambda\in\mathrm{Sp}(L)} x_j(t) \, z_\lambda$. Denote $y(t)$ as $y(t)=\sum_{j=1}^{J}x_{j}(t)$. The system \eqref{eq:multi-species} reduces to 
\[
\partial_{t}x_{j}(t)+\beta_{j}\lambda x_{j}(t)=0,\qquad j=1,\dots,h,
\]
whose solutions are $x_{j}(t)=e^{-\beta_{j}\lambda t}x_{j}(0)$. Hence 
\[
y(t)=\sum_{j=1}^{J}e^{-\beta_{j}\lambda t}x_{j}(0).
\]
Each exponential $e^{-\beta_{j}\lambda t}$ is annihilated by the
operator $(\partial_{t}+\beta_{j}\lambda)$, and therefore 
\[
\prod_{m=1}^{J}(\partial_{t}+\beta_{m}\lambda)\,y(t)=0.
\]
Since this holds for every Laplacian eigenvalue $\lambda$, we obtain,
in operator form, the \emph{multiplicative diffusion equation} 
\begin{equation}
\prod_{m=1}^{J}\bigl(\partial_{t}+\beta_{m}L\bigr)\,u(t)\;=\;0,\label{eq:multiplicative}
\end{equation}
with suitable initial conditions. 

Then, we claim that 
\[
u(t)\;=\;\sum_{j=1}^{j}\gamma_{j}\,e^{-t\beta_{j}L}\,\varphi.
\]
is an exact solution of the \emph{multiplicative diffusion equation.}

\subsubsection{Verification}

It suffices to check that each term 
\[
u_{j}(t):=e^{-t\beta_{j}L}\,\varphi
\]
lies in the kernel of the operator $\prod_{m=1}^{J}(\partial_{t}+\beta_{m}L)$.
First observe that for each fixed $m$ and $j$, 
\[
(\partial_{t}+\beta_{m}L)\,u_{j}(t)=\bigl(-\beta_{j}L+\beta_{m}L\bigr)e^{-t\beta_{j}L}\,\varphi=(\beta_{m}-\beta_{j})\,L\,e^{-t\beta_{j}L}\,\varphi.
\]
Applying the full product, we get 
\[
\prod_{m=1}^{J}(\partial_{t}+\beta_{m}L)\,u_{j}(t)=\left(\prod_{m=1}^{J}(\beta_{m}-\beta_{j})\right)L^{J}e^{-t\beta_{j}L}\,\varphi.
\]
If the $\beta_{m}$ are pairwise distinct and $j\in\{1,\dots,h\}$,
then one of the factors in the product is $(\beta_{j}-\beta_{j})=0$.
Hence 
\[
\prod_{m=1}^{J}(\partial_{t}+\beta_{m}L)\,u_{j}(t)=0\quad\text{for each }j=1,\dots,h.
\]
By linearity, 
\[
\prod_{m=1}^{J}(\partial_{t}+\beta_{m}L)\,u(t)=\sum_{j=1}^{J}\gamma_{j}\prod_{m=1}^{J}(\partial_{t}+\beta_{m}L)\,u_{j}(t)=0.
\]

Thus, the function 
\[
u(t)\;=\;\sum_{j=1}^{J}\gamma_{j}\,e^{-t\beta_{j}L}\,\varphi
\]
is an exact solution of \eqref{eq:multiplicative}.

\subsubsection{Initial conditions}

Equation \eqref{eq:multiplicative} is of order $h$ in time, so one
can prescribe $u(0),\partial_{t}u(0),\dots,\partial_{t}^{h-1}u(0)$.
The representation 
\[
u(t)=\sum_{j=1}^{J}e^{-t\beta_{j}L}\,\psi_{j}
\]
is the general solution of \eqref{eq:multiplicative}, and the vectors
$\psi_{j}$ are uniquely determined from the initial data via a Vandermonde-type
linear system (mode by mode in the eigenbasis of $L$). Choosing 
\[
\psi_{j}=\gamma_{j}\varphi
\]
gives precisely 
\[
u(t)=\sum_{j=1}^{J}\gamma_{j}e^{-t\beta_{j}L}\,\varphi.
\]

Therefore, for any prescribed coefficients $\beta_{j}$, $\gamma_{j}$
and profile $\varphi$, there exist initial conditions for \eqref{eq:multiplicative}
such that the unique solution is exactly 
\[
u(t)=\sum_{j=1}^{J}\gamma_{j}e^{-t\beta_{j}L}\,\varphi.
\]

\subsubsection{Physical interpretation.}

Equation \eqref{eq:multiplicative} does not describe diffusion of
a single species. Rather, it is the \emph{effective evolution equation}
satisfied by the total concentration $u$ obtained by summing $k$
independent diffusive species with distinct diffusion rates. The product
structure arises from eliminating the hidden fields $v_{j}$ and encodes
the presence of multiple diffusive time scales. Thus, the multiplicative
diffusion equation \eqref{eq:multiplicative} represents the minimal
closed description of a multi-rate diffusion process when only the
aggregate observable is accessible. 

\subsection{Operator-valued memory and diffusion on graphs}

\subsubsection{Operator-valued Volterra kernels}

Let $X$ be a finite-dimensional Hilbert space and denote by $\mathcal{L}(X)$
the space of bounded linear operators on $X$. An \emph{operator-valued
Volterra kernel} is a strongly measurable mapping 
\[
K:\mathbb{R}_{+}\to\mathcal{L}(X)
\]
such that 
\[
\int_{0}^{T}\|K(t)\|_{\mathcal{L}(X)}\,dt<\infty\quad\text{for all }T>0.
\]
Given such a kernel, the associated Volterra convolution operator
acts on sufficiently regular functions $u:\mathbb{R}_{+}\to X$ by
\[
(\mathcal{K}u)(t):=\int_{0}^{t}K(t-s)\,u(s)\,ds,\qquad t\ge0.
\]
An evolution equation of the form 
\begin{equation}
u'(t)+\mathcal{K}u(t)=0\label{eq:Volterra-1}
\end{equation}
is called a \emph{Volterra evolution equation with operator-valued
memory}. The Volterra structure enforces causality, while the operator-valued
nature of $K$ allows for mode-dependent memory effects.

\medskip{}

\begin{rem}
Equation \eqref{eq:Volterra-1} describes memory acting directly on
the state $u$. Caputo-type formulations, in which the kernel acts
on $u'$, will arise later as singular limits of this general framework.
\end{rem}

\subsubsection{Multiplicative diffusion on graphs}

\noindent Let $L\in\mathbb{R}^{n\times n}$ be a symmetric graph Laplacian
($L\succeq0$ and $\ker L=\operatorname{span}\{\mathbf{1}\}$), and
let $\beta_{1},\dots,\beta_{k}>0$ be pairwise distinct. We consider
the \emph{multiplicative diffusion equation} 
\begin{equation}
\prod_{j=1}^{J}\bigl(\partial_{t}+\beta_{j}L\bigr)\,u(t)=0,\qquad t>0,\label{eq:mult-1}
\end{equation}
supplemented with $k$ initial conditions $u(0),u'(0),\dots,u^{(k-1)}(0)$.

Solutions of \eqref{eq:mult-1} admit the representation 
\begin{equation}
u(t)=\sum_{j=1}^{J}e^{-\beta_{j}tL}\,\psi_{j},\label{eq:sumsemigroup-1}
\end{equation}
where the vectors $\psi_{j}\in\mathbb{R}^{n}$ are uniquely determined
by the initial conditions. Consequently, the Laplace transform $\widehat{u}(s)=\mathcal{L}\{u\}(s)$
satisfies 
\begin{equation}
\widehat{u}(s)=\sum_{j=1}^{J}(sI+\beta_{j}L)^{-1}\,\psi_{j},\qquad u(0)=\sum_{j=1}^{J}\psi_{j}.\label{eq:uLap-1}
\end{equation}

\subsubsection{Equivalence with an operator-valued memory equation}

We now relate \eqref{eq:mult-1} to a Volterra evolution equation
of the form 
\begin{equation}
u'(t)+\int_{0}^{t}K(t-s)\,L\,u(s)\,ds=0,\qquad t>0,\label{eq:memory-1}
\end{equation}
where $K:\mathbb{R}_{+}\to\mathcal{L}(\mathbb{R}^{n})$ is an operator-valued
Volterra kernel.

Taking the Laplace transform of \eqref{eq:memory-1} yields 
\begin{equation}
\bigl(sI+\widehat{K}(s)L\bigr)\,\widehat{u}(s)=u(0),\label{eq:memoryLap-1}
\end{equation}
for $\Re(s)$ sufficiently large.

Assume for simplicity that $u(0)\in\operatorname{Ran}(L)$, so that
$L$ is invertible on the dynamically relevant subspace.\footnote{More generally, $L^{-1}$ may be replaced by the Moore--Penrose inverse
$L^{\dagger}$.} Define 
\begin{equation}
G^{(J)}(s):=\sum_{j=1}^{J}(sI+\beta_{j}L)^{-1}P_{j},\label{eq:G-1}
\end{equation}
where $P_{j}:\mathbb{R}^{n}\to\mathbb{R}^{n}$ are linear maps such
that $\psi_{j}=P_{j}u(0)$. If $G^{(J)}(s)$ is invertible for $\Re(s)$
sufficiently large, define 
\begin{equation}
\widehat{K}(s):=L^{-1}\bigl(G^{(J)}(s)^{-1}-sI\bigr).\label{eq:KhatDef-1}
\end{equation}
Then \eqref{eq:uLap-1} implies 
\[
\bigl(sI+\widehat{K}(s)L\bigr)\widehat{u}(s)=u(0),
\]
which coincides with \eqref{eq:memoryLap-1}. Hence, solutions of
the multiplicative diffusion equation \eqref{eq:mult-1} also satisfy
the Volterra equation \eqref{eq:memory-1} with operator-valued kernel
$K$.

\subsubsection{Spectral representation}

Let $L=U\Lambda U^{\top}$ with $\Lambda=\operatorname{diag}(\lambda_{\ell})$.
On each eigenmode $\lambda>0$, the kernel is characterized by 
\begin{equation}
\widehat{K}_{\lambda}(s)=\frac{1}{\lambda}\left(\frac{1}{\sum_{j=1}^{J}\frac{p_{j}(\lambda)}{s+\beta_{j}\lambda}}-s\right),\label{eq:KhatMode-1}
\end{equation}
where $p_{j}(\lambda)$ are the scalar multipliers induced by $P_{j}$.
Thus, the memory kernel is generally mode-dependent and cannot be
reduced to a single scalar kernel.

\subsubsection{Fractional diffusion as a singular memory limit}
In the next proposition we show that the fractional diffusion equation is equivalent to a Volterra equation, for a suitably chosen subdiffusion kernel $K_S$.
\begin{prop}
\label{lem:Caputo_from_Volterra} Let $L\in\mathbb{R}^{n\times n}$
be a symmetric, positive semidefinite operator (e.g.\ a graph Laplacian),
and let $0<\alpha<1$. Consider the Volterra evolution equation 
\begin{equation}
u'(t)+\int_{0}^{t}K(t-s)\,L\,u(s)\,ds=0,\qquad t>0,\label{eq:VolterraLemma}
\end{equation}
with initial condition $u(0)=u_{0}$. If the kernel is chosen as 
\begin{equation}
K_{S}(t)=\frac{1}{\Gamma(\alpha-1)}\,t^{\alpha-2},\qquad t>0,\label{eq:kernelS}
\end{equation}
then \eqref{eq:VolterraLemma} is equivalent to the Caputo fractional
diffusion equation 
\begin{equation}
D_{t}^{\alpha}u(t)=-L\,u(t),\qquad t>0,\label{eq:CaputoLemma}
\end{equation}
where $D_{t}^{\alpha}$ denotes the Caputo derivative of order $\alpha$.
Moreover, the unique solution is given by 
\begin{equation}
u(t)=E_{\alpha}(-t^{\alpha}L)\,u_{0},\label{eq:MLsolution}
\end{equation}
where $E_{\alpha}$ is the Mittag-Leffler function. 
\end{prop}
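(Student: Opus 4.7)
The plan is to recognize the kernel $K_S$ as the Gel'fand--Shilov distribution $g_{\alpha-1}(t) \coloneqq t^{\alpha-2}/\Gamma(\alpha-1)$, where $g_\beta(t) \coloneqq t^{\beta-1}/\Gamma(\beta)$ denotes the classical Riemann--Liouville kernel. For $0<\alpha<1$, $g_{\alpha-1}$ is not locally integrable at zero and must be interpreted as the distributional derivative of the locally integrable $g_\alpha$, or equivalently through Hadamard finite-part regularization. With this identification, equation~\eqref{eq:VolterraLemma} reads
\[
u'(t) + (g_{\alpha-1} \ast L u)(t) = 0,
\]
while \eqref{eq:CaputoLemma} reads
\[
(g_{1-\alpha} \ast u')(t) = -L\, u(t),
\]
using the standard convolutional representation $D_t^\alpha u = g_{1-\alpha} \ast u'$ of the Caputo derivative.

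The equivalence is then a two-line convolution argument based on the Sonine identity $g_{1-\alpha} \ast g_{\alpha-1} = g_0 = \delta$, which is the limiting case of the semigroup law $g_a \ast g_b = g_{a+b}$. Convolving the Volterra equation on the left with $g_{1-\alpha}$ and using associativity yields
\[
g_{1-\alpha} \ast u' + (g_{1-\alpha} \ast g_{\alpha-1}) \ast L u \;=\; D_t^\alpha u + L u \;=\; 0,
\]
and the reverse implication is obtained by convolving the Caputo equation with $g_{\alpha-1}$. A more elementary route that avoids distributions is to integrate \eqref{eq:VolterraLemma} once in time, producing the Abel-type integral equation $u(t) - u_0 + (g_\alpha \ast L u)(t) = 0$, and then to apply $D_t^\alpha$ using the identity $D_t^\alpha(g_\alpha \ast f) = f$ for smooth $f$ together with $D_t^\alpha u_0 = 0$.

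For the solution formula~\eqref{eq:MLsolution}, I would pass to the Laplace transform. On each spectral subspace of $L$ with eigenvalue $\lambda \ge 0$, the problem reduces to the scalar relation $s^\alpha \widehat{u}_\lambda(s) - s^{\alpha-1} u_{0,\lambda} = -\lambda\,\widehat{u}_\lambda(s)$, giving
\[
\widehat{u}_\lambda(s) \;=\; \frac{s^{\alpha-1}}{s^\alpha + \lambda},
\]
which is the classical Laplace transform of $E_\alpha(-t^\alpha \lambda)$. The operator-level identity $u(t) = E_\alpha(-t^\alpha L)\, u_0$ then follows by the spectral functional calculus applied to the symmetric positive semidefinite $L$, and uniqueness comes from the injectivity of the Laplace transform. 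The main obstacle I anticipate is the rigorous handling of the hypersingular kernel $K_S$ at $t=0$; this is cleanest if one works throughout with the Sonine pair $(g_{1-\alpha}, g_{\alpha-1})$ and interprets the Volterra integral as $\tfrac{d}{dt}\!\int_0^t g_\alpha(t-s)\,L u(s)\, ds$, a representation that is compatible with the $C^1$-regularity of the Mittag--Leffler solution away from the origin and with its behavior $u(t) - u_0 = O(t^\alpha)$ near $t = 0$.
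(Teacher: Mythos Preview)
Your argument is correct, but it differs from the paper's in the equivalence step. The paper works entirely in the Laplace domain: it transforms \eqref{eq:VolterraLemma} to $(sI+\widehat K(s)L)\widehat u(s)=u_0$, inserts $\widehat K_S(s)=s^{1-\alpha}$, multiplies by $s^{\alpha-1}$ to obtain $(s^\alpha I+L)\widehat u(s)=s^{\alpha-1}u_0$, and recognizes this as the Laplace transform of \eqref{eq:CaputoLemma}. You instead stay in the time domain and exploit the Sonine pair $(g_{1-\alpha},g_{\alpha-1})$, convolving each equation with the complementary kernel; this is the structural reason the equivalence holds and makes explicit that $K_S$ is the convolutional inverse of the Caputo kernel. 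Your route is more careful about the hypersingularity of $K_S$ (which is not locally integrable for $0<\alpha<1$), a point the paper leaves implicit when writing $\widehat K_S(s)=s^{1-\alpha}$; the paper's route is shorter and avoids having to introduce distributional derivatives or the integrated Abel form. For the solution formula \eqref{eq:MLsolution} you and the paper do essentially the same thing: pass to Laplace, solve mode by mode to get $s^{\alpha-1}/(s^\alpha+\lambda)$, and invert to $E_\alpha(-t^\alpha\lambda)$.
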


\begin{proof}
Let $\widehat{u}(s)=\mathcal{L}\{u\}(s)$ denote the Laplace transform
of $u$. Using the identities 
\[
\mathcal{L}\{u'\}(s)=s\widehat{u}(s)-u_{0},\qquad\mathcal{L}\!\left\{ \int_{0}^{t}K(t-s)Lu(s)\,ds\right\} =\widehat{K}(s)\,L\,\widehat{u}(s),
\]
the Laplace transform of \eqref{eq:VolterraLemma} yields 
\begin{equation}
\bigl(sI+\widehat{K}(s)L\bigr)\,\widehat{u}(s)=u_{0}.\label{eq:LaplaceVolterra}
\end{equation}
For the kernel \eqref{eq:kernelS}, the classical Laplace-transform
formula gives 
\[
\widehat{K}_{S}(s)=s^{1-\alpha},\qquad\Re(s)>0.
\]
Substituting this expression into \eqref{eq:LaplaceVolterra} gives
\[
\bigl(sI+s^{1-\alpha}L\bigr)\,\widehat{u}(s)=u_{0}.
\]
Multiplying both sides by $s^{\alpha-1}$ yields 
\begin{equation}
\bigl(s^{\alpha}I+L\bigr)\widehat{u}(s)=s^{\alpha-1}u_{0}\label{eq:LaplaceStep}
\end{equation}
On the other hand, the Laplace transform of the Caputo derivative
$D_{t}^{\alpha}u$ is 
\[
\mathcal{L}\{D_{t}^{\alpha}u\}(s)=s^{\alpha}\widehat{u}(s)-s^{\alpha-1}u_{0}.
\]
Taking the Laplace transform of \eqref{eq:CaputoLemma} therefore
yields 
\[
s^{\alpha}\widehat{u}(s)-s^{\alpha-1}u_{0}=-L\,\widehat{u}(s),
\]
which rearranges to \eqref{eq:LaplaceStep}. Thus, equations \eqref{eq:VolterraLemma}
and \eqref{eq:CaputoLemma} are equivalent.

Finally, the expression of $\widehat{u}(s)$ in terms of $u_{0}$
can be obtained from \eqref{eq:LaplaceStep} 
\begin{equation}
\widehat{u}(s)=s^{\alpha-1}\bigl(s^{\alpha}I+L\bigr)^{-1}u_{0}.\label{eq:LaplaceSolution}
\end{equation}
The inverse Laplace transform of $s^{\alpha-1}(s^{\alpha}I+L)^{-1}$
is the operator-valued Mittag-Leffler function $E_{\alpha}(-t^{\alpha}L)$,
which proves \eqref{eq:MLsolution}. 
\end{proof}
\begin{rem}
\label{rem:resolvent_and_limit} We note that the Laplace-domain representation
is written in a \emph{resolvent form}. More generally, finite
superpositions of Laplacian semigroups of the form 
\[
u(t)=\sum_{j=1}^{J}e^{-\beta_{j}tL}\,\psi_{j}
\]
lead (via the construction in \S\ref{eq:KhatMode-1}) to operator-valued
memory kernels whose Laplace symbols are \emph{rational} in $s$ (mode-by-mode
in the spectrum of $L$). When the number of terms increases and the
associated time scales become dense, these rational symbols may converge,
in an appropriate sense, to the fractional symbol $s^{\alpha-1}$.
In this resolvent sense, time-fractional diffusion can be viewed as
a singular, scale-free limit of diffusion with operator-valued Volterra
memory. 
\end{rem}

Note that the kernel $K_{S}(t)$ is negative, since $0<\alpha<1$
and thus $\Gamma(\alpha-1)<0$. The stability of the heat Volterra
equation has been studied in \cite{LiZhouGao2018}, for a continuous
spatial setting, with particular attention to positive and negative
kernels. An example of application with a negative kernel in a stochastic setting is given in \cite{Goychuk2018}.
\begin{prop}
\label{thm:SOE_to_fractional_resolvent} Let $L\in\mathbb{R}^{n\times n}$
be a symmetric graph Laplacian with spectrum $\sigma(L)\subset[0,\lambda_{\max}]$
and let $0<\alpha<1$. For $s$ with $\Re(s)>0$, define the \emph{fractional
resolvent family} 
\begin{equation}
\widehat{G}_{S}(s):=s^{\alpha-1}(s^{\alpha}I+L)^{-1}.\label{eq:FS_symbol}
\end{equation}
Assume that for each $h\in\mathbb{N}$ we are given coefficients $a_{j},b_{j}>0$
($j=1,\dots,h$) and define the rational operator-valued symbol 
\begin{equation}
\widehat{G}^{(J)}(s):=\sum_{j=1}^{J}a_{j}\,(sI+b_{j}L)^{-1}.\label{eq:FJ_symbol}
\end{equation}
Suppose there exists a domain $\Omega\subset\{s\in\mathbb{C}:\Re(s)>0\}$
and constants $M,\delta>0$ such that for all $s\in\Omega$: 
\begin{align}
\|\widehat{G}_{S}(s)\| & \le M,\label{eq:bound_FS}\\
\|\widehat{G}_{S}(s)^{-1}\|_{\operatorname{Ran}(L)\to\operatorname{Ran}(L)} & \le M,\label{eq:bound_inv_FS}\\
\|\widehat{G}^{(J)}(s)-\widehat{G}_{S}(s)\| & \le\delta,\qquad\text{with }\ \delta M<1.\label{eq:close_FJ}
\end{align}
(Here $\widehat{G}_{S}(s)^{-1}$ denotes the inverse on $\operatorname{Ran}(L)$,
and the norm is the operator norm on $\mathbb{R}^{n}$.)

Then for every $s\in\Omega$:

$\widehat{G}^{(J)}(s)$ is invertible on $\operatorname{Ran}(L)$ and
\begin{equation}
\|\widehat{G}^{(J)}(s)^{-1}-\widehat{G}_{S}(s)^{-1}\|_{\operatorname{Ran}(L)\to\operatorname{Ran}(L)}\le\frac{M^{2}}{1-\delta M}\,\|\widehat{G}^{(J)}(s)-\widehat{G}_{S}(s)\|.\label{eq:inv_stability}
\end{equation}

Define the \emph{SOE-induced operator-valued memory symbol} by 
\begin{equation}
\widehat{K}^{(J)}(s):=\bigl(\widehat{G}^{(J)}(s)^{-1}-sI\bigr)L^{\dagger},\qquad s\in\Omega,\label{eq:KJ_def}
\end{equation}
where $L^{\dagger}$ is the Moore--Penrose inverse. Define analogously
\begin{equation}
\widehat{K}_{S}(s):=\bigl(\widehat{G}_{S}(s)^{-1}-sI\bigr)L^{\dagger}=s^{1-\alpha}I\quad\text{on }\operatorname{Ran}(L).\label{eq:KS_from_FS}
\end{equation}
Then 
\begin{equation}
(sI+\widehat{K}^{(J)}(s)L)^{-1}=\widehat{G}^{(J)}(s),\qquad(sI+\widehat{K}_{S}(s)L)^{-1}=\widehat{G}_{S}(s),\label{eq:resolvent_identities}
\end{equation}
and moreover the resolvents converge in operator norm: 
\begin{equation}
\|(sI+\widehat{K}^{(J)}(s)L)^{-1}-(sI+s^{\alpha-1}L)^{-1}\|=\|\widehat{G}^{(J)}(s)-\widehat{G}_{S}(s)\|\xrightarrow[J\to\infty]{}0,\qquad s\in\Omega.\label{eq:resolvent_convergence}
\end{equation}
\end{prop}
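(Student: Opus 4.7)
The plan is to break the proposition into three self-contained steps: (i) a Neumann-series perturbation argument to invert $\widehat{G}^{(J)}(s)$ and obtain the stability bound \eqref{eq:inv_stability}; (ii) a purely algebraic verification of the resolvent identities \eqref{eq:resolvent_identities}; and (iii) a direct combination yielding the norm convergence \eqref{eq:resolvent_convergence}. Throughout, I will work on the invariant decomposition $\mathbb{R}^{n}=\ker(L)\oplus\operatorname{Ran}(L)$, since both symbols $\widehat{G}^{(J)}(s)$ and $\widehat{G}_{S}(s)$ preserve this splitting (they are rational functions of $L$), and inversions involving $L^{\dagger}$ only make sense on $\operatorname{Ran}(L)$.

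For Step (i), I will write
\[
\widehat{G}^{(J)}(s)=\widehat{G}_{S}(s)\bigl(I+E(s)\bigr),\qquad E(s):=\widehat{G}_{S}(s)^{-1}\bigl(\widehat{G}^{(J)}(s)-\widehat{G}_{S}(s)\bigr),
\]
where both factors are understood as operators on $\operatorname{Ran}(L)$. By hypotheses \eqref{eq:bound_inv_FS} and \eqref{eq:close_FJ}, $\|E(s)\|\le M\delta<1$, so $I+E(s)$ is invertible by the standard Neumann series and $\|(I+E(s))^{-1}\|\le(1-M\delta)^{-1}$. Combining this with \eqref{eq:bound_FS} and the identity
\[
\widehat{G}^{(J)}(s)^{-1}-\widehat{G}_{S}(s)^{-1}=-\widehat{G}^{(J)}(s)^{-1}\bigl(\widehat{G}^{(J)}(s)-\widehat{G}_{S}(s)\bigr)\widehat{G}_{S}(s)^{-1}
\]
yields \eqref{eq:inv_stability} after bounding each factor in operator norm.

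For Step (ii), the identities are obtained by direct substitution. Using the definition \eqref{eq:KJ_def},
\[
sI+\widehat{K}^{(J)}(s)L\;=\;sI+\bigl(\widehat{G}^{(J)}(s)^{-1}-sI\bigr)L^{\dagger}L,
\]
and since $L^{\dagger}L$ is the orthogonal projector onto $\operatorname{Ran}(L)$, this operator restricts on $\operatorname{Ran}(L)$ to $\widehat{G}^{(J)}(s)^{-1}$; inverting gives the first identity in \eqref{eq:resolvent_identities}. The computation for $\widehat{K}_{S}(s)$ is identical once one checks that $\widehat{G}_{S}(s)^{-1}=sI+s^{1-\alpha}L$ on $\operatorname{Ran}(L)$, which follows from \eqref{eq:FS_symbol}, so that $\widehat{K}_{S}(s)=s^{1-\alpha}LL^{\dagger}=s^{1-\alpha}I$ on $\operatorname{Ran}(L)$. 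Finally, Step (iii) is immediate: by \eqref{eq:resolvent_identities} the two resolvents coincide with $\widehat{G}^{(J)}(s)$ and $\widehat{G}_{S}(s)$, so
\[
\|(sI+\widehat{K}^{(J)}(s)L)^{-1}-(sI+s^{\alpha-1}L)^{-1}\|=\|\widehat{G}^{(J)}(s)-\widehat{G}_{S}(s)\|,
\]
and the right-hand side tends to $0$ as $J\to\infty$ whenever the approximation constant $\delta=\delta_{J}$ in \eqref{eq:close_FJ} does, as must be the case for the SOE schemes constructed in \S\ref{sec:soe} (geometric convergence via Theorem~\ref{thm:geo}).

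The main obstacle is bookkeeping around the kernel of $L$: the kernel $\widehat{K}^{(J)}(s)$ is defined through $L^{\dagger}$ and is therefore only unambiguous on $\operatorname{Ran}(L)$, while the evolution on $\ker(L)=\operatorname{span}\{\mathbf{1}\}$ is trivial ($u'=0$) for both the fractional and the multi-exponential dynamics. I will handle this by explicitly decomposing every identity along $\ker(L)\oplus\operatorname{Ran}(L)$, noting that the resolvent identities reduce to $s^{-1}I$ on $\ker(L)$ for both symbols, and that the stability bound \eqref{eq:inv_stability} and the norm convergence in \eqref{eq:resolvent_convergence} should be read on the invariant subspace $\operatorname{Ran}(L)$ where $L^{\dagger}L=I$. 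Beyond this, the argument is essentially a resolvent perturbation computation of Neumann-series type.
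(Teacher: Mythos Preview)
Your proposal is correct and mirrors the paper's proof almost exactly: the same three-step structure (Neumann-series perturbation on $\operatorname{Ran}(L)$, algebraic verification via $L^{\dagger}L=I$ on $\operatorname{Ran}(L)$, then direct combination), with the same resolvent identity $B^{-1}-A^{-1}=-B^{-1}(B-A)A^{-1}$ driving the stability bound. One small slip: in Step~(i) you cite \eqref{eq:bound_FS}, but the bound you actually use to control both factors $\|\widehat{G}_{S}(s)^{-1}\|$ and $\|\widehat{G}^{(J)}(s)^{-1}\|$ is \eqref{eq:bound_inv_FS}; the hypothesis \eqref{eq:bound_FS} is not needed in the argument (nor does the paper use it).
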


\begin{proof}
Fix $s\in\Omega$.

\smallskip{}
 \emph{Step 1: stability of invertibility on $\operatorname{Ran}(L)$.}
Let $A:=\widehat{G}_{S}(s)$ and $B:=\widehat{G}^{(J)}(s)$, viewed
as operators on $\operatorname{Ran}(L)$. By assumption $A$ is invertible
there and $\|A^{-1}\|\le M$. Moreover, 
\[
\|A^{-1}(B-A)\|\le\|A^{-1}\|\,\|B-A\|\le M\delta<1.
\]
Hence $B$ is invertible on $\operatorname{Ran}(L)$ and the standard
resolvent identity gives 
\[
B^{-1}-A^{-1}=B^{-1}(A-B)A^{-1}.
\]
Using $\|B^{-1}\|\le\|A^{-1}\|/(1-\|A^{-1}(B-A)\|)\le M/(1-\delta M)$,
we obtain 
\[
\|B^{-1}-A^{-1}\|\le\frac{M}{1-\delta M}\,\|A-B\|\,M=\frac{M^{2}}{1-\delta M}\,\|B-A\|,
\]
which proves \eqref{eq:inv_stability}.

\smallskip{}
 \emph{Step 2: definition of the memory symbols and resolvent identities.}
By definition \eqref{eq:KJ_def}, 
\[
\widehat{G}^{(J)}(s)^{-1}=sI+\widehat{K}^{(J)}(s)L\quad\text{on }\operatorname{Ran}(L),
\]
because $L^{\dagger}L$ is the identity on $\operatorname{Ran}(L)$.
Taking inverses on $\operatorname{Ran}(L)$ yields 
\[
(sI+\widehat{K}^{(J)}(s)L)^{-1}=\widehat{G}^{(J)}(s)\quad\text{on }\operatorname{Ran}(L),
\]
and the same argument gives $(sI+\widehat{K}_{S}(s)L)^{-1}=\widehat{G}_{S}(s)$.
This proves \eqref{eq:resolvent_identities}.

\smallskip{}
\noindent{} \emph{Step 3: identifying $\widehat{K}_{S}(s)$ and resolvent convergence.}
From \eqref{eq:FS_symbol} we have 
\[
\widehat{G}_{S}(s)^{-1}=s^{1-\alpha}(s^{\alpha}I+L)=sI+s^{1-\alpha}L,
\]
hence $\widehat{K}_{S}(s)=s^{\alpha-1}I$ as in \eqref{eq:KS_from_FS}
on $\operatorname{Ran}(L)$. Therefore, 
\[
(sI+\widehat{K}_{S}(s)L)^{-1}=(sI+s^{1-\alpha}L)^{-1}=\widehat{G}_{S}(s).
\]
Finally, \eqref{eq:resolvent_convergence} follows immediately from
\eqref{eq:resolvent_identities} and the assumed closeness \eqref{eq:close_FJ},
and in particular tends to $0$ if $\|\widehat{G}^{(J)}(s)-\widehat{G}_{S}(s)\|\to0$
as $J\to\infty$. 
\end{proof}
An analysis of the subdiffusion equation with memory with relation
to rational kernels in the Laplace-domain can be found in \cite{Ponce2021}.
\begin{rem}
\label{rem:why_prop_matters} Theorem~\ref{thm:SOE_to_fractional_resolvent}
provides a precise way to say that an SOE approximation does not merely
approximate a solution curve, but induces an \emph{approximate evolution
law} in the same analytic class as the limiting fractional dynamics,
namely through the Laplace-domain resolvent. It makes explicit the
structural pathway 
\[
\boxed{\text{\parbox{3.3cm}{\centering SOE }}\;\Longleftrightarrow\;\text{\parbox{3.3cm}{\centering operator-valued }}\;\Longrightarrow\;\text{\parbox{3.3cm}{\centering fractional limit}}}
\]

where the double arrow refers to the exact algebraic construction
$\widehat{G}^{(J)}\leftrightarrow\widehat{K}^{(J)}$ via \eqref{eq:KJ_def}
and \eqref{eq:resolvent_identities}, while the single arrow denotes
the limiting process $J\to\infty$ in the resolvent sense \eqref{eq:resolvent_convergence}.
This is the natural notion of convergence for nonlocal-in-time equations,
since the resolvent family uniquely characterizes the corresponding
linear evolution. 
\end{rem}

\medskip{}

\begin{rem}
\label{rem:what_not_claimed} The conclusions of Theorem~\ref{thm:SOE_to_fractional_resolvent}
are restricted to \emph{Laplace-domain} (resolvent) statements on
a domain $\Omega\subset\{\Re(s)>0\}$. In particular, the proposition
does \emph{not} assert any of the following:

\textbf{Time-domain kernel convergence.} It does not claim that the
inverse Laplace transforms satisfy $K^{(J)}(t)\to K_{S}(t)$ pointwise,
in $L_{\mathrm{loc}}^{1}(\mathbb{R}_{+})$, or in any distributional
sense.

\textbf{Strong convergence of semigroup families in physical time.}
If a physical-time representation involves nonlinear reparametrizations
(e.g.\ $\tau=t^{\alpha}$), the proposition does not claim that uniform
approximation of $G^{(J)}(\tau)$ implies convergence of Laplace transforms
in the physical time variable.

\textbf{Equivalence to a Caputo equation for each finite $J$.} For
finite $J$, the induced kernel is regular (typically a finite exponential
mixture in time), and the proposition does not claim that the corresponding
memory equation is a Caputo fractional differential equation.

\textbf{Global-in-$s$ convergence.} No statement is made about convergence
uniformly for all $\Re(s)>0$, nor about behavior near $s=0$ or $|s|\to\infty$
unless such regions are included in $\Omega$ and the hypotheses are
verified there. 
\end{rem}

\section*{Conclusions}

In this work we have investigated subdiffusion on graphs from a structural,
dynamical, and mechanistic perspective, emphasizing the role of memory
rather than modifying spatial connectivity or transition rules. By
grounding time-fractional diffusion in a random time-change framework,
we have shown that subdiffusion on networks is not merely a slower
version of classical diffusion, but a fundamentally non-Markovian
process in which past states actively shape future evolution. Importantly,
this loss of Markovianity occurs without sacrificing linearity or
mass conservation, making time-fractional models both analytically
tractable and physically meaningful for networked systems.

A central outcome of our analysis is the demonstration that Mittag-Leffler
graph dynamics admit a convex, mass-preserving decomposition into
a finite or infinite superposition of classical heat semigroups evaluated
at rescaled times. This representation provides a transparent interpretation
of subdiffusion as a multi-time-scale diffusion process and forms
the basis for efficient numerical approximations via sum-of-exponentials
schemes. Beyond computational convenience, this viewpoint reveals
how memory operates at the operator level and clarifies the relationship
between fractional dynamics, diffusion with memory kernels, and multi-rate
transport processes.

At the vertex level, we have shown that memory in fractional diffusion
is inherently heterogeneous. Despite being governed by a single fractional
order, different nodes---and even the same node at different times---exhibit
distinct memory biases, favoring remote past, recent past, or present
contributions depending on the local temporal curvature of the evolving
state. This heterogeneity is absent in classical diffusion and highlights
how fractional dynamics encode rich, spatially distributed memory
patterns that reflect both network structure and initial conditions.

These memory effects have concrete consequences for transport and
geometry on graphs. Fractional diffusion induces qualitative biases
such as early-time convex decay at sources, concave growth at neighbors,
algebraic relaxation of modes, and degree-dependent waiting times.
Together, these features reshape how mass propagates through the network,
altering trapping behavior and path selection in ways that cannot
be captured by Markovian models.

Perhaps most strikingly, we have shown that subdiffusive dynamics
enable a particle to discover shortest topological paths between vertices
using only local information. Memory acts as an implicit reinforcement
mechanism that suppresses exploration of suboptimal routes and stabilizes
optimal trajectories, leading shortest subdiffusive paths to coincide
with shortest topological paths. Moreover, among these, subdiffusion
preferentially selects paths traversing high-degree regions, reversing
the hub-avoidance tendency of classical diffusion and revealing a
form of memory-assisted navigation on networks.

Finally, by connecting fractional diffusion to finite superpositions
of classical diffusions and to operator-valued Volterra memory equations,
we have placed time-fractional graph dynamics within a broader hierarchy
of diffusion models with memory. In this view, fractional equations
arise as singular, scale-free limits of multi-rate processes, providing
a unifying framework that links analytical theory, numerical approximation,
and physical interpretation. Taken together, our results clarify what
subdiffusion means on a graph, how memory reshapes network transport,
and why fractional dynamics constitute a natural and structurally
rich extension of classical diffusion on networks. 

\section{Appendix}

\subsubsection{Numerical window tables and usage}

We summarize practical window choices for the log--trapezoidal SOE,
based on the tail bounds previously found. For the \emph{right tail},
the stretched--exponential decay of $M_{\alpha}$ implies that 
\[
\theta_{\max}\;\ge\;\Big(\tfrac{\log(2/\varepsilon)}{c_{\alpha}}\Big)^{\!1/q_{\alpha}},\qquad c_{\alpha}=(1-\alpha)\,\alpha^{\alpha/(1-\alpha)},\quad q_{\alpha}=\frac{1}{1-\alpha},
\]
is sufficient to make ${\displaystyle \sup_{\lambda\in[0,\lambda_{\max}]}\int_{\theta_{\max}}^{\infty}M_{\alpha}(\theta)\,e^{-\theta t\lambda}\,d\theta\lesssim\varepsilon/2}$
\emph{uniformly} in $\lambda$ (i.e., including the zero mode); see
\eqref{eq:right-tail-general}. For the \emph{left tail}, taking $\theta_{\min}=\frac{\varepsilon}{2}\Gamma(1-\alpha)$
ensures ${\displaystyle \sup_{\lambda\in[0,\lambda_{\max}]}\int_{0}^{\theta_{\min}}M_{\alpha}(\theta)\,e^{-\theta t\lambda}\,d\theta\le\varepsilon/2}$
(Prop.~\ref{prop:left-tail}).

\medskip{}
 Tables~\ref{tab:theta-max}--\ref{tab:theta-min} list these cutoffs
for typical $\alpha$ and $\varepsilon$ values, together with $y_{\max}=\log\theta_{\max}$
and $y_{\min}=\log\theta_{\min}$, which are the actual endpoints
used by the SOE in the variable $y=\log\theta$.

\begin{table}[ht]
\centering {\small\setlength{\tabcolsep}{6pt} {}{}{}}{\small{}%
\begin{tabular}{lcccc}
\toprule 
{\small$\alpha$} & {\small$\varepsilon=10^{-6}$} & {\small$\varepsilon=10^{-8}$} & {\small$\varepsilon=10^{-10}$} & {\small$\varepsilon=10^{-12}$}\tabularnewline
\midrule 
{\small 0.20} & {\small 14.016 {[}2.640{]}} & {\small 17.474 {[}2.861{]}} & {\small 20.768 {[}3.033{]}} & {\small 23.936 {[}3.175{]}}\tabularnewline
{\small 0.30} & {\small 11.979 {[}2.483{]}} & {\small 14.529 {[}2.676{]}} & {\small 16.899 {[}2.827{]}} & {\small 19.134 {[}2.951{]}}\tabularnewline
{\small 0.50} & {\small\medspace{}7.618 {[}2.031{]}} & {\small\medspace{}8.744 {[}2.168{]}} & {\small\medspace{}9.740 {[}2.276{]}} & {\small 10.644 {[}2.365{]}}\tabularnewline
{\small 0.70} & {\small\medspace{}4.109 {[}1.413{]}} & {\small\medspace{}4.464 {[}1.496{]}} & {\small\medspace{}4.762 {[}1.561{]}} & {\small\medspace{}5.023 {[}1.614{]}}\tabularnewline
{\small 0.80} & {\small\medspace{}2.816 {[}1.035{]}} & {\small\medspace{}2.976 {[}1.090{]}} & {\small\medspace{}3.107 {[}1.134{]}} & {\small\medspace{}3.219 {[}1.169{]}}\tabularnewline
\bottomrule
\end{tabular}}{\small\caption{Conservative right--tail cutoff $\theta_{\max}$ ensuring ${\displaystyle \sup_{\lambda\in[0,\lambda_{\max}]}\protect\int_{\theta_{\max}}^{\infty}M_{\alpha}(\theta)\,e^{-\theta t\lambda}\,d\theta\lesssim\varepsilon/2}$.
Brackets show $y_{\max}=\log\theta_{\max}$ for direct use in the
log--trapezoidal SOE. Values are \emph{independent} of $t$ and of
the spectrum; they come from the uniform bound \eqref{eq:right-tail-general}
using the stretched--exponential tail of $M_{\alpha}$ with constants
$c_{\alpha},q_{\alpha}$ \cite{MainardiBook,GorenfloKilbasMainardiRogosin}.}
\label{tab:theta-max}}
\end{table}

\begin{table}[ht]
\centering {\small\setlength{\tabcolsep}{6pt} {}{}{}}{\small{}%
\begin{tabular}{lcccc}
\toprule 
{\small$\alpha$} & {\small$\varepsilon=10^{-6}$} & {\small$\varepsilon=10^{-8}$} & {\small$\varepsilon=10^{-10}$} & {\small$\varepsilon=10^{-12}$}\tabularnewline
\midrule 
{\small 0.20} & {\small$5.82\times10^{-7}$ {[}$-14.357${]}} & {\small$5.82\times10^{-9}$ {[}$-18.962${]}} & {\small$5.82\times10^{-11}$ {[}$-23.567${]}} & {\small$5.82\times10^{-13}$ {[}$-28.172${]}}\tabularnewline
{\small 0.30} & {\small$6.49\times10^{-7}$ {[}$-14.248${]}} & {\small$6.49\times10^{-9}$ {[}$-18.853${]}} & {\small$6.49\times10^{-11}$ {[}$-23.458${]}} & {\small$6.49\times10^{-13}$ {[}$-28.063${]}}\tabularnewline
{\small 0.50} & {\small$8.86\times10^{-7}$ {[}$-13.936${]}} & {\small$8.86\times10^{-9}$ {[}$-18.541${]}} & {\small$8.86\times10^{-11}$ {[}$-23.147${]}} & {\small$8.86\times10^{-13}$ {[}$-27.752${]}}\tabularnewline
{\small 0.70} & {\small$1.50\times10^{-6}$ {[}$-13.413${]}} & {\small$1.50\times10^{-8}$ {[}$-18.018${]}} & {\small$1.50\times10^{-10}$ {[}$-22.623${]}} & {\small$1.50\times10^{-12}$ {[}$-27.228${]}}\tabularnewline
{\small 0.80} & {\small$2.30\times10^{-6}$ {[}$-12.985${]}} & {\small$2.30\times10^{-8}$ {[}$-17.590${]}} & {\small$2.30\times10^{-10}$ {[}$-22.195${]}} & {\small$2.30\times10^{-12}$ {[}$-26.800${]}}\tabularnewline
\bottomrule
\end{tabular}}{\small\caption{Left--tail cutoff $\theta_{\min}=\tfrac{\varepsilon}{2}\Gamma(1-\alpha)$
guaranteeing ${\displaystyle \sup_{\lambda\in[0,\lambda_{\max}]}\protect\int_{0}^{\theta_{\min}}M_{\alpha}(\theta)\,e^{-\theta t\lambda}\,d\theta\le\varepsilon/2}$.
Brackets show $y_{\min}=\log\theta_{\min}$. (Rounded to three significant
digits.)}
\label{tab:theta-min}}
\end{table}

Let us now explain how to use these Tables.
\begin{enumerate}
\item Fix a target tolerance $\varepsilon$ for the \emph{tails} (e.g.,
$\varepsilon=10^{-12}$ in double precision). 
\item Read off $(\theta_{\min},y_{\min})$ from Table~\ref{tab:theta-min}
and $(\theta_{\max},y_{\max})$ from Table~\ref{tab:theta-max} for
your $\alpha$ and $\varepsilon$. 
\item Set the SOE window in the log variable as $[y_{\min},\,y_{\max}]$;
then increase $J$ until the \emph{in-window discretization} error
decays to your overall tolerance (Theorem~\ref{thm:geo} shows geometric
decay with $J$). 
\end{enumerate}
Variants and remarks. 
\begin{itemize}
\item (\emph{Mean-zero subspace.}) If you only act on $\mathbf{1}^{\perp}$
and know the spectral gap $\lambda_{2}>0$, you may choose $\theta_{\max}=\frac{1}{t\lambda_{2}}\log(2/\varepsilon)$
(for reuse over $t\in[t_{\min},t_{\max}]$, use $t_{\min}$ here to
keep the nodes $t$-independent) (Prop.~\ref{prop:right-tail-gap}),
which can be \emph{much smaller} than the uniform value in Table~\ref{tab:theta-max}. 
\item (\emph{Heuristic rule of thumb.}) For high-frequency modes it is common
to take $\theta_{\max}\approx\frac{32}{t\,\lambda}$ so that $e^{-\theta_{\max}t\lambda}\approx10^{-14}$
(machine precision). The uniform bound above is conservative and includes
the zero mode. 
\item (\emph{Scaling with $\alpha$.}) As $\alpha\uparrow1$ the tail constant
$c_{\alpha}$ grows and the required $\theta_{\max}$ shrinks; as
$\alpha\downarrow0$ the reverse happens, hence the larger values
in the top row of Table~\ref{tab:theta-max}. 
\end{itemize}
\textbf{Acknowledgments:} E.E. thanks financial support from Agencia
Estatal de Investigación (AEI, MCI, Spain) MCIN/AEI/ 10.13039/501100011033
under grant PID2023-149473NB-I00 and by Agencia Estatal de Investigación
(AEI, MCI, Spain) MCIN/AEI/ 10.13039/501100011033 and Fondo Europeo
de Desarrollo Regional (FEDER, UE) under the Marí­a de Maeztu Program
for units of Excellence in R\&D, grant CEX2021-001164-M) are also
acknowledged.

\bibliographystyle{plain}
\bibliography{References}

@book{SchillingSongVondracekBernstein,
  author    = {Schilling, Ren{\'e} L. and Song, Renming and Vondra\v{c}ek, Zoran},
  title     = {Bernstein Functions: Theory and Applications},
  series    = {De Gruyter Studies in Mathematics},
  volume    = {37},
  edition   = {2},
  publisher = {de Gruyter},
  address   = {Berlin},
  year      = {2012},
  isbn      = {978-3-11-025229-3}
}

@book{MainardiBook,
  author    = {Mainardi, Francesco},
  title     = {Fractional Calculus and Waves in Linear Viscoelasticity: An Introduction to Mathematical Models},
  publisher = {Imperial College Press},
  address   = {London},
  year      = {2010},
  isbn      = {978-1-84816-329-4}
}

@book{GorenfloKilbasMainardiRogosin,
  author    = {Gorenflo, Rudolf and Kilbas, Anatoly A. and Mainardi, Francesco and Rogosin, Sergei V.},
  title     = {Mittag-Leffler Functions, Related Topics and Applications},
  series    = {Springer Monographs in Mathematics},
  publisher = {Springer},
  address   = {Berlin},
  year      = {2014},
  isbn      = {978-3-662-43550-2}
}

@book{MeerschaertSikorskii,
  author    = {Meerschaert, Mark M. and Sikorskii, Alla},
  title     = {Stochastic Models for Fractional Calculus},
  series    = {De Gruyter Studies in Mathematics},
  volume    = {43},
  publisher = {de Gruyter},
  address   = {Berlin},
  year      = {2011},
  isbn      = {978-3-11-025525-6}
}

@book{ApplebaumLevy,
  author    = {Applebaum, David},
  title     = {L{\'e}vy Processes and Stochastic Calculus},
  series    = {Cambridge Studies in Advanced Mathematics},
  volume    = {116},
  edition   = {2},
  publisher = {Cambridge University Press},
  address   = {Cambridge},
  year      = {2009},
  isbn      = {978-0-521-73865-1},
  doi       = {10.1017/CBO9780511809781}
}

@article{TrefethenWeideman2014,
  author    = {Trefethen, Lloyd N. and Weideman, J. A. C.},
  title     = {The Exponentially Convergent Trapezoidal Rule},
  journal   = {SIAM Review},
  year      = {2014},
  volume    = {56},
  number    = {3},
  pages     = {385--458},
  doi       = {10.1137/130932132}
}

@article{BaeumerMeerschaert2001,
  author    = {Baeumer, Boris and Meerschaert, Mark M.},
  title     = {Stochastic solutions for fractional Cauchy problems},
  journal   = {Fractional Calculus and Applied Analysis},
  year      = {2001},
  volume    = {4},
  number    = {4},
  pages     = {481--500}
}

@misc{MainardiMuraPagniniSurvey,
  author    = {Mainardi, Francesco and Mura, Antonio and Pagnini, Gianni},
  title     = {The M-Wright function in time-fractional diffusion processes: a tutorial survey},
  howpublished = {arXiv preprint},
  eprint    = {arXiv:1001.3477},
  year      = {2010},
  url       = {https://arxiv.org/abs/1001.3477}
}

@article{silver2018tuned,
  title={Tuned communicability metrics in networks. The case of alternative routes for urban traffic},
  author={Silver, Grant and Akbarzadeh, Meisam and Estrada, Ernesto},
  journal={Chaos, Solitons \& Fractals},
  volume={116},
  pages={402--413},
  year={2018},
  publisher={Elsevier}
}

@article{estrada2012communicability,
  title={The communicability distance in graphs},
  author={Estrada, Ernesto},
  journal={Linear Algebra and its Applications},
  volume={436},
  number={11},
  pages={4317--4328},
  year={2012},
  publisher={Elsevier}
}

@article{estrada2014hyperspherical,
  title={Hyperspherical embedding of graphs and networks in communicability spaces},
  author={Estrada, Ernesto and Sanchez-Lirola, MG and De La Pe{\~n}a, Jos{\'e} Antonio},
  journal={Discrete Applied Mathematics},
  volume={176},
  pages={53--77},
  year={2014},
  publisher={Elsevier}
}

@article{coifman2006diffusion,
  title={Diffusion maps},
  author={Coifman, Ronald R and Lafon, St{\'e}phane},
  journal={Applied and computational harmonic analysis},
  volume={21},
  number={1},
  pages={5--30},
  year={2006},
  publisher={Elsevier}
}

@article{coifman2005geometric,
  title={Geometric diffusions as a tool for harmonic analysis and structure definition of data: Diffusion maps},
  author={Coifman, Ronald R and Lafon, Stephane and Lee, Ann B and Maggioni, Mauro and Nadler, Boaz and Warner, Frederick and Zucker, Steven W},
  journal={Proceedings of the national academy of sciences},
  volume={102},
  number={21},
  pages={7426--7431},
  year={2005},
  publisher={National Academy of Sciences}
}

@article{Goychuk2018,
	author = {Goychuk, Igor},
	title  = {Viscoelastic subdiffusion in a random Gaussian environment},
	journal  = {Phys. Chem. Chem. Phys.},
	year  = {2018},
	volume  = {20},
	issue  = {37},
	pages  = {24140-24155},
	publisher  = {The Royal Society of Chemistry},
	doi  = {10.1039/C8CP05238G},
	url  = {http://dx.doi.org/10.1039/C8CP05238G},
}

@article{LiZhouGao2018,
	author = {Lingfei, Li 
		and Xiuxiang,Zhou 
		and Hang Gao},
	title = {The stability and exponential stabilization of the heat equation with memory},
	journal = {Journal of Mathematical Analysis and Applications},
	volume = {466},
	number = {1},
	pages = {199-214},
	year = {2018},
	issn = {0022-247X},
	doi = {https://doi.org/10.1016/j.jmaa.2018.05.078},
	url = {https://www.sciencedirect.com/science/article/pii/S0022247X18304797},
}

@article{Ponce2021,
	author = {Ponce, Rodrigo},
	title = {Discrete Subdiffusion Equations with Memory},
	year = {2021},
	issue_date = {Dec 2021},
	publisher = {Springer-Verlag},
	address = {Berlin, Heidelberg},
	volume = {84},
	number = {3},
	issn = {0095-4616},
	url = {https://doi.org/10.1007/s00245-021-09753-z},
	doi = {10.1007/s00245-021-09753-z},
	journal = {Appl. Math. Optim.},
	month = dec,
	pages = {3475?3497},
	numpages = {23},
}

@article{combinido2012crowding,
  title={Crowding effects in vehicular traffic},
  author={Combinido, Jay Samuel L and Lim, May T},
  journal={Plos one},
  volume={7},
  number={11},
  pages={e48151},
  year={2012},
  publisher={Public Library of Science San Francisco, USA}
}

@article{foroozani2019anomalous,
  title={Anomalous information diffusion in social networks: Twitter and Digg},
  author={Foroozani, Ahmad and Ebrahimi, Morteza},
  journal={Expert Systems with Applications},
  volume={134},
  pages={249--266},
  year={2019},
  publisher={Elsevier}
}

@article{kim2024cover,
  title={Cover times of many diffusive or subdiffusive searchers},
  author={Kim, Hyunjoong and Lawley, Sean D},
  journal={SIAM Journal on Applied Mathematics},
  volume={84},
  number={2},
  pages={602--620},
  year={2024},
  publisher={SIAM}
}

@article{jung2024memory,
  title={Memory effects of transcription regulator- DNA interactions in bacteria},
  author={Jung, Won and Chen, Tai-Yen and Santiago, Ace George and Chen, Peng},
  journal={Proceedings of the National Academy of Sciences},
  volume={121},
  number={41},
  pages={e2407647121},
  year={2024},
  publisher={National Academy of Sciences}
}

@article{abadias2020fractional,
  title={Fractional-order susceptible-infected model: definition and applications to the study of COVID-19 main protease},
  author={Abadias, Luciano and Estrada-Rodriguez, Gissell and Estrada, Ernesto},
  journal={Fractional Calculus and Applied Analysis},
  volume={23},
  number={3},
  pages={635--655},
  year={2020},
  publisher={De Gruyter}
}

@article{d2025fractional,
  title={Fractional derivative in continuous-time Markov processes and applications to epidemics in networks},
  author={D'Alessandro, Matteo and Van Mieghem, Piet},
  journal={Physical Review Research},
  volume={7},
  number={1},
  pages={013017},
  year={2025},
  publisher={APS}
}

@article{estrada2020fractional,
  title={Fractional diffusion on the human proteome as an alternative to the multi-organ damage of SARS-CoV-2},
  author={Estrada, Ernesto},
  journal={Chaos: An Interdisciplinary Journal of Nonlinear Science},
  volume={30},
  number={8},
  year={2020},
  publisher={AIP Publishing}
}

@incollection{hoffmann2013random,
  title={Random walks on stochastic temporal networks},
  author={Hoffmann, Till and Porter, Mason A and Lambiotte, Renaud},
  booktitle={Temporal Networks},
  pages={295--313},
  year={2013},
  publisher={Springer}
}

@article{scholtes2014causality,
  title={Causality-driven slow-down and speed-up of diffusion in non-Markovian temporal networks},
  author={Scholtes, Ingo and Wider, Nicolas and Pfitzner, Ren{\'e} and Garas, Antonios and Tessone, Claudio J and Schweitzer, Frank},
  journal={Nature communications},
  volume={5},
  number={1},
  pages={5024},
  year={2014},
  publisher={Nature Publishing Group UK London}
}

@article{lambiotte2015effect,
  title={Effect of memory on the dynamics of random walks on networks},
  author={Lambiotte, Renaud and Salnikov, Vsevolod and Rosvall, Martin},
  journal={Journal of complex networks},
  volume={3},
  number={2},
  pages={177--188},
  year={2015},
  publisher={Oxford University Press}
}

@article{sokolov2012models,
  title={Models of anomalous diffusion in crowded environments},
  author={Sokolov, Igor M},
  journal={Soft Matter},
  volume={8},
  number={35},
  pages={9043--9052},
  year={2012},
  publisher={Royal Society of Chemistry}
}

@article{kosztolowicz2005measuring,
  title={Measuring subdiffusion parameters},
  author={Koszto{\l}owicz, T and Dworecki, K and Mr{\'o}wczy{\'n}ski, St},
  journal={Physical Review E--Statistical, Nonlinear, and Soft Matter Physics},
  volume={71},
  number={4},
  pages={041105},
  year={2005},
  publisher={APS}
}

@article{gallos2007scaling,
  title={Scaling theory of transport in complex biological networks},
  author={Gallos, Lazaros K and Song, Chaoming and Havlin, Shlomo and Makse, Hern{\'a}n A},
  journal={Proceedings of the National Academy of Sciences},
  volume={104},
  number={19},
  pages={7746--7751},
  year={2007},
  publisher={National Academy of Sciences}
}

@article{kepten2015guidelines,
  title={Guidelines for the fitting of anomalous diffusion mean square displacement graphs from single particle tracking experiments},
  author={Kepten, Eldad and Weron, Aleksander and Sikora, Grzegorz and Burnecki, Krzysztof and Garini, Yuval},
  journal={PLoS One},
  volume={10},
  number={2},
  pages={e0117722},
  year={2015},
  publisher={Public Library of Science San Francisco, CA USA}
}

@article{sandev2018generalized,
  title={Generalized diffusion-wave equation with memory kernel},
  author={Sandev, Trifce and Tomovski, Zivorad and Dubbeldam, Johan LA and Chechkin, Aleksei},
  journal={Journal of Physics A: Mathematical and Theoretical},
  volume={52},
  number={1},
  pages={015201},
  year={2018},
  publisher={IOP Publishing}
}

@article{toan2022nonclassical,
  title={The nonclassical diffusion equations with time-dependent memory kernels and a new class of nonlinearities},
  author={Toan, Nguyen Duong and others},
  journal={Glasgow Mathematical Journal},
  volume={64},
  number={3},
  pages={716--733},
  year={2022},
  publisher={Cambridge University Press}
}

@article{trimper2004memory,
  title={Memory-controlled diffusion},
  author={Trimper, Steffen and Zabrocki, Knud and Schulz, Michael},
  journal={Physical Review E--Statistical, Nonlinear, and Soft Matter Physics},
  volume={70},
  number={5},
  pages={056133},
  year={2004},
  publisher={APS}
}

@article{saif2023inverse,
  title={An inverse problem for a two-dimensional diffusion equation with arbitrary memory kernel},
  author={Saif, Summaya and Malik, Salman},
  journal={Mathematical Methods in the Applied Sciences},
  volume={46},
  number={9},
  pages={11007--11020},
  year={2023},
  publisher={Wiley Online Library}
}

@article{metzler2000random,
  title={The random walk's guide to anomalous diffusion: a fractional dynamics approach},
  author={Metzler, Ralf and Klafter, Joseph},
  journal={Physics reports},
  volume={339},
  number={1},
  pages={1--77},
  year={2000},
  publisher={Elsevier}
}

@article{bouchaud1990anomalous,
  title={Anomalous diffusion in disordered media: statistical mechanisms, models and physical applications},
  author={Bouchaud, Jean-Philippe and Georges, Antoine},
  journal={Physics reports},
  volume={195},
  number={4-5},
  pages={127--293},
  year={1990},
  publisher={Elsevier}
}

@book{evangelista2018fractional,
  title={Fractional diffusion equations and anomalous diffusion},
  author={Evangelista, Luiz Roberto and Lenzi, Ervin Kaminski},
  year={2018},
  publisher={Cambridge University Press}
}

@article{sokolov2002fractional,
  title={Fractional kinetics},
  author={Sokolov, Igor M and Klafter, Joseph and Blumen, Alexander},
  journal={Physics Today},
  volume={55},
  number={11},
  pages={48--54},
  year={2002},
  publisher={AIP Publishing}
}

@article{polanowski2014simulation,
  title={Simulation of diffusion in a crowded environment},
  author={Polanowski, Piotr and Sikorski, Andrzej},
  journal={Soft Matter},
  volume={10},
  number={20},
  pages={3597--3607},
  year={2014},
  publisher={Royal Society of Chemistry}
}

@article{fanelli2010diffusion,
  title={Diffusion in a crowded environment},
  author={Fanelli, Duccio and McKane, Alan J},
  journal={Physical Review E--Statistical, Nonlinear, and Soft Matter Physics},
  volume={82},
  number={2},
  pages={021113},
  year={2010},
  publisher={APS}
}

@article{weiss2004anomalous,
  title={Anomalous subdiffusion is a measure for cytoplasmic crowding in living cells},
  author={Weiss, Matthias and Elsner, Markus and Kartberg, Fredrik and Nilsson, Tommy},
  journal={Biophysical journal},
  volume={87},
  number={5},
  pages={3518--3524},
  year={2004},
  publisher={Elsevier}
}

@article{meinecke2017multiscale,
  title={Multiscale modeling of diffusion in a crowded environment},
  author={Meinecke, Lina},
  journal={Bulletin of mathematical biology},
  volume={79},
  number={11},
  pages={2672--2695},
  year={2017},
  publisher={Springer}
}

@article{goychuk2021fingerprints,
  title={Fingerprints of viscoelastic subdiffusion in random environments: Revisiting some experimental data and their interpretations},
  author={Goychuk, Igor and P{\"o}schel, Thorsten},
  journal={Physical Review E},
  volume={104},
  number={3},
  pages={034125},
  year={2021},
  publisher={American Physical Society}
}

@article{chauhan2024quantifying,
  title={Quantifying macrostructures in viscoelastic sub-diffusive flows},
  author={Chauhan, Tanisha and Kalyanaraman, Kaushik and Sircar, Sarthok},
  journal={Journal of Mathematical Physics},
  volume={65},
  number={7},
  year={2024},
  publisher={AIP Publishing}
}

@article{saxton2001anomalous,
  title={Anomalous subdiffusion in fluorescence photobleaching recovery: a Monte Carlo study},
  author={Saxton, Michael J},
  journal={Biophysical journal},
  volume={81},
  number={4},
  pages={2226--2240},
  year={2001},
  publisher={Elsevier}
}

@article{lim2002self,
  title={Self-similar Gaussian processes for modeling anomalous diffusion},
  author={Lim, Soonchieh C and Muniandy, Sithi Vinayakam},
  journal={Physical Review E},
  volume={66},
  number={2},
  pages={021114},
  year={2002},
  publisher={APS}
}

@book{estrada2012structure,
  title={The structure of complex networks: theory and applications},
  author={Estrada, Ernesto},
  year={2012},
  publisher={Oxford University Press}
}

@article{nicolaides2010anomalous,
  title={Anomalous physical transport in complex networks},
  author={Nicolaides, Christos and Cueto-Felgueroso, Luis and Juanes, Ruben},
  journal={Physical Review E--Statistical, Nonlinear, and Soft Matter Physics},
  volume={82},
  number={5},
  pages={055101},
  year={2010},
  publisher={APS}
}

@article{medina2022nontrivial,
  title={Nontrivial and anomalous transport on weighted complex networks},
  author={Medina, Pablo and Carrasco, Sebasti{\'a}n and Correa-Burrows, Paulina and Rogan, Jos{\'e} and Valdivia, Juan Alejandro},
  journal={Communications in Nonlinear Science and Numerical Simulation},
  volume={114},
  pages={106684},
  year={2022},
  publisher={Elsevier}
}

@article{lopez2016overview,
  title={An overview of diffusion in complex networks},
  author={L{\'o}pez-Pintado, Dunia},
  journal={Complex Networks and Dynamics: Social and Economic Interactions},
  pages={27--48},
  year={2016},
  publisher={Springer}
}

@article{masuda2017random,
  title={Random walks and diffusion on networks},
  author={Masuda, Naoki and Porter, Mason A and Lambiotte, Renaud},
  journal={Physics reports},
  volume={716},
  pages={1--58},
  year={2017},
  publisher={Elsevier}
}

@article{basak2019understanding,
  title={Understanding biochemical processes in the presence of sub-diffusive behavior of biomolecules in solution and living cells},
  author={Basak, Sujit and Sengupta, Sombuddha and Chattopadhyay, Krishnananda},
  journal={Biophysical Reviews},
  volume={11},
  number={6},
  pages={851--872},
  year={2019},
  publisher={Springer}
}

@article{grimaldo2019dynamics,
  title={Dynamics of proteins in solution},
  author={Grimaldo, Marco and Roosen--Runge, Felix and Zhang, Fajun and Schreiber, Frank and Seydel, Tilo},
  journal={Quarterly Reviews of Biophysics},
  volume={52},
  pages={e7},
  year={2019},
  publisher={Cambridge University Press}
}

@article{gupta2016protein,
  title={Protein entrapment in polymeric mesh: Diffusion in crowded environment with fast process on short scales},
  author={Gupta, Sudipta and Biehl, Ralf and Sill, Clemens and Allgaier, Jurgen and Sharp, Melissa and Ohl, Michael and Richter, Dieter},
  journal={Macromolecules},
  volume={49},
  number={5},
  pages={1941--1949},
  year={2016},
  publisher={ACS Publications}
}

@article{mainardi2010m,
  title={The M-Wright Function in Time-Fractional Diffusion Processes: A Tutorial Survey},
  author={Mainardi, Francesco and Mura, Antonio and Pagnini, Gianni},
  journal={International Journal of Differential Equations},
  volume={2010},
  number={1},
  pages={104505},
  year={2010},
  publisher={Wiley Online Library}
}

@article{sun2011convergence,
  title={Convergence speed of a fractional order consensus algorithm over undirected scale-free networks},
  author={Sun, Wei and Li, Yan and Li, Changpin and Chen, YangQuan},
  journal={Asian Journal of Control},
  volume={13},
  number={6},
  pages={936--946},
  year={2011},
  publisher={Wiley Online Library}
}

@incollection{lu2012consensus,
  title={Consensus of networked multi-agent systems with delays and fractional-order dynamics},
  author={Lu, Jianquan and Shen, Jun and Cao, Jinde and Kurths, J{\"u}rgen},
  booktitle={Consensus and synchronization in complex networks},
  pages={69--110},
  year={2012},
  publisher={Springer}
}

@article{cao2009distributed,
  title={Distributed coordination of networked fractional-order systems},
  author={Cao, Yongcan and Li, Yan and Ren, Wei and Chen, YangQuan},
  journal={IEEE Transactions on Systems, Man, and Cybernetics, Part B (Cybernetics)},
  volume={40},
  number={2},
  pages={362--370},
  year={2009},
  publisher={IEEE}
}

@article{huang2024distributed,
  title={Distributed Consensus Tracking of Incommensurate Heterogeneous Fractional-Order Multi-Agent Systems Based on Vector Lyapunov Function Method},
  author={Huang, Conggui and Wang, Fei},
  journal={Fractal and Fractional},
  volume={8},
  number={10},
  pages={575},
  year={2024},
  publisher={MDPI}
}

@article{sun2024group,
  title={Group consensus for fractional-order heterogeneous multi-agent systems under cooperation-competition networks with time delays},
  author={Sun, Fenglan and Han, Yunpeng and Zhu, Wei and Kurths, J{\"u}rgen},
  journal={Communications in Nonlinear Science and Numerical Simulation},
  volume={133},
  pages={107951},
  year={2024},
  publisher={Elsevier}
}

@article{yan2024consensus,
  title={Consensus of fractional-order multi-agent systems via observer-based boundary control},
  author={Yan, Xu and Li, Kun and Yang, Chengdong and Zhuang, Jiaojiao and Cao, Jinde},
  journal={IEEE Transactions on Network Science and Engineering},
  volume={11},
  number={4},
  pages={3370--3382},
  year={2024},
  publisher={IEEE}
}

@article{caputo1966linear,
  title={Linear models of dissipation whose Q is almost frequency independent},
  author={Caputo, Michele},
  journal={Annals of Geophysics},
  volume={19},
  number={4},
  pages={383--393},
  year={1966}
}

@article{odibat2006approximations,
  title={Approximations of fractional integrals and Caputo fractional derivatives},
  author={Odibat, Zaid},
  journal={Applied Mathematics and Computation},
  volume={178},
  number={2},
  pages={527--533},
  year={2006},
  publisher={Elsevier}
}

@article{gurtin1968general,
  title={A general theory of heat conduction with finite wave speeds},
  author={Gurtin, Morton E and Pipkin, Allen C},
  journal={Archive for Rational Mechanics and Analysis},
  volume={31},
  number={2},
  pages={113--126},
  year={1968},
  publisher={Springer}
}

@article{kivela2014multilayer,
  title={Multilayer networks},
  author={Kivel{\"a}, Mikko and Arenas, Alex and Barthelemy, Marc and Gleeson, James P and Moreno, Yamir and Porter, Mason A},
  journal={Journal of complex networks},
  volume={2},
  number={3},
  pages={203--271},
  year={2014},
  publisher={Oxford University Press}
}

@article{boccaletti2014structure,
  title={The structure and dynamics of multilayer networks},
  author={Boccaletti, Stefano and Bianconi, Ginestra and Criado, Regino and Del Genio, Charo I and G{\'o}mez-Gardenes, Jes{\'u}s and Romance, Miguel and Sendina-Nadal, Irene and Wang, Zhen and Zanin, Massimiliano},
  journal={Physics reports},
  volume={544},
  number={1},
  pages={1--122},
  year={2014},
  publisher={Elsevier}
}

@article{gomez2013diffusion,
  title={Diffusion dynamics on multiplex networks},
  author={Gomez, Sergio and Diaz-Guilera, Albert and Gomez-Gardenes, Jesus and Perez-Vicente, Conrad J and Moreno, Yamir and Arenas, Alex},
  journal={Physical review letters},
  volume={110},
  number={2},
  pages={028701},
  year={2013},
  publisher={APS}
}

@article{markvorsen2008minimal,
  title={Minimal webs in Riemannian manifolds},
  author={Markvorsen, Steen},
  journal={Geometriae Dedicata},
  volume={133},
  number={1},
  pages={7--34},
  year={2008},
  publisher={Springer}
}

@book{bridson2013metric,
  title={Metric spaces of non-positive curvature},
  author={Bridson, Martin R and Haefliger, Andr{\'e}},
  volume={319},
  year={2013},
  publisher={Springer Science \& Business Media}
}

@article{pang2019fpinns,
  title={fPINNs: Fractional physics-informed neural networks},
  author={Pang, Guofei and Lu, Lu and Karniadakis, George Em},
  journal={SIAM Journal on Scientific Computing},
  volume={41},
  number={4},
  pages={A2603--A2626},
  year={2019},
  publisher={SIAM}
}

@article{joshi2023survey,
  title={A survey of fractional calculus applications in artificial neural networks},
  author={Joshi, Manisha and Bhosale, Savita and Vyawahare, Vishwesh A},
  journal={Artificial Intelligence Review},
  volume={56},
  number={11},
  pages={13897--13950},
  year={2023},
  publisher={Springer}
}

@article{MeerschaertNaneVellaisamy2011,
  author  = {Meerschaert, Mark M. and Nane, Erkan and Vellaisamy, P.},
  title   = {The fractional Poisson process and the inverse stable subordinator},
  journal = {Electronic Journal of Probability},
  volume  = {16},
  pages   = {1600--1620},
  year    = {2011}
}

\end{document}